 \newcommand{%
     \scalebox{}{\input{}}  
}[2]{%
     \scalebox{#1}{\input{#2}}  
}
\theoremstyle{plain}
\newtheorem{theorem}{Theorem}
\newtheorem{corollary}{Corollary}
\newtheorem{lemma}{Lemma}
\newtheorem{proposition}{Proposition}
\theoremstyle{definition}
\theoremstyle{remark}
\newtheorem{remark}{Remark}
\newcommand{\bs}{\boldsymbol}
\newcommand{\mbf}{\mathbf}
\newcommand{\mbb}{\mathbb}
\newcommand{\mcl}{\mathcal}
\newcommand{\mrm}{\mathrm} 
\def\alphabet{abcdefghijklmnopqrstuvwxyzABCDEFGHIJKLMNOPQRST123456789}
\renewcommand{\vec}[1]{
\IfSubStr{\alphabet}{#1}{
\ensuremath{\mathbf{\MakeLowercase{#1}}}
}{
\ensuremath{\boldsymbol{\MakeLowercase{#1}}}
}
}
\newcommand{\mat}[1]{
\IfSubStr{\alphabet}{#1}{
\ensuremath{\mathbf{\MakeUppercase{#1}}}
}{
\ensuremath{\boldsymbol{\MakeUppercase{#1}}}
}
}
\def\R{\mathbb R}
\def\N{\mathbb N}
\newcommand*{\inner}[2]{\left\langle#1,#2\right\rangle}
\newcommand*{\norm}[1]{\left\|#1\right\|}
\newcommand*{\paran}[1]{\left(#1\right)}
\newcommand*{\card}[1]{\left|#1\right|}
\def\defeq{:=}
\def\d{\mathrm{d}} 
\def\S{\mathbb{S}} 
\def\SO{\mathrm{SO}} 
\newcommand*{\Y}[2]{\mathrm{Y}_{#1}^{#2}} 
\newcommand*{\D}[3]{\mathrm{D}_{#1}^{#2,#3}} 
\newcommand*{\Wd}[3]{\mathrm{d}_{#1}^{#2,#3}} 
\def\i{\mathrm{i}}
\acrodef{CS}{Compressed Sensing}
\acrodef{BP}{Basis Pursuit}
\acrodef{OMP}{Orthogonal Matching Pursuit}
\acrodef{AMP}{Approximate Message Passing}
\acrodef{RIP}{Restricted Isometry Property}
\acrodef{BOS}{Bounded Orthonormal System}
\acrodef{IGRF}{International Geomagnetic Reference Field}
\begin{document}

\title{
Tight bounds on  the mutual coherence of sensing matrices for Wigner D-functions on regular grids 
}

\author{Arya~Bangun,
        Arash~Behboodi,
        and~Rudolf~Mathar,
\thanks{Institute for Theoretical Information Technology (TI), RWTH Aachen University.}}

\maketitle
\begin{abstract}

Many practical sampling patterns for function approximation on the rotation group utilizes regular samples on the parameter axes.  In this paper, we relate the mutual coherence analysis for sensing matrices that correspond to a class of regular patterns to angular momentum analysis in quantum mechanics and provide simple lower bounds for it. The products of Wigner d-functions, which appear in coherence analysis, arise in angular momentum analysis in quantum mechanics.
We first represent the product as a linear combination of a single Wigner d-function and angular momentum coefficients, otherwise known as the Wigner 3j symbols. 
Using combinatorial identities, we show that under certain conditions on the bandwidth and number of samples, 
the inner product of the columns of the sensing matrix at zero orders, which is equal to the inner product of two Legendre polynomials, dominates the mutual coherence term and fixes a lower bound for it. In other words, for a class of regular sampling patterns, we provide a lower bound for the inner product of the columns of the sensing matrix that can be analytically computed. We verify  numerically  our theoretical results and show that the lower bound for the mutual coherence is larger than Welch bound. Besides, we provide algorithms that can achieve the lower bound for spherical harmonics. 
\end{abstract}


\section{Introduction}
In many applications, the goal is to recover a function defined on a group, say on the sphere $\S^2$ and the rotation group $\SO(3)$, from only a few samples \cite{bangun_sensing_2020,bangun_coherence_2018,culotta-lopez_compressed_2018, burq_weighted_2012, rauhut_sparse_2011}. This problem can be seen as a linear inverse problem with structured sensing matrices that contain samples of spherical harmonics and Wigner D-functions. 

In the area of compressed sensing and sparse signal recovery, there are recovery guarantee results for random sampling patterns \cite{rauhut_sparse_2011, burq_weighted_2012} based on proving \ac{RIP} of the corresponding sensing matrix. Regular deterministic sampling patterns are, however, more prevalent in practice due to their easier deployment \cite{bangun_coherence_2018, bangun_sensing_2020}. \ac{RIP} based results cannot be used for analyzing deterministic sensing matrices, because it has been shown in \cite{tillmann2014computational, bandeira2013certifying} that verifying \ac{RIP} is computationally hard. For deterministic sampling patterns, the mutual coherence is widely used as performance indicator, which measures the correlation between different columns in the sensing matrix. For the case of sparse recovery on $\S^2$ and $\SO(3)$, the product of two Wigner D-functions and the product of two spherical harmonics appear in the coherence analysis. 

The product of orthogonal polynomials, i.e., Legendre and Jacobi polynomials, is sought-after in mathematics, and it is related to hypergeometric functions. Several works to obtain a closed-form and simplified version of these products have been presented, for instance in   \cite{dougall_product_1953,adams_iii_1878,gasper_linearization_1970,george_gasper_linearization_1970 ,askey_linearization_1971}. In those articles, the authors attempt to derive a compact formulation of the product and represent it as a linear combination of a single orthogonal polynomial with some coefficients. This representation reveals interesting properties that can also be applied in quantum mechanics \cite{edmonds_angular_1974, rose_elementary_1995}, geophysics \cite{simons_spatiospectral_2006}, machine learning \cite{kondor_clebsch_2018}, and low-coherence sensing matrices \cite{bangun_sensing_2020,bangun_coherence_2018,culotta-lopez_compressed_2018}.  In quantum mechanics, these coefficients are used to calculate
the addition of angular momenta, i.e., the interaction between
two charged particles, and such coefficients are called the
Clebsch-Gordan coefficients or Wigner 3j symbols. Specifically, these coefficients appear when we want to determine
the product of Wigner D-functions
and spherical harmonics.
Since Jacobi and Legendre polynomials are specifically used
to express the Wigner D-functions and spherical harmonics, the product of those polynomials in terms of Wigner 3j symbols is also commonly used in angular momentum literature \cite{edmonds_angular_1974, rose_elementary_1995}. The product of spherical harmonics also emerges in the
spatiospectral concentration or Slepian's concentration problem on the sphere, as discussed in \cite{simons_spatiospectral_2006}. In this case, the goal is to maximize the concentration of spectrum frequencies, i.e., spherical harmonic coefficients, given a certain area on the spherical surface. The problem is similar to finding relevant eigenvalues from a matrix that consists of the product of spherical harmonics. As a consequence, Wigner 3j symbols appear as a tool to analyze the problem. In the area of machine learning, Wigner D-functions are used for analyzing group transformations of the input to neural networks and to implement equivariant architectures for rotations. The authors in \cite{kondor_clebsch_2018} develop \textit{Clebsch-Gordan nets} to generalize and to improve the performance of spherical convolutional neural networks for recognizing the rotation of spherical images, 3D shape, as well as predicting energy of the atom. The contribution of this article includes the implementation of Wigner D-functions to perform the transformation of a signal in the Fourier domain and to tailor a representation of the product in terms of Clebsch-Gordan coefficients that will primarily support theoretical analysis of the networks. 

In this work, we are interested in coherence analysis and sparse recovery tasks.  
It has been shown in  \cite{bangun_sensing_2020} that a wide class of modular symmetric regular sampling patterns, like equiangular sampling, yield highly coherent sensing matrices and thereby perform poorly for signal recovery tasks. Besides, the coherence was shown to be affected by the choice of elevation sampling pattern independent of azimuth and polarization sampling patterns. It was numerically shown that for regular sampling points on the elevation for Wigner D-functions and spherical harmonics, the mutual coherence is lower bounded by the inner product of columns with zero orders and two largest degrees\footnote{
To simplify the presentation, the order and degree of a column refers to the order and degree of the respective Wigner D-functions or spherical harmonics.
}, which are then equal to Legendre polynomials. This bound is not contrived because one can show that this bound is achievable by optimizing azimuth angle $\phi \in [0, 2\pi)$. Consequently, the resulting deterministic sampling points can be implemented into a real-system to carry out measurements on the spherical surface, as discussed in \cite{culotta-lopez_compressed_2018,culotta-lopez_practical_2019}. In this article, we confirm mathematically the numerical findings of \cite{bangun_sensing_2020}. Our proof relies on using results for angular momentum analysis in quantum mechanics and properties of Wigner 3j symbols. To the best of our knowledge, this work is the first to provide the coherence analysis of a sensing matrix using the tools from angular momentum in quantum mechanics.

\subsection{Related Works}
The construction of sensing matrices from a set of orthogonal polynomials is widely studied in the area of compressed sensing. For instance, in \cite{rauhut_sparse_2012} the authors show that the sensing matrix construction from random samples of Legendre polynomials with respect to Chebyshev measure fulfils the \ac{RIP} condition and thus performs a robust and stable recovery guarantee to reconstruct sparse functions in terms of Legendre polynomials by using $\ell_1$-minimization algorithm. The extensions for random samples spherical harmonics and Wigner D-functions are discussed in \cite{rauhut_sparse_2011,burq_weighted_2012,rauhut_interpolation_2016,bangun_sensing_2020}. The key idea in those articles bears a strong resemblance to the technique discussed in Legendre polynomials' case, which is based on carefully choosing random samples with respect to different probability measures and preconditioning techniques to keep the polynomials uniformly bounded. Despite the recovery guarantees with regard to the minimum number of samples, it has been discussed in \cite{hofmann_minimum_2019} that these theoretical results is seemingly too pessimistic. Practically, one can consider a smaller number of samples and still achieve a very good reconstruction by using $\ell_1$-minimization algorithm. Therefore, a gap between theoretical and practical settings exists. Another concern in antenna measurement system is designing a smooth trajectory for robotic arms to acquire electromagnetic fields, which causes a practical obstacle in using random samples, as mentioned in \cite{culotta-lopez_compressed_2018,culotta-lopez_practical_2019}. 

One of the most prevalent application of orthogonal polynomials is in the area of interpolation, where those polynomials are used to approximate a function within a certain interval. Recently, the  $\ell_1$-minimization-based technique is tailored to interpolate a function which has sparse representation in terms of Legendre and spherical harmonic coefficients, as discussed in \cite{rauhut_interpolation_2016}. In this case, random samples of Legendre and spherical harmonics are used to construct sensing matrices. Similar to the results in compressed sensing, the \ac{RIP} plays a pivotal role in showing that a particular number of samples is required to achieve certain error approximation. 

Another construction of sensing matrices from some orthogonal polynomials related to the sparse polynomial chaos expansion is investigated in  \cite{hampton_compressive_2015}. In order to optimize the sensing matrices, the authors adopt the minimization of coherence sensing matrices from several random samples of Legendre and Hermite polynomials. Using Monte Carlo Markov Chain (MCMC), the authors also derive the coherence of optimal-based random sampling points to employ $\ell_1$-minimization algorithm. Additionally, they also derived the coherence bound for a matrix constructed from those polynomials.

In contrast to all aforementioned results which rely on random samples from certain probability measures, in this work, we derive the coherence bound for deterministic sampling points and utilize properties of Wigner d-functions and their products and their representation in terms of Wigner 3j symbols. 


\subsection{Summary of Contributions}

In \cite{bangun_sensing_2020}, it was conjectured that the lower-bound on the mutual coherence is tight. In other words, the inner product of columns with equal orders is dominated by the inner product of two columns with zero order and highest degree. In this paper, we prove this conjecture and derive a set of related corollaries.  
The main contributions and some of the interesting conclusions of our paper are as follows:
\begin{itemize}
\item 
We show that  the product of Wigner D-functions can be written as a linear combination of single Legendre polynomials and Wigner 3j symbols. For equispaced sampling points on the elevation and using symmetry of Legendre polynomials, we show in Section \ref{sec2: prod} that only even degree Legendre polynomials contribute in the analysis, which in turn simplifies the problem formulation.

\item In Section \ref{sec3: sum_leg_express} and \ref{sec4: Ineq_3j_symbols}, we provide various inequalities and identities for sum of Legendre polynomials and Wigner 3j symbols. We establish monotonic properties of these terms as a function of degree and orders of Wigner D-functions. These results establish a certain ordering between inner products of the columns of the sensing matrix. Particularly we show that, under some conditions, the inner products have a specific order as a function of degrees and orders. As a corollary, we also present that the inner product of columns of equal orders are decreasing with orders and increasing with degree. The result can be used to obtain a  lower bound on
the mutual coherence. Proofs of main theorem, supporting lemmas, and propositions are given in Section \ref{maintheoremproof} and \ref{sec:proof_of_lemmas}

\item We numerically verify our results and show that our bound is larger than the Welch bound. Therefore, the desideratum of regular sampling pattern design should be this lower bound rather than the Welch bound. We also extend the sampling pattern design algorithm of \cite{bangun_sensing_2020} to a gradient-descent based algorithm and show that it can achieve the lower bound for spherical harmonics\footnote{
The codes used in this paper are available below:
\begin{center}
 {\fontfamily{lmss}\selectfont \url{github.com/bangunarya/boundwigner}}
\end{center}
}.
\end{itemize}

\subsection{Notation}
The vectors are denoted by bold small-cap letters $\mathbf{a},\mathbf{b},\dots$. Define $\N\defeq\{1,2,\dots\}$ and $\N_0\defeq \N\cup\{0\}$. The elevation, azimuth, and polarization angle are denoted by $\theta$, $\phi$, and $\chi$, respectively. The set $\{1, ..., m\}$ is denoted by $[m]$. $\overline{x}$ is the conjugate of $x$.

\section{Definitions and Problem Formulation} \label{sec2}
\subsection{Wigner D-functions}
The rotation group, denoted by $\mrm{SO}(3)$, consists of all possible rotations of the three-dimensional Euclidean space. Square integrable functions defined on the rotation group is the Hilbert space $L^2(\mrm{SO}(3))$ with the inner product of functions in this space $f,g \in L^2(\mrm{SO}(3))$ defined as
\[
\inner{f}g\defeq \int_{\SO(3)} f(\theta,\phi,\chi)\overline{g(\theta,\phi,\chi)}\d\nu(\theta,\phi,\chi).
\] 
Similar to the Fourier basis, which can be seen as the  eigenfunctions of Laplace operator on the unit circle, we can also derive eigenfunctions on the rotation group $\mathrm{SO(3)}$. These functions are called Wigner D-functions, sometimes also called generalized spherical harmonics. They are orthonormal basis for $L^2(\mrm{SO}(3))$. It can be written in terms of Euler angles $\theta\in[0,\pi]$, and $\phi , \chi\in[0,2\pi)$ as follows
\begin{equation}
\D l{k}{n}(\theta,\phi,\chi)= N_l e^{-\i k\phi} \mathrm{d}_l^{k,n}(\cos \theta)  e^{-\i n\chi} 
 \label{def:WigD}
\end{equation}
where $N_l=\sqrt{\frac{2l+1}{8\pi^2}}$ is the normalization factor to guarantee unit $L_2$-norm of Wigner D-functions. The function $\Wd l{k}{n}(\cos \theta)$ is the Wigner d-function of order $-l\leq k,n \leq 1$ and degree $l$ defined by
\begin{equation} \label{Wigner_d}
\Wd l{k}{n}(\cos \theta)\defeq \omega \sqrt{\gamma} \sin^{\xi} \bigg(\frac{\theta}{2}\bigg)\cos^{\lambda}\bigg(\frac{\theta}{2}\bigg) P_{\alpha}^{(\xi,\lambda)}(\cos \theta)
\end{equation}
where $\gamma=\frac{\alpha!(\alpha + \xi + \lambda)!}{(\alpha+\xi)!(\alpha+\lambda)!}$, $\xi=\card{k-n}$, $\lambda=\card{k+n}$, $\alpha=l-\big(\frac{\xi+\lambda}{2}\big)$ and 
\begin{equation*}
 \omega= \begin{cases} 
        1  & \text{if } n\geq k  \\
        (-1)^{n-k} & \text{if } n<k 
          \end{cases}.
\end{equation*}
The function $P_{\alpha}^{(\xi,\lambda)}$ is the Jacobi polynomial. 

Wigner D-functions are equal to complex spherical harmonics when the order $n$ is equal to zero, namely
\begin{equation}\label{sec2:eq gener_SH}
\D{l}{-k}{0}(\theta,\phi,0) = (-1)^k \sqrt{\frac{1}{2\pi}} \Y{l}{k}(\theta,\phi),
\end{equation}
where $\Y{l}{k}(\theta,\phi) \defeq  N^k_l P_l^{k}(\cos \theta) e^{\i k\phi}$. The term $N^k_l\defeq\sqrt{\frac{2l+1}{4\pi} \frac{(l-k)!}{(l+k)!}}$ is a normalization factor to ensure the function $\Y lk(\theta,\phi)$ has unit $L_2$-norm and $P_{l}^k (\cos \theta)$ is the associated Legendre polynomials. Wigner D-functions form an orthonormal basis with respect to the uniform measure on the rotation group, i.e., $\sin \theta \d\theta \mathrm{d}\phi \d\chi$.
\begin{equation} \label{orthogonality}
\small
\begin{aligned} 
 \int_{0}^{2\pi}\int_{0}^{2\pi}\int_{0}^{\pi}& \D{l}{k}{n}(\theta,\phi,\chi) \overline{\D {l'}{k'}{n'}(\theta,\phi,\chi)} \sin \theta \d\theta \d\phi \d\chi &=\delta_{ll'} \delta_{kk'} \delta_{nn'}.
\end{aligned}
\end{equation}
Similarly, Wigner d-functions are also orthogonal for different degree $l$ and fix order $k,n$:
\begin{equation} \label{inner_prod_Wigner}
\int_{0}^{\pi} \Wd{l}{k}{n}(\cos \theta) \Wd{l'}{k}{n}(\cos \theta) \sin \theta d\theta= \frac{2}{2l+1}\delta_{l l'}.
\end{equation}
The factor $\frac{2}{2l+1}$ can be used for normalization to get orthonormal pairs of functions.

As discussed earlier, the Wigner d-functions, or weighted Jacobi polynomials, are generalization of hypergeometric polynomials including associated Legendre polynomials, where the relationship between those polynomials can be expressed as  

\begin{equation} \label{legend_jacobi}
\Wd l{k}{0}(\cos \theta) = \sqrt{\frac{(l-k)!}{(l+k)!}}P_l^k(\cos \theta).
\end{equation}
The associated Legendre polynomials of degree $l$ and order $-l \leq k \leq l$ is given as
\begin{equation}
    P_l^k(\cos \theta) = (-1)^k  (\sin \theta)^{k} \frac{\d^k P_l(\cos \theta) }{\d(\cos \theta)^k},
\end{equation}
where $P_l(\cos \theta)$ is Legendre polynomial, and using Rodrigues formula it can be written as
\begin{equation} 
 P_l (\cos \theta) =   \frac{1}{2^l l!}\frac{\d^l}{\d\cos \theta^l} (\cos ^2\theta-1)^l.
\end{equation}
Similar to Wigner d-functions, Legendre polynomials are also orthonormal. The important properties of associated Legendre polynomials that are necessary in this article are also presented in Supplementary Material in Section \ref{App:sec2 hypergeo}.

\subsection{Problem Formulation}
In many signal processing applications, it is desirable to study properties of matrices constructed from samples of Wigner D-functions. A common example is reconstruction of band-limited functions on $\SO(3)$ from its samples. A function $g \in L^2(\SO(3))$ is band-limited with bandwidth $B$ if it is expressed in terms of Wigner D-functions of degree less than $B$:
\[
 g(\theta,\phi,\chi)=\sum_{l=0}^{B-1}\sum_{k=-l}^{l}\sum_{n=-l}^{l} \hat{g}_l^{k,n} \,\D l{k}{n}(\theta,\phi,\chi).
\]
Suppose that we take $m$ samples of this function at points $(\theta_p,\phi_p,\chi_p)$ for $p\in[m]$. The samples are put in the vector $\mbf y$, and the goal is to find the coefficients $\mbf g$. This is a linear inverse problem formulated by $\mbf y=\mbf A\mbf g$, with  $\mbf A$  given as
\begin{equation}
\small
\mat{A}= 
\begin{pmatrix}
  \D 0{0}{0}(\theta_1,\phi_1,\chi_1)&  \dots& \D {B-1}{B-1}{B-1}(\theta_1,\phi_1,\chi_1) \\
  \D 0{0}{0}(\theta_2,\phi_2,\chi_2) & \dots& \D {B-1}{B-1}{B-1}(\theta_2,\phi_2,\chi_2) \\
  \vdots &\ddots&\vdots\\
  \D 0{0}{0}(\theta_m,\phi_m,\chi_m) & \dots& \D {B-1}{B-1}{B-1}(\theta_m,\phi_m,\chi_m) 
\end{pmatrix}.
\label{ch3:eq sensingmatrix}
\end{equation}
 
For index column $q \in [N]$, we denotes the degree and orders of the respective Wigner D-function by $l(q)$, $k(q)$ and $n(q)$\footnote{The analytical description of these functions is not relevant for the rest of this paper. They are used mainly to ease the notation.}. The column dimension of this matrix is given by $N =\frac{B(2B-1)(2B+1)}{3}$. Using these functions, the elements of this matrix are given by
\begin{equation}
\mbf A=[{A}_{p,q}]_{p\in[m],q\in[N]}:\quad {A}_{p,q}=\mathrm{D}_{l{(q)}}^{k{(q)},n{(q)}}(\theta_p,\phi_p,\chi_p).
\label{ch3:eq sensMat}
\end{equation}
In compressed sensing, the sensing matrix $\mbf A$ with lower mutual coherence are more desirable for signal reconstruction \cite{foucart2013mathematical}. The mutual coherence, denoted by $\mu(\mbf A)$ is expressed as
\begin{equation} \label{coherence_Wigner}
\small
\begin{aligned}
& \underset{ 1 \leq r < q \leq N}{\text{max}}
 \,\,\card{\sum_{p=1}^{m}\frac{ \mathrm{D}_{l{(q)}}^{k{(q)},n{(q)}}(\theta_p,\phi_p,\chi_p)  \overline{\mathrm{D}_{l{(r)}}^{k{(r)},n{(r)}}(\theta_p,\phi_p,\chi_p)}}{\norm{\mathrm{D}_{l{(q)}}^{k{(q)},n{(q)}}(\boldsymbol\theta,\boldsymbol\phi,\boldsymbol\chi)}_2 \norm{\mathrm{D}_{l{(r)}}^{k{(r)},n{(r)}}(\boldsymbol\theta,\boldsymbol\phi,\boldsymbol\chi)}_2}},  
\end{aligned}
\end{equation}
where we adopt the following convention
\[\mathrm{D}_{l}^{k,n}(\boldsymbol\theta,\boldsymbol\phi,\boldsymbol\chi)\defeq
\begin{pmatrix}
 \mathrm{D}_{l}^{k,n}(\theta_1,\phi_1,\chi_1)\\
 \vdots\\
  \mathrm{D}_{l}^{k,n}(\theta_m,\phi_m,\chi_m)
\end{pmatrix}.
\]
For the rest of the paper, we focus mainly on the inner product between the samples. It can be numerically verified that the  $\ell_2$-norm of columns do not affect the  coherence value. {We will comment later on how these norms scale.}

Although the closed-form derivation of mutual coherence is in general difficult, the authors in \cite{bangun_sensing_2020} observed empirically that the mutual coherence of sensing matrices with equispaced sampling points on the elevation angle is indeed tightly bounded by a single term under certain assumptions. This is because the inner products of Wigner D-functions are ordered in a regular way as a function of their orders and degrees. In this paper, we provide theoretical supports for these observations. In other words, we provide simple analysis of mutual coherence for specific sampling patterns. Central to our analysis is a set of combinatorial identities about the sum and product of Wigner D-functions. We will focus on the equispaced sample on $\theta_p$ for $p\in[m]$, which are chosen such that 
\begin{equation}\label{ch5:eq equispaced}
  \cos \theta_{p} = \frac{2p-m-1}{m-1}.
\end{equation}
This means that \(-1=\cos \theta_{1}  <\cos \theta_2<\hdots< \cos \theta_{m-2} < \cos \theta_{m-1} < \cos \theta_{m}=1 \). There are multiple reasons for using this sampling pattern. First of all, this sampling pattern has been shown to be beneficial in spherical near-field antenna measurement \cite{bangun_coherence_2018,culotta-lopez_compressed_2018} where the robotic probe can acquire the electromagnetic field samples and move in the same distance. Second of all, this sampling pattern induces orthogonal columns in the sensing matrix between even and odd degree polynomials as discussed in \cite[Theorem 5]{bangun_sensing_2020}. Interestingly, fixing the sampling patterns on the elevation imposes a lower bound on the mutual coherence, which is tight in many cases. In this paper, we study the mutual coherence for this elevation sampling and arbitrary sampling patterns on $\phi$ and $\chi$.

\subsection{Product of Wigner D-functions} \label{sec2: prod}  
In the expression for  mutual coherence, product of Wigner D-functions appears constantly. This product appears also in the study of angular momentum in quantum mechanics and can be written as linear combination of a single Wigner D-functions with coefficients, called Wigner 3j symbols \cite{schulten_exact_1975,edmonds_angular_1974,rose_elementary_1995}. Using this representation, the discrete inner product of Wigner D-functions can be simplified as follows.
\begin{proposition}[\cite{bangun_sensing_2020}] \label{prop:CG_innerproduct}
Let $\D l{k}{n}(\theta ,\phi ,\chi )$ be the Wigner D-function with degree $l$ and orders $k,n$. Then we have:
\begin{equation}
\small
\begin{aligned}
 &\sum_{p=1}^m\overline{\D{l_1}{k_1}{n_1}(\theta_p,\phi_p,\chi_p)}  \D{l_2}{k_2}{n_2}(\theta_p,\phi_p,\chi_p)=   \\
 & C_{k_2,n_2}\sum_{\substack{\hat l=|l_2-l_1| }}^{l_1 + l_2} \sqrt{\frac{(2l_1 + 1)(2l_2 +1)(2\hat{l} + 1)}{8 \pi^2}}  \begin{pmatrix}
   l_1 & l_2 & \hat{l} \\
   -n_1 & n_2 & -\hat{n} 
  \end{pmatrix} \begin{pmatrix}
   l_1 & l_2 & \hat{l} \\
   -k_1 & k_2 & -\hat{k}
  \end{pmatrix} \sum_{p=1}^m \D {\hat{l}}{\hat{k}}{\hat{n}}(\theta_p,\phi_p,\chi_p) ,
\end{aligned}
\end{equation}
where $\hat{k} = k_2-k_1$, $\hat{n} = n_2 - n_1$ with the phase factor $ C_{k_2,n_2} = (-1)^{k_2 + n_2}$.
\end{proposition}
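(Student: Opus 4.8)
# Proof Proposal for Proposition~\ref{prop:CG_innerproduct}

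The plan is to reduce the discrete inner product of two Wigner D-functions to a discrete sum of a \emph{single} Wigner D-function, using the classical product-to-sum (Clebsch--Gordan) expansion for Wigner D-functions from angular momentum theory. First I would recall that product formula in its standard continuous form: the pointwise product $\overline{\D{l_1}{k_1}{n_1}}\,\D{l_2}{k_2}{n_2}$ at a single rotation $(\theta,\phi,\chi)$ can be written as a linear combination over $\hat l$ ranging from $|l_2-l_1|$ to $l_1+l_2$ of a single Wigner D-function $\D{\hat l}{\hat k}{\hat n}$, where the coefficients factor into a product of two Wigner 3j symbols—one coupling the $k$-orders and one coupling the $n$-orders—together with the normalization factors $N_{l_1},N_{l_2},N_{\hat l}$ collected into the $\sqrt{(2l_1+1)(2l_2+1)(2\hat l+1)/8\pi^2}$ term. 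The key structural fact is that, because the azimuth and polarization dependence of $\D l{k}{n}$ is purely the exponentials $e^{-\mathrm{i}k\phi}e^{-\mathrm{i}n\chi}$ (see~\eqref{def:WigD}), the product $\overline{\D{l_1}{k_1}{n_1}}\,\D{l_2}{k_2}{n_2}$ carries phase $e^{-\mathrm{i}(k_2-k_1)\phi}e^{-\mathrm{i}(n_2-n_1)\chi}$, which \emph{forces} the surviving single D-function to have orders exactly $\hat k=k_2-k_1$ and $\hat n=n_2-n_1$. This is precisely the magnetic-quantum-number selection rule that makes the two 3j symbols in the statement appear with those specific lower entries $-k_1,k_2,-\hat k$ and $-n_1,n_2,-\hat n$.

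Concretely, I would proceed as follows. First, establish the single-point identity
\begin{equation*}
\overline{\D{l_1}{k_1}{n_1}(\theta,\phi,\chi)}\,\D{l_2}{k_2}{n_2}(\theta,\phi,\chi)
= C_{k_2,n_2}\!\!\sum_{\hat l=|l_2-l_1|}^{l_1+l_2}\!\!\sqrt{\tfrac{(2l_1+1)(2l_2+1)(2\hat l+1)}{8\pi^2}}
\begin{pmatrix} l_1 & l_2 & \hat l \\ -n_1 & n_2 & -\hat n\end{pmatrix}
\begin{pmatrix} l_1 & l_2 & \hat l \\ -k_1 & k_2 & -\hat k\end{pmatrix}
\D{\hat l}{\hat k}{\hat n}(\theta,\phi,\chi),
\end{equation*}
which is the standard Gaunt-type coupling rule restated in this paper's normalization convention. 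The phase $C_{k_2,n_2}=(-1)^{k_2+n_2}$ and the various minus signs on the 3j lower rows arise from combining the complex conjugation relation $\overline{\D l{k}{n}}=(-1)^{k-n}\D l{-k}{-n}$ with the symmetry properties of the 3j symbols; I would track these sign conventions carefully. Second—and this is the crucial observation that makes the \emph{discrete} statement follow immediately—the coupling coefficients on the right-hand side do not depend on the sample point $(\theta_p,\phi_p,\chi_p)$. Therefore I can sum the single-point identity over $p=1,\dots,m$ and interchange the finite sum over $p$ with the finite sum over $\hat l$, pulling the $p$-independent coefficients outside the sum over $p$. This yields exactly the claimed formula with $\sum_{p=1}^m \D{\hat l}{\hat k}{\hat n}(\theta_p,\phi_p,\chi_p)$ as the inner sum.

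The main obstacle is purely bookkeeping rather than conceptual: getting the phase conventions and the exact placement of minus signs in the 3j symbols to match the stated form. Different references (Edmonds, Rose, Schulten--Gordon) use slightly different sign and normalization conventions for the Wigner D-functions and for the complex conjugate, and the factor $\sqrt{1/8\pi^2}$ versus $\sqrt{2l+1/8\pi^2}$ must be reconciled with the normalization $N_l=\sqrt{(2l+1)/8\pi^2}$ defined in~\eqref{def:WigD}. I would resolve this by fixing one convention at the outset, deriving $\overline{\D l{k}{n}}$ explicitly in that convention, and verifying the resulting coefficient against a small base case (for instance $l_1=l_2=1$ with low orders, where the 3j symbols are tabulated) to confirm both the overall phase $(-1)^{k_2+n_2}$ and the selection rule $\hat k=k_2-k_1$, $\hat n=n_2-n_1$. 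Since the result is attributed to~\cite{bangun_sensing_2020}, the cleanest route is to invoke the continuous coupling identity as known and present only the conjugation/selection-rule reduction together with the linearity step that moves from the single-point identity to the discrete sum over samples.
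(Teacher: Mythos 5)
The paper never proves Proposition~\ref{prop:CG_innerproduct}: it is imported verbatim from \cite{bangun_sensing_2020}, so there is no in-paper argument to compare yours against. Your proposal is nevertheless the standard and correct derivation, and the bookkeeping you flag as the main obstacle does close. Concretely: since $\Wd{l}{k}{n}$ is real, the paper's convention gives $\overline{\D{l}{k}{n}}=(-1)^{k-n}\D{l}{-k}{-n}$; applying the classical Clebsch--Gordan series to $\D{l_1}{-k_1}{-n_1}\D{l_2}{k_2}{n_2}$ forces the third-row entries $-\hat k$, $-\hat n$ with $\hat k=k_2-k_1$, $\hat n=n_2-n_1$ (consistently with your observation that the exponentials $e^{-\i k\phi}e^{-\i n\chi}$ dictate the surviving orders); the accumulated phase is $(-1)^{k_1-n_1}\cdot(-1)^{\hat n-\hat k}=(-1)^{k_2+n_2}=C_{k_2,n_2}$; and the normalization $N_{l_1}N_{l_2}(2\hat l+1)/N_{\hat l}$ with $N_l=\sqrt{(2l+1)/8\pi^2}$ reproduces exactly the factor $\sqrt{(2l_1+1)(2l_2+1)(2\hat l+1)/8\pi^2}$ in the statement. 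The final step---pulling the $p$-independent coefficients out of the finite sum over samples---is immediate, as you say. The only caveat is that your write-up defers the continuous coupling identity itself to the literature rather than deriving it; given that the proposition is explicitly attributed to \cite{bangun_sensing_2020} and the identity is standard (Edmonds, Rose), that is an acceptable level of rigor here.
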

 
The parameters of Wigner 3j symbols are non-zero only under certain conditions known as the \textit{selection rules}. The selection rules state that Wigner 3j symbols $\begin{pmatrix}
   l_1 & l_2 & {l}_3 \\
   k_1 & k_3 & k_3
  \end{pmatrix} \in \R$
are non-zero if only if:
\begin{itemize}
\item The absolute value of $k_i$ does not exceed $l_i$, i.e., $-l_i \leq k_i \leq l_i$ for $i=1,2,3$
\item The summation of all $k_i$ should be zero: $k_1 + k_2 +k_3 = 0$.
\item Triangle inequality holds for $l_i$'s: $\card{l_1 - l_2} \leq l_3 \leq l_1 + l_2$.
\item The sum of all $l_i$'s should be an integer $l_1+l_2+l_3 \in \mathbb{N}$.
\item If $k_1 = k_2 = k_3 = 0$,  $l_1+l_2+l_3 \in \mathbb{N}$ should be an even integer.
\end{itemize}
There are other identities that will be useful for our derivations. For degrees $l_1,l_2$, orders $k_1,k_2,n_1,n_2$, and $\hat{k} = k_1 + k_2$, $\hat{n} = n_1 + n_2$, we obtain the following identities \cite{edmonds_angular_1974,rose_elementary_1995,schulten_exact_1975}. 
\begin{equation} \label{ch2:eq prop3j}
\begin{aligned} 
&\sum_{\hat{l}=\card{l_1-l_2}}^{l_1+l_2} (2\hat{l}+1)
\begin{pmatrix}
   l_1 & l_2 & \hat{l} \\
   k_1 & k_2 & -\hat{k} 
  \end{pmatrix}^2 = 1, \,\, \\
  &\sum_{\hat{l}=\card{l_1-l_2}}^{l_1+l_2} (2\hat{l} + 1) \begin{pmatrix}
   l_1 & l_2 & \hat{l}\\
   n_1 & n_2 & \hat{n} 
  \end{pmatrix}\begin{pmatrix}
   l_1 & l_2 & \hat{l}\\
   k_1 & k_2 & -\hat{k} 
  \end{pmatrix} = 0 , & \text{for} \,\, k_1 \neq n_1 \enskip \text{and} \enskip k_2 \neq n_2.
\end{aligned}
\end{equation}
Note that, from the selection rules, if $l_1 + l_2 + \hat{l}$ is an odd integer, then $\begin{pmatrix}    l_1 & l_2 & \hat{l}\\
   0 & 0 & 0
\end{pmatrix}$ is zero. Further properties and the exact expression of the Wigner 3j symbol will be included in the Supplementary Material in Section \ref{App:Property_Wigner3j}. 

It is trivial to derive  the product of same orders Wigner d-function, i.e.,  $k_1 = k_2 = k$ and $n_1 = n_2 = n$, as follows
\begin{equation}\label{product_Wigner_small}
\begin{aligned}
 &\card{\sum_{p=1}^m \Wd {l_1}{k }{n }(\cos\theta_p)   \Wd {l_2}{k}{n }(\cos\theta_p)}    
 =  \card{\sum_{\substack{\hat l=|l_2-l_1| }}^{l_1 + l_2} (2\hat{l} + 1) \begin{pmatrix}
   l_1 & l_2 & \hat{l} \\
   -n  & n  & 0 
  \end{pmatrix} \begin{pmatrix}
   l_1 & l_2 & \hat{l} \\
   -k  & k  & 0
  \end{pmatrix} \sum_{p=1}^mP_{\hat{l}}(\cos \theta_p)}
\end{aligned}
\end{equation}
An interesting conclusion of the above identities is that the sampling pattern affects the inner product through the sum of individual functions, for instance Legendre polynomials. 

\section{Main Results}\label{sec6: max_discrete}
The starting point of bounding the coherence is the following trivial inequality, which holds in full generality:
\begin{equation}
    \mu(\mbf A) \geq \max_{\substack{{l_1\neq l_2}\\{|k|,|n|\leq \min{(l_1,l_2)}}}}
\card{
\frac{
\sum_{p=1}^{m} \Wd {l_1}{k}{n}(\cos \theta_p)\Wd {l_2}{k}{n}(\cos \theta_p)
}{
\norm{\Wd {l_1}{k}{n}(\cos \boldsymbol\theta)}_2
\norm{\Wd {l_2}{k}{n}(\cos \boldsymbol\theta}_2
}
}.
\end{equation}
This is obtained by choosing the column with equal orders of $k$ and $n$, and using Definition \ref{def:WigD}.  In other words, for any sampling pattern, regardless of the choice of $\phi_p$ and $\chi_p$, the mutual coherence is lower bounded by merely choosing $\theta_p$. This indicates the sensitivity of mutual coherence to the sampling pattern on the elevation. The following theorem shows that the  maximum inner product in above expression has a simple solution for the sampling pattern \eqref{ch5:eq equispaced} if $m$ is moderately large. 

\begin{theorem}\label{ch5:thm prod}
Consider Wigner d-functions of degree $0 < l_1 < l_2  \leq B-1$ and orders $-\text{min}(l_1 ,l_2)\leq k,n  \leq  \text{min}(l_1,l_2)$, which are sampled according to \eqref{ch5:eq equispaced} with $m \geq \frac{(B + 2)^2}{10} + 1$. We have 
\begin{equation}
\begin{aligned}  
\max_{\substack{{l_1\neq l_2}\\{|k|,|n|\leq \min{(l_1,l_2)}}}}
&\card{\sum_{p=1}^{m} \Wd {l_1}{k}{n}(\cos \theta_p)\Wd {l_2}{k}{n}(\cos \theta_p)} =   \card{\sum_{p=1}^{m} {P}_{B-1}(\cos \theta_p)  {P}_{B-3}(\cos \theta_p)},
\end{aligned}
\end{equation}
where ${P}_l (\cos \theta)$ is the Legendre polynomial of degree $l$ with 
\[
 P_{l} (\cos \boldsymbol\theta)\defeq \paran{{P}_{l}(\cos \theta_1),\dots,{P}_{l}(\cos \theta_m)}^T.
\]
\end{theorem}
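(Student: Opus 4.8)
The plan is to start from the product formula \eqref{product_Wigner_small}, which converts every discrete inner product of same-order Wigner d-functions into a finite combination of the scalar sums $S_{\hat l} := \sum_{p=1}^m P_{\hat l}(\cos\theta_p)$, weighted by products of Wigner 3j symbols. The first reduction uses the symmetry of the grid \eqref{ch5:eq equispaced}: the nodes $\cos\theta_p$ are symmetric about the origin and $P_{\hat l}(-x) = (-1)^{\hat l} P_{\hat l}(x)$, so $S_{\hat l} = 0$ for every odd $\hat l$. Together with the selection rule forcing $l_1 + l_2 + \hat l$ to be even in the zero-bottom-row 3j symbols, this means only even $\hat l$ survive, and inner products between columns of opposite degree parity vanish identically (consistent with \cite[Theorem 5]{bangun_sensing_2020}). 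Hence the candidate maximizers are restricted to same-parity pairs $l_1 < l_2$.

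Next I would isolate the two independent sources of monotonicity. Writing $T^k_{\hat l} := \begin{pmatrix} l_1 & l_2 & \hat l \\ -k & k & 0\end{pmatrix}$, the inner product of the columns of orders $(k,n)$ is $\sum_{\hat l \text{ even}} (2\hat l+1)\, T^n_{\hat l} T^k_{\hat l}\, S_{\hat l}$. For $n = k$ the coefficients $(2\hat l+1)(T^k_{\hat l})^2$ are nonnegative and, by the normalization identity in \eqref{ch2:eq prop3j}, sum to one, so the inner product is a genuine weighted average of the numbers $S_{\hat l}$; for $n \neq k$ the coefficients sum to zero and carry mixed signs. The first pillar is a set of inequalities for the Legendre sums: I would show that under $m \geq \frac{(B+2)^2}{10}+1$ one has $S_{\hat l} > 0$ and $S_{\hat l}$ monotonically increasing in $\hat l$ over the whole relevant range $\hat l \in \{2, 4, \dots, 2(B-1)\}$ (the explicit computation $S_2 = \frac{m}{m-1}$ is the base case). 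The second pillar concerns the 3j weights: I would prove that increasing $\card{k}$ and $\card{n}$ shifts mass of the coefficient sequence toward smaller $\hat l$, so that, combined with $S_{\hat l}$ increasing, the weighted average is largest at $k = n = 0$; this yields order-domination, namely $\max_{k,n}\card{\sum_p \Wd{l_1}{k}{n}(\cos\theta_p)\Wd{l_2}{k}{n}(\cos\theta_p)} = \sum_p P_{l_1}(\cos\theta_p) P_{l_2}(\cos\theta_p)$.

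Finally I would handle the degree maximization at zero orders. Since $\sum_p P_{l_1} P_{l_2}$ is the average of $S_2, \dots, S_{l_1+l_2}$ and the $S_{\hat l}$ increase with $\hat l$, raising either degree enlarges both the top index $l_1 + l_2$ and the mass on high $\hat l$, so the average increases with degree; the unique same-parity pair achieving the largest admissible value $l_1 + l_2 = 2B - 4$ subject to $l_1 < l_2 \leq B-1$ is $(l_1, l_2) = (B-3, B-1)$. Combining the parity constraint, order-domination, and degree-monotonicity identifies the global maximizer and gives $\card{\sum_p P_{B-1}(\cos\theta_p) P_{B-3}(\cos\theta_p)}$, as claimed.

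The main obstacle will be the first pillar: controlling the discrete Legendre sums $S_{\hat l}$ uniformly up to degree $2(B-1)$. The integral $\int_{-1}^1 P_{\hat l} = 0$ for $\hat l \geq 1$, so $S_{\hat l}$ is a pure quadrature/aliasing error, and for $\hat l$ comparable to $m$ this error can change sign; the threshold $m \geq \frac{(B+2)^2}{10}+1$ is precisely what guarantees the grid samples every relevant $P_{\hat l}$ finely enough to keep $S_{\hat l}$ positive and ordered. Making the weight-shift statement for the 3j symbols rigorous (the mixed-sign case $n \neq k$ in particular) is the second delicate point, since it requires genuine monotonicity of Wigner 3j symbols in their lower entries rather than a soft Cauchy--Schwarz bound.
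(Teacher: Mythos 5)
Your skeleton — parity reduction to even $\hat l$, expansion via Wigner 3j symbols, positivity and monotonicity of the Legendre sums $S_{\hat l}$ under the sample-count condition, and an Abel-summation argument showing that the zero-order inner product increases as the degrees are pushed up to $(B-3,B-1)$ — matches the paper's proof for the case $k=n=0$. The gap is in how you dispose of the nonzero orders, and it is not a cosmetic one.

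First, your ``second pillar'' asserts order-domination at \emph{fixed} degrees: that increasing $\card{k},\card{n}$ shifts coefficient mass toward smaller $\hat l$ so the weighted average drops. The paper does not prove, and does not need, any such monotonicity of the 3j symbols in their lower entries; what it actually uses is that for $k=n=\tau\neq 0$ the even-$\hat l$ weights sum to exactly $\tfrac12$ (Lemma \ref{ch5:lemm Wigner_3j_odd_even}), giving the crude bound $\card{\sum_p \Wd{l_1}{\tau}{\tau}\Wd{l_2}{\tau}{\tau}}\le \tfrac12 c_{2B-4}$, and that for $k\neq n$ the weights sum to zero, which after centering by $\kappa=\tfrac12(c_{|l_1-l_2|}+c_{l_1+l_2})$ and Cauchy--Schwarz gives $\tfrac{\sqrt2}{4}(c_{2B-4}-c_2)$. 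Your mixed-sign case $k\neq n$ cannot be handled by a mass-shift argument at all (the coefficients are not nonnegative), and you offer no concrete replacement for the centering trick. Second, and more seriously, even granting these upper bounds, the conclusion does not follow ``softly'' from monotonicity: one must still show $\sum_p P_{B-1}(\cos\theta_p)P_{B-3}(\cos\theta_p) \ge \tfrac12 c_{2B-4}$, and this comparison is razor-thin. The left side is a weighted average of $S_2,\dots,S_{2B-4}$ with weights summing to $1$ but spread over all even $\hat l$, while the right side is half of the single largest term; with $m-1\approx (B+2)^2/10$ both quantities are of order $3$--$4$. The paper closes this only by combining the residual bound $-0.463<R_{\hat l}(m)<0$ (Proposition \ref{ch5:prop residual}) with the exact second-moment identity $\sum_{\hat l}\rho_{\hat l}(\hat l^2+\hat l)=2+2(B-1)(B-2)$ (Lemma \ref{ch5:lemm Wigner_3j_integer} and Corollary \ref{ch5:cor max_B}), yielding a margin of $0.537-0.5$ plus $\tfrac{B}{6(m-1)}$. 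This quantitative step is entirely absent from your proposal, and without it the argument does not close.
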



Intuitively, this theorem states that if one considers equispaced samples on the elevation, the maximum inner product occurs at the zero-order Wigner d-functions, which are Legendre polynomials. Additionally, we can obtain similar results for maximum discrete inner product of associated Legendre polynomials, since for $k=0$ or $n=0$ the Wigner d-functions are the associated Legendre polynomials. The result is given in the following corollary.
 
\begin{corollary}\label{ch5:cor 1}
Let  $\theta_p$'s for $p \in [m]$ be chosen as in \eqref{ch5:eq equispaced}. For $m \geq \frac{(B + 2)^2}{10} + 1$, we have
\begin{equation}
\begin{aligned}
\max_{\substack{{l_1\neq l_2}\\{|k|\leq \min{(l_1,l_2)}}}}\,\, &\left|\sum_{p=1}^{m} C_{l_1}^kC_{l_2}^kP^k_{l_1}(\cos \theta_p)P^k_{l_2}(\cos \theta_p)\right|  =&\left|\sum_{p=1}^{m} P_{B-1}(\cos \theta_p)P_{B-3}(\cos \theta_p)\right|,
\end{aligned}
\end{equation}
where $C^k_{l} = \sqrt{\frac{(l-k)!}{(l+k)!}}$ is the normalization factor.
\end{corollary}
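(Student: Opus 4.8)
The plan is to deduce the corollary directly from Theorem~\ref{ch5:thm prod} by recognizing the associated-Legendre products as a restricted slice of the Wigner d-function products already controlled by the theorem. The key observation is the identity \eqref{legend_jacobi}, which gives $\Wd{l}{k}{0}(\cos\theta) = C_l^k P_l^k(\cos\theta)$ with $C_l^k = \sqrt{(l-k)!/(l+k)!}$. Substituting this into the summand of the corollary yields
\[
C_{l_1}^k C_{l_2}^k P_{l_1}^k(\cos\theta_p)\, P_{l_2}^k(\cos\theta_p) = \Wd{l_1}{k}{0}(\cos\theta_p)\, \Wd{l_2}{k}{0}(\cos\theta_p),
\]
so that the maximum on the left-hand side of the corollary is precisely the maximum appearing in Theorem~\ref{ch5:thm prod}, but taken over the restricted index family in which the second order is fixed to $n=0$ rather than ranging over all $|n|\leq\min(l_1,l_2)$.

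For the upper bound, I would note that this restricted family is a subset of the full index set of Theorem~\ref{ch5:thm prod}; hence, under the same hypothesis $m \geq \frac{(B+2)^2}{10}+1$, the corollary's maximum cannot exceed the theorem's maximum, which the theorem evaluates to $\card{\sum_{p=1}^m P_{B-1}(\cos\theta_p)\, P_{B-3}(\cos\theta_p)}$.

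For the matching lower bound, I would exhibit the extremal configuration of the theorem itself, namely $k=n=0$, $l_1=B-3$, $l_2=B-1$, and observe that it already lies inside the restricted $n=0$ slice and satisfies the admissibility constraint $|k|=0\leq\min(l_1,l_2)$. Since $C_l^0=1$ and $P_l^0=P_l$, at this configuration the summand reduces to $P_{B-3}(\cos\theta_p)\, P_{B-1}(\cos\theta_p)$, so the corollary's maximum is at least $\card{\sum_{p=1}^m P_{B-1}(\cos\theta_p)\, P_{B-3}(\cos\theta_p)}$. Combining the two bounds gives the asserted equality.

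There is essentially no hard step here once Theorem~\ref{ch5:thm prod} is in hand: the corollary is a genuine specialization, and the only thing to verify is that the theorem's maximizer $k=n=0$ falls within the $n=0$ family, which is immediate. All the analytic difficulty --- the combinatorial identities for sums of Legendre polynomials and Wigner 3j symbols, and the monotonicity in degrees and orders --- is absorbed into the proof of the theorem and need not be revisited.
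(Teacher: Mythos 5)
Your proof is correct and matches the paper's own (implicit) justification: the paper derives Corollary~\ref{ch5:cor 1} exactly as a specialization of Theorem~\ref{ch5:thm prod} to the $n=0$ slice via \eqref{legend_jacobi}, with the maximizer $k=n=0$, $(l_1,l_2)=(B-3,B-1)$ already lying in that slice. Nothing further is needed.
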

Corollary \ref{ch5:cor 1} implies that  the maximum product of two associated Legendre polynomials for different degrees and same orders is also attained at degrees $B-1$ and $B-3$ and order $k=0$. 

 
\begin{remark}
A byproduct of Theorem \ref{ch5:thm prod} is that for a fixed number of measurement numbers $m$, the inner product of Wigner d-functions with degree less that $\sqrt{10 m}$ are  ordered. This surprising behavior in presented in numerical experiments. 
\end{remark}


The proof of the main result follows from a sequence of inequalities and identities. In the next sections, we provide some of them that are of independent interest. Interestingly, the well-ordered behaviour of the inner products of Wigner d-functions is linked to orders in the summand of \eqref{product_Wigner_small}.  
The proof leverages mostly classical inequalities and identities, e.g., Abel partial summation,  and the orders between Wigner d-functions. All the proofs for the following sections are given in Section \ref{sec:proof_of_lemmas}.

\section{Finite Sum of Legendre Polynomials}\label{sec3: sum_leg_express}

The starting point of the proof is to use \eqref{product_Wigner_small} for the product of Wigner d-functions. The proof strategy is based on establishing inequalities for each term in the sum \eqref{product_Wigner_small}, and then using them to bound the final sum. In this section, we provide a set of results for sum of Legendre polynomials.
The following lemma provides an identity for the sum of equispaced samples of Legendre polynomials.
\begin{lemma} \label{ch5:lemm sum_equi}
Suppose we have equispaced samples as in \eqref{ch5:eq equispaced}, then the sum of sampled Legendre polynomials for even degrees $l > 0$ is given by
\begin{equation*}
\sum_{p=1}^{m} P_{l}(\cos \theta_p) =1+\frac{l(l+1)}{6(m-1)}+ R_l(m),
\end{equation*}
where $\frac{l(l+1)}{6(m-1)} + R_l(m)$ is equal to $\sum_{k = 2 \atop k,\text{even}}^l \frac{(-1)^{\frac{k}{2} + 1}S_l^k}{(m-1)^{k-1}}$ and $S_l^k = \frac{\zeta(k) (l+k-1)! 4 }{(k-1)! (l-k+1)! (2\pi)^k}$ with $\zeta(k)$ is a zeta function. For odd degrees $l$, the sum is equal to zero.
\end{lemma}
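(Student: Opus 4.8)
The plan is to treat the odd and even degree cases separately. For odd $l$, I would exploit the symmetry of the grid \eqref{ch5:eq equispaced}: replacing $p$ by $m+1-p$ sends $\cos\theta_p = \frac{2p-m-1}{m-1}$ to $-\cos\theta_p$, so the sample points are symmetric about the origin. Since $P_l(-x) = (-1)^l P_l(x)$, for odd $l$ the summand at $p$ cancels the one at $m+1-p$, giving $\sum_{p=1}^m P_l(\cos\theta_p) = 0$. This disposes of the odd case immediately.

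For even $l > 0$, the key tool is the Euler--Maclaurin summation formula applied to the equispaced sum. Writing $h = \frac{2}{m-1}$ and $x_p = -1 + (p-1)h$, I would set $g(t) = P_l(-1+th)$ and apply Euler--Maclaurin to $\sum_{t=0}^{m-1} g(t)$. Because $P_l$ is a polynomial of degree $l$, all its derivatives of order exceeding $l$ vanish, so the Euler--Maclaurin expansion terminates after finitely many terms with zero remainder, yielding an \emph{exact} identity rather than an asymptotic one; this is precisely what produces the finite sum over $k$ in the statement. The integral term equals $\frac{1}{h}\int_{-1}^1 P_l(x)\,\d x$, which vanishes for even $l > 0$ by orthogonality of $P_l$ to $P_0 = 1$. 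The endpoint average contributes $\frac{P_l(1) + P_l(-1)}{2} = \frac{1 + (-1)^l}{2} = 1$, accounting for the leading constant.

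It remains to match the Euler--Maclaurin correction terms to the claimed closed form. The corrections are $\sum_{j \geq 1}\frac{B_{2j}}{(2j)!}h^{2j-1}\bigl(P_l^{(2j-1)}(1) - P_l^{(2j-1)}(-1)\bigr)$. Since $P_l$ is even, $P_l^{(2j-1)}$ is odd, so $P_l^{(2j-1)}(-1) = -P_l^{(2j-1)}(1)$ and each bracket becomes $2P_l^{(2j-1)}(1)$. Substituting the classical value $P_l^{(r)}(1) = \frac{(l+r)!}{2^r\, r!\,(l-r)!}$ with $r = 2j-1$ and writing $k = 2j$, the $k$-th term becomes $\frac{B_k\, 2^k}{k!\,(m-1)^{k-1}}\cdot\frac{(l+k-1)!}{2^{k-1}(k-1)!(l-k+1)!}$; the derivative formula also forces $k \leq l$, so the sum runs over even $k$ from $2$ to $l$, exactly as claimed.

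Finally, I would convert Bernoulli numbers to zeta values via $\zeta(2j) = \frac{(-1)^{j+1}(2\pi)^{2j}B_{2j}}{2(2j)!}$; after simplification the $k$-th correction term collapses to exactly $\frac{(-1)^{k/2+1}S_l^k}{(m-1)^{k-1}}$, and checking $k=2$ (using $B_2 = \frac16$ and $P_l'(1) = \frac{l(l+1)}{2}$) recovers the displayed leading term $\frac{l(l+1)}{6(m-1)}$. The main obstacle is bookkeeping rather than conceptual: one must carefully justify that Euler--Maclaurin terminates exactly for the polynomial $P_l$ (so no remainder is needed) and verify the somewhat delicate factorial and sign algebra that identifies the Bernoulli form with the zeta form. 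Once the derivative formula at $x=1$ and the zeta--Bernoulli relation are in hand, the remaining computation is routine.
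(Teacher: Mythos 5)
Your proposal is correct, and it reaches the result by a somewhat different route than the paper. The paper first reduces to the positive half of the grid by symmetry (splitting on the parity of $m$ and treating $P_l(0)$ separately), expands $P_l$ explicitly in monomials via $P_l(x)=2^l\sum_h\binom{l}{h}\binom{(l+h-1)/2}{l}x^h$, applies Faulhaber's formula to each power sum $\sum_p y_p^h$, and then reassembles the contributions by Bernoulli index: the $j=0$ terms recombine into $\int_{-1}^1P_l=0$, the $j=1$ terms into $\tfrac{P_l(1)-P_l(0)}{2}$, the $j=2$ terms into $\tfrac{1}{12\widetilde m}P_l'(1)$, and the $j\ge 4$ even terms into the residual, using the closed form for $P_l^{(j-1)}(1)$ and the Bernoulli--zeta relation. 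You instead apply Euler--Maclaurin directly to $g(t)=P_l(-1+th)$ on $[0,m-1]$, observing that the expansion terminates exactly with zero remainder because $P_l$ is a polynomial; the integral, endpoint, and derivative-correction terms then reproduce the three pieces of the identity in one pass. Since Faulhaber's formula is itself the Euler--Maclaurin expansion of a monomial, the two arguments are performing the same computation in different orders, but your version has two concrete advantages: it avoids the monomial expansion and the subsequent reassembly of binomial sums into derivative values (the most bookkeeping-heavy part of the paper's proof), and it is uniform in the parity of $m$, whereas the paper carries out only the odd-$m$ case and asserts the even case is analogous. The one point you must be explicit about, and which you correctly flag, is that the Euler--Maclaurin remainder $\int \widetilde B_s(t)\,g^{(s)}(t)\,\mathrm{d}t$ vanishes once $s>l$, so the identity is exact rather than asymptotic; your verification of the $k=2$ term and of the general coefficient $\frac{2B_k}{k!}\,h^{k-1}P_l^{(k-1)}(1)=\frac{(-1)^{k/2+1}S_l^k}{(m-1)^{k-1}}$ via $B_k=\frac{(-1)^{k/2+1}2\,k!}{(2\pi)^k}\zeta(k)$ checks out against the paper's $S_l^k$.
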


Lemma \ref{ch5:lemm sum_equi} shows that we can simplify the summation of equispaced samples Legendre polynomials with respect to number of samples $m$, degree of polynomials $l$, and the residual $R_l(m)$.

\begin{remark}
If $l=0$, then the sum of equispaced samples Legendre polynomials is equal to $m$, since $P_0 (\cos \theta) = 1$. In addition, for $l=2$, the summation is equal to $1 + \frac{1}{m-1}$ and we do not have any residual. If we take a number of samples larger than the degree $l$, it is obvious that the summation converges to $1$.
\end{remark}

The residual $R_l(m)$ is important in the summation in Lemma \ref{ch5:lemm sum_equi}. We provide upper and lower bounds on the residual $R_l(m)$. The next proposition provides a bound on this summation. 


\begin{proposition} \label{ch5:prop residual}
Suppose we have $m  \geq \frac{(l+1)^2}{10} + 1$ with even degree $l \geq  4$. The residual $R_{l}(m)$ is therefore bounded by $-0.463 < R_l(m) < 0$.
\end{proposition}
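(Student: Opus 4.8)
\textbf{Proof proposal for Proposition \ref{ch5:prop residual}.}

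The plan is to work directly from the explicit series representation established in Lemma \ref{ch5:lemm sum_equi}, where
\[
R_l(m)=\sum_{k=4 \atop k\,\text{even}}^{l}\frac{(-1)^{\frac{k}{2}+1}S_l^k}{(m-1)^{k-1}},
\qquad
S_l^k=\frac{\zeta(k)\,(l+k-1)!\,4}{(k-1)!\,(l-k+1)!\,(2\pi)^k}.
\]
Note the sum starts at $k=4$ because the $k=2$ term was pulled out as $\frac{l(l+1)}{6(m-1)}$. The signs alternate: writing $k=2j$, the term has sign $(-1)^{j+1}$, so $j=2$ (i.e.\ $k=4$) is negative, $j=3$ (i.e.\ $k=6$) is positive, and so on. The key structural fact I would exploit is that under the hypothesis $m\geq\frac{(l+1)^2}{10}+1$, equivalently $(m-1)\geq\frac{(l+1)^2}{10}$, the successive terms decay rapidly, so the series behaves essentially like its first (negative) term. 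First I would estimate the ratio of consecutive magnitudes. Setting $t_k:=\frac{S_l^k}{(m-1)^{k-1}}$, a direct computation gives
\[
\frac{t_{k+2}}{t_k}=\frac{\zeta(k+2)}{\zeta(k)}\cdot\frac{(l+k+1)!\,(l-k+1)!}{(l+k-1)!\,(l-k-1)!}\cdot\frac{(k-1)!}{(k+1)!}\cdot\frac{1}{(2\pi)^2(m-1)^2}.
\]
The factorial ratios simplify to $\frac{(l+k+1)(l+k)}{(k+1)k}\cdot(l-k+1)(l-k)$, and since every factor in the numerator is at most $(l+1)^2$ in an appropriate grouping, I would bound this whole ratio by a constant times $\frac{(l+1)^4}{(2\pi)^2(m-1)^2}$. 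Using $(m-1)\geq\frac{(l+1)^2}{10}$ turns $\frac{(l+1)^4}{(m-1)^2}\leq 100$, and combined with the $\frac{1}{(2\pi)^2}$ factor and the zeta ratio (which is at most $1$) this yields a geometric decay ratio bounded by a constant strictly less than $1$.

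Once a uniform geometric bound $\frac{t_{k+2}}{t_k}\le\rho<1$ is secured, the alternating structure does the rest. For the \emph{upper bound} $R_l(m)<0$, I would use that for an alternating series whose terms decrease in magnitude, the sum has the sign of its first term; since the leading $k=4$ term carries sign $(-1)^{3}=-1$, grouping the tail into consecutive pairs $(-t_{k}+t_{k+2})$ each of which is negative gives $R_l(m)<0$. For the \emph{lower bound} $R_l(m)>-0.463$, the standard alternating-series estimate bounds $|R_l(m)|\le t_4$, so it suffices to show $t_4<0.463$. Here
\[
t_4=\frac{S_l^4}{(m-1)^3}=\frac{\zeta(4)\,(l+3)!\,4}{3!\,(l-3)!\,(2\pi)^4\,(m-1)^3},
\]
and $\frac{(l+3)!}{(l-3)!}=(l+3)(l+2)(l+1)l(l-1)(l-2)\le (l+3)^6$; substituting $(m-1)\ge\frac{(l+1)^2}{10}$ makes $(m-1)^3\ge\frac{(l+1)^6}{1000}$, so $t_4$ is bounded by an explicit expression in $l$ that (using $\zeta(4)=\frac{\pi^4}{90}$) is monotonically controlled and maximized at the smallest admissible degree $l=4$. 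I would then verify the numerical inequality $t_4<0.463$ at $l=4$ and confirm it decreases for larger $l$.

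The main obstacle I anticipate is making the decay estimate fully rigorous near the boundary of the hypothesis, i.e.\ for the smallest degrees $l=4,6$ where the constant $(m-1)\ge\frac{(l+1)^2}{10}$ is tightest and the geometric ratio $\rho$ is closest to $1$; in that regime the crude bound $|R_l|\le t_4$ may not immediately deliver the sharp constant $0.463$, and one may need to retain the first two terms explicitly (using $-t_4+t_6<R_l<-t_4+t_6-t_8+\dots$ more carefully) rather than bounding by $t_4$ alone. A secondary technical point is justifying that the worst case over $l$ occurs at $l=4$: because both $t_4$ and $\rho$ depend on $l$, I would need to check that increasing $l$ (while $m$ grows at least as fast as $(l+1)^2/10$) does not increase the residual, which follows once the $(m-1)$-dependence is expressed through the constraint so that the whole bound depends on $l$ alone and can be shown monotone. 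Given these, the constant $0.463$ should emerge as (a safe rounding of) the supremum of the bound over admissible $(l,m)$, attained in the limiting regime $m=\frac{(l+1)^2}{10}+1$ with $l=4$.
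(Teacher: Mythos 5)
Your overall strategy is the same as the paper's: establish that the terms $t_j=\frac{S_l^j}{(m-1)^{j-1}}$ decrease under the hypothesis $m-1\geq\frac{(l+1)^2}{10}$ (the paper does exactly your ratio computation, using $\zeta(j+2)<\zeta(j)$ and $(j+1)j\geq 20$), then exploit the alternating signs by pairing consecutive terms to get $R_l(m)<0$ and $R_l(m)\geq -t_4$, and finally bound $t_4$ numerically. Your treatment of the upper bound $R_l(m)<0$ via negative pairs is in fact spelled out more carefully than in the paper.

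The genuine gap is in the last numerical step. With your bound $\frac{(l+3)!}{(l-3)!}\leq(l+3)^6$ and $(m-1)^3\geq\frac{(l+1)^6}{1000}$, you obtain
\[
t_4\;\leq\;\frac{4000\,\zeta(4)}{6\,(2\pi)^4}\cdot\Bigl(\frac{l+3}{l+1}\Bigr)^{6},
\]
and at the smallest admissible degree $l=4$ the factor $\bigl(\tfrac{7}{5}\bigr)^6\approx 7.5$ pushes this to roughly $3.5$, far above $0.463$; so the estimate as displayed does not deliver the constant. The paper instead uses the arithmetic--geometric mean pairing $(l+3)(l-2)\leq(l+1)^2$, $(l+2)(l-1)\leq(l+1)^2$, $(l+1)l\leq(l+1)^2$, giving $\frac{(l+3)!}{(l-3)!}\leq(l+1)^6$, which cancels exactly against $(l+1)^6/1000$ and yields the uniform, $l$-independent bound $t_4\leq\frac{4000\,\zeta(4)}{6\,(2\pi)^4}\approx 0.4629<0.463$. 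Relatedly, your claim that the bound is maximized at $l=4$ is backwards: with the correct cancellation the quantity $\frac{(l+3)(l+2)(l+1)l(l-1)(l-2)}{(l+1)^6}$ \emph{increases} toward $1$ as $l\to\infty$, so the constant $0.463$ is the supremum approached at large $l$, not a value attained at $l=4$ (where $t_4\lesssim 0.15$). Both issues are repairable --- replace $(l+3)^6$ by $(l+1)^6$ and the argument closes --- and you did flag the small-$l$ regime as the place needing care, but as written the proposal does not establish the stated constant.
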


The previous proposition shows that  the residual, conditioned on $m \geq \frac{(l+1)^2}{10} + 1$, is inside the interval, and most importantly is always negative. We present the numerical evaluation of this bound in Section \ref{sec6 NumEx}.  Using this property, not only  that the summation $\sum_{p=1}^{m} P_{l}(\cos \theta_p)$ is non-negative, but also it is monotonically increasing for an increasing even degree $l$.
\begin{lemma} \label{ch5:lemm increasing}
Let consider $m \geq \frac{(B + 2)^2}{10} + 1$, the sum of equispaced samples Legendre polynomials for even degrees  $0 \leq l \leq B-1$ is non-negative, i.e., $\sum_{p=1}^{m} P_{l}(\cos \theta_p) \geq 0$. Moreover, for an increasing sequence of even degrees $l$, i.e., $2< 4 < 6 < \hdots < B-1$, we have $
\sum_{p=1}^{m} P_{2}(\cos \theta_p) < \sum_{p=1}^{m} P_{4}(\cos \theta_p) < \sum_{p=1}^{m} P_{6}(\cos \theta_p) < \hdots < \sum_{p=1}^{m} P_{B-1}(\cos \theta_p).$
\end{lemma}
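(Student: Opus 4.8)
The plan is to read off both claims from the exact expansion of Lemma~\ref{ch5:lemm sum_equi}, regarding $F(l):=\sum_{p=1}^m P_l(\cos\theta_p)=1+\frac{l(l+1)}{6(m-1)}+R_l(m)$ as a function of the even degree $l$ and controlling the residual $R_l(m)$ through Proposition~\ref{ch5:prop residual} and the finer structure of its defining series. Non-negativity is then immediate: for $l=0$ the sum is $m$ and for $l=2$ it is $1+\frac{1}{m-1}$, both positive; and for even $l$ with $4\le l\le B-1$, the hypothesis $m\ge\frac{(B+2)^2}{10}+1$ implies $m\ge\frac{(l+1)^2}{10}+1$ (since $l+1\le B$), so Proposition~\ref{ch5:prop residual} gives $R_l(m)>-0.463$, whence $F(l)>1-0.463+\frac{l(l+1)}{6(m-1)}>0$.

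For monotonicity, fix even $l$ with $2\le l$ and $l+2\le B-1$ and subtract. The $k=2$ terms of Lemma~\ref{ch5:lemm sum_equi} combine to the positive but small quantity $\frac{(l+2)(l+3)-l(l+1)}{6(m-1)}=\frac{2l+3}{3(m-1)}$, so that
\[
F(l+2)-F(l)=\frac{2l+3}{3(m-1)}+\big(R_{l+2}(m)-R_l(m)\big).
\]
The key difficulty, and the reason the blanket bound $R_l(m)\in(-0.463,0)$ is useless here, is that near the sampling threshold the individual residuals are themselves of order one: estimating $S_l^4/(m-1)^3$ shows that both $R_{l+2}(m)$ and $R_l(m)$ approach $-0.463$, while the leading term $\frac{2l+3}{3(m-1)}$ is only of order $1/l$. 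Monotonicity therefore hinges on the \emph{cancellation} inside the difference $R_{l+2}(m)-R_l(m)$, not on bounding the two residuals separately.

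To exploit this cancellation I would write $R_{l+2}(m)-R_l(m)=\sum_{k=4,\ \mathrm{even}}^{l+2}\frac{(-1)^{k/2+1}(S_{l+2}^k-S_l^k)}{(m-1)^{k-1}}$, with all coefficients $S_{l+2}^k-S_l^k\ge 0$, and show that under the hypothesis on $m$ this is an alternating series with strictly decreasing magnitudes (a ratio estimate parallel to the one behind Proposition~\ref{ch5:prop residual}, using $\zeta(k+2)<\zeta(k)$ and $(m-1)^2\ge (l+1)^4/100$). This squeezes the difference between $0$ and minus its first term, giving $R_{l+2}(m)-R_l(m)>-\frac{S_{l+2}^4-S_l^4}{(m-1)^3}$. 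A short computation from the definition of $S_l^k$ yields the closed form $S_{l+2}^4-S_l^4=\frac{(2l+3)\,l(l+1)(l+2)(l+3)}{360}$, so that
\[
F(l+2)-F(l)>\frac{2l+3}{3(m-1)}\left(1-\frac{l(l+1)(l+2)(l+3)}{120\,(m-1)^2}\right).
\]
This is positive precisely when $120(m-1)^2>l(l+1)(l+2)(l+3)$, which holds because $l+3\le B$ forces $l(l+1)(l+2)(l+3)<B^4$, while $(m-1)^2\ge(B+2)^4/100$ gives $120(m-1)^2\ge 1.2(B+2)^4>B^4$. Chaining these strict inequalities over $l=2,4,\dots,B-3$ produces the increasing chain up to degree $B-1$.

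I expect the alternating-and-decreasing property of the difference series to be the genuine obstacle: it is what licenses truncating $R_{l+2}(m)-R_l(m)$ at its first term, and it is exactly the estimate the coarse $(-0.463,0)$ bound cannot supply. If this monotonicity of the summands is already established inside the proof of Proposition~\ref{ch5:prop residual}, I would reuse it; otherwise it must be proved as a standalone ratio lemma before assembling the difference.
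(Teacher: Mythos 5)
Your proposal follows essentially the same route as the paper's proof: both reduce the monotonicity claim to showing $(m-1)\bigl(R_l(m)-R_{l+2}(m)\bigr)<\tfrac{2l+3}{3}$, both establish $S_{l+2}^j\ge S_l^j$ and that the magnitudes $(S_{l+2}^j-S_l^j)/(m-1)^{j-1}$ decrease in even $j$ under the sampling condition, and both truncate the resulting alternating series at its first term. Your endgame is a little cleaner than the paper's: the exact identity $S_{l+2}^4-S_l^4=\tfrac{(2l+3)\,l(l+1)(l+2)(l+3)}{360}$ and the criterion $120(m-1)^2>l(l+1)(l+2)(l+3)$ replace the paper's numerical constants $C_1=0.2778$ and $C_2=0.0422$ with $C_1+C_2<\tfrac13$. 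The one point your sketch glosses over is the final summand $k=l+2$: there $S_l^{l+2}=0$ and the generic ratio estimate you invoke degenerates (it contains a factor $1/\bigl((l-j)(l-j+1)\bigr)$ at $j=l$), so the decreasing-magnitude property at the endpoint needs its own short check --- the paper avoids this by splitting that term off and bounding it separately by $C_2(2l+3)$.
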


Lemma \ref{ch5:lemm sum_equi}, Proposition \ref{ch5:prop residual}, and Lemma \ref{ch5:lemm increasing} characterize the order of the sum of equispaced samples of Legendre polynomials. These properties are useful later to prove the main result in this paper. In the next section, we  show a similar ordering for other terms of expression in \eqref{product_Wigner_small}.


\section{Inequalities for Wigner 3j Symbols}\label{sec4: Ineq_3j_symbols}
To prove the main theorem, we establish that there is a similar order between Wigner 3j symbols. Some of these properties of Wigner 3j symbols are given in Section \ref{sec2} and in Supplementary Material in Section \ref{App:Property_Wigner3j}. In what follows, we will have some combinatorial identities and inequalities related to Wigner 3j symbols. Despite ample investigation of authors, it is not clear whether these results bear interesting implications for other areas particularly angular momentum analysis in quantum physics. For compressed sensing, however, these are quite interesting as they show that the sensing matrix from samples Wigner D-functions possesses a lot of structures and symmetries.
We start with the following lemma.
\begin{lemma}\label{ch5:lemm decreasingWigner3j}
Let assume we have degrees $0 \leq l_1 < l_2 \leq B-1$ and constant degrees $l_3$, then the following inequalities hold 
\begin{equation} 
\begin{aligned} 
\begin{pmatrix}
   l_1 & l_2 & l_3 \\
   0 & 0 & 0
\end{pmatrix}^2 
&\geq  
\begin{pmatrix}
   l_1+1 & l_2+1 & l_3 \\
   0 & 0 & 0
\end{pmatrix}^2,\\
 \begin{pmatrix}
   l_1 & l_2 & l_3 \\
   0 & 0 & 0
\end{pmatrix}^2  
   &\geq \begin{pmatrix}
   l_1+2 & l_2  & l_3 \\
   0 & 0 & 0
\end{pmatrix}^2.
\end{aligned}
\end{equation}
\end{lemma}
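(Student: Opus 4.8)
The plan is to base everything on the explicit (Racah) closed form for the $3j$ symbol with vanishing lower row, recalled in the supplementary material. Set $2g=l_1+l_2+l_3$ and $a=g-l_1,\ b=g-l_2,\ c=g-l_3$, so that $a+b+c=g$ and $a,b,c\ge 0$ exactly when the triangle inequalities $|l_1-l_2|\le l_3\le l_1+l_2$ hold. I would first dispose of the selection-rule cases: both operations in the lemma (raising $l_1,l_2$ by one; raising $l_1$ by two) preserve the parity of $l_1+l_2+l_3$, so the two sides vanish simultaneously on parity grounds, and in the regime $|l_1-l_2|\le l_3\le l_1+l_2$ that actually occurs in the sum \eqref{product_Wigner_small} (so $a,b,c\ge 0$) the left symbol is nonzero whenever the right one is. It therefore suffices to treat the case where both symbols are nonzero, where squaring the Racah formula gives $\begin{pmatrix} l_1 & l_2 & l_3\\ 0 & 0 & 0\end{pmatrix}^2=T(a,b,c)$ with
\[
T(a,b,c)=\frac{(2a)!\,(2b)!\,(2c)!}{(2a+2b+2c+1)!}\left(\frac{(a+b+c)!}{a!\,b!\,c!}\right)^{2}.
\]

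Next I would translate the two operations into shifts of $(a,b,c)$. Raising $l_1,l_2$ by one fixes $a$ and $b$ and sends $c\mapsto c+1$, while raising $l_1$ by two sends $(a,b,c)\mapsto(a-1,b+1,c+1)$. Thus both claims reduce to showing a ratio of $T$-values is at most $1$. For the first inequality I form $T(a,b,c+1)/T(a,b,c)$; after cancelling factorials this collapses to $\frac{(2c+1)(g+1)}{(2g+3)(c+1)}$ with $g=a+b+c$, and the inequality $\le 1$ is equivalent to $g+c+2\ge 0$, which holds unconditionally. Since $T$ is symmetric in its three arguments, this shows $T$ is coordinate-wise non-increasing, which is precisely the first displayed inequality (and, in the same breath, the symmetric statement one would get by permuting the columns).

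For the second inequality I would compute $R:=T(a-1,b+1,c+1)/T(a,b,c)$, which again telescopes after cancelling factorials to $R=\frac{a(g+1)(2b+1)(2c+1)}{(2a-1)(2g+3)(b+1)(c+1)}$, valid for $a\ge 1$ (if $a=0$ the shifted symbol vanishes and there is nothing to prove). Hence $R\le 1$ is equivalent to the polynomial inequality $f:=(2a-1)(2g+3)(b+1)(c+1)-a(g+1)(2b+1)(2c+1)\ge 0$. This is where the hypothesis $l_1<l_2$ enters decisively, since it is equivalent to $a>b$, i.e. $a\ge b+1$. My plan is to regard $f$ as a quadratic in $c$ (with $g=a+b+c$) and to verify that each of its three coefficients is nonnegative under $a\ge b+1$: the leading coefficient works out to $2(a-b-1)$, the linear coefficient to $(a-b-1)(2a+2b+3)+2(2a-1)(b+1)$, and the constant term, evaluated at $a=b+1$, equals $4b^3+12b^2+11b$ while being increasing in $a$. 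Nonnegativity of all three coefficients then yields $f\ge 0$ for every $c\ge 0$.

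The main obstacle is squarely the second inequality. Unlike the first, it is genuinely conditional: raising $a$ increases $T$ whereas raising $b$ and $c$ decreases it, and these competing monotone effects resolve in favour of a net decrease only when $l_1<l_2$. Indeed the inequality is \emph{false} at $a=b$ (for instance $(l_1,l_2,l_3)=(3,3,2)$, where the shifted symbol is strictly larger), so any argument must use $a>b$ in an essential way. The delicate points will therefore be (i) extracting the single clean condition $a\ge b+1$ from the factorial ratio, and (ii) the selection-rule bookkeeping at the boundary, where one of the two symbols can vanish while the other does not; the quadratic-in-$c$ coefficient check is the only real computation and is routine once the coefficients are assembled.
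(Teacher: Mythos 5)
Your proof is correct and follows essentially the same route as the paper: both start from the closed-form expression \eqref{App:eq special_wigner3j} for the zero-order Wigner 3j symbol and show that the ratio of the two squared symbols under each index shift lies on the correct side of $1$, with the hypothesis $l_1<l_2$ (your $a\ge b+1$) entering only in the second inequality, exactly as in the paper's argument. The only blemish is a harmless arithmetic slip: the constant term of your quadratic in $c$ evaluated at $a=b+1$ is $(2b+1)(b+1)(2b+3)=4b^3+12b^2+11b+3$ rather than $4b^3+12b^2+11b$, which does not affect its nonnegativity or the conclusion.
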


By reducing the indices one at a time, this result shows that the maximum of Wigner 3j symbols for zero order is achieved at zero degree. Using property in \eqref{ch2:eq prop3j}, one can directly obtain that the maximum is equal to $1$. The summation of Wigner 3j symbols for $k_1 = k_2$ is also essential to prove the main result and can be decomposed as the summation of odd and even degrees, as given below.
\begin{lemma} \label{ch5:lemm Wigner_3j_odd_even} 
Suppose that $l_1 \neq l_2 \in \mathbb{N}$ and $\card{l_1-l_2} \leq \hat{l} \leq l_1 + l_2$, then for $-\text{min}(l_1,l_2) \leq k\neq n \leq \text{min}(l_1,l_2)$, we have 
\begin{equation*} 
\begin{aligned}
 \sum_{\hat{l},\mrm{even}} (2\hat{l}+1)
\begin{pmatrix}
   l_1 & l_2 & \hat{l}\\
   -k & k & 0
  \end{pmatrix} \begin{pmatrix}
   l_1 & l_2 & \hat{l}\\
   -n & n & 0
  \end{pmatrix} =    \sum_{\hat{l},\mrm{odd}} (2\hat{l}+1)
\begin{pmatrix}
   l_1 & l_2 & \hat{l}\\
   -k & k & 0
  \end{pmatrix} \begin{pmatrix}
   l_1 & l_2 & \hat{l}\\
   -n & n & 0
  \end{pmatrix} &= 0.
\end{aligned}
\end{equation*}
Furthermore, for $k=n=\tau$ and $1 \leq\card{\tau} \leq \text{min}(l_1,l_2)$ we have
\begin{equation*} 
\small
\begin{aligned}
\sum_{\hat{l},\mrm{even}} (2\hat{l}+1)
\begin{pmatrix}
   l_1 & l_2 & \hat{l}\\
   -\tau & \tau & 0
  \end{pmatrix}^2 &= \sum_{\hat{l},\mrm{odd}} (2\hat{l}+1)
\begin{pmatrix}
   l_1 & l_2 & \hat{l}\\
   -\tau & \tau & 0
  \end{pmatrix}^2 = \frac{1}{2} 
\end{aligned}
\end{equation*}
\end{lemma}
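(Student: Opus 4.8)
The plan is to get both partial sums in one shot by computing the full sum and the alternating sum and then taking their half-sum and half-difference. Write
\[
S_{\hat l} \defeq (2\hat{l}+1)\begin{pmatrix} l_1 & l_2 & \hat{l}\\ -k & k & 0\end{pmatrix}\begin{pmatrix} l_1 & l_2 & \hat{l}\\ -n & n & 0\end{pmatrix},
\]
so the two quantities in the statement are $E\defeq\sum_{\hat l\,\mrm{even}}S_{\hat l}$ and $O\defeq\sum_{\hat l\,\mrm{odd}}S_{\hat l}$. The targets are the total $E+O=\sum_{\hat l}S_{\hat l}$ and the alternating sum $E-O=\sum_{\hat l}(-1)^{\hat l}S_{\hat l}$; solving the resulting $2\times 2$ system recovers $E$ and $O$ individually.

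First I would evaluate $E+O$. Both lower rows sum to zero, so this is exactly the shape handled by \eqref{ch2:eq prop3j}: when $k=n$ the two symbols coincide and the normalization identity gives $E+O=1$, whereas when $k\neq n$ the orthogonality identity gives $E+O=0$. In other words $E+O=\delta_{kn}$.

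Next, for the alternating sum I would invoke the reflection symmetry of the Wigner 3j symbol under a sign flip of an entire lower row, $\begin{pmatrix} l_1 & l_2 & \hat{l}\\ -n & n & 0\end{pmatrix}=(-1)^{l_1+l_2+\hat l}\begin{pmatrix} l_1 & l_2 & \hat{l}\\ n & -n & 0\end{pmatrix}$. Substituting this into $S_{\hat l}$ absorbs the alternating factor $(-1)^{\hat l}$ into the overall constant $(-1)^{l_1+l_2}$ and leaves a sum that again matches \eqref{ch2:eq prop3j}, now pairing the rows $(-k,k)$ and $(n,-n)$; this equals $1$ precisely when $k=-n$ and $0$ otherwise. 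Hence $E-O=(-1)^{l_1+l_2}\delta_{k,-n}$. (The same two identities drop out transparently from the pointwise version of \eqref{product_Wigner_small} by evaluating $\Wd{l_1}{k}{n}(\cos\theta)\,\Wd{l_2}{k}{n}(\cos\theta)$ at $\theta=0$ and $\theta=\pi$, using $\Wd{l}{k}{n}(1)=\delta_{kn}$, $\Wd{l}{k}{n}(-1)=\pm\delta_{k,-n}$ and $P_{\hat l}(\pm1)=(\pm1)^{\hat l}$; I would pick whichever route reads more cleanly.)

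Combining the two lines gives $E=\tfrac12\bigl(\delta_{kn}+(-1)^{l_1+l_2}\delta_{k,-n}\bigr)$ and $O=\tfrac12\bigl(\delta_{kn}-(-1)^{l_1+l_2}\delta_{k,-n}\bigr)$. For $\tau=k=n$ with $1\le|\tau|\le\min(l_1,l_2)$ the second Kronecker delta vanishes because $\tau\neq-\tau$, so $E=O=\tfrac12$, which is the second claim. For $k\neq n$ the total $E+O$ already vanishes, and the alternating piece vanishes as well once $k\neq -n$, giving $E=O=0$. The main obstacle I anticipate is exactly this parity bookkeeping: the identities of \eqref{ch2:eq prop3j} by themselves only pin down $E+O$, so the crux is to justify the sign-flip step (equivalently the $\theta=\pi$ evaluation) cleanly and to keep straight that the off-diagonal configuration it detects is $k=-n$ rather than $k=n$ — which is also why the $k\neq n$ conclusion really wants the orders to have distinct magnitudes.
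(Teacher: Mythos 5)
Your proposal is correct and is, at its core, the same strategy the paper uses: determine the even and odd partial sums by computing the total sum and the alternating sum and then solving the resulting two-by-two system. The paper realizes these two computations by evaluating the pointwise product formula \eqref{product_Wigner_small} at $\theta=0$ and $\theta=\pi$; you realize the second one algebraically via the row sign-flip symmetry of the 3j symbol followed by \eqref{ch2:eq prop3j}. The payoff of your route is precision on exactly the point you flag: the alternating sum is $(-1)^{l_1+l_2}\delta_{k,-n}$, not identically $0$. The paper's $\theta=\pi$ step asserts $\mathrm{d}_l^{k,n}(-1)=0$ for all $k\neq n$ on the grounds that the weight $\cos^{\lambda}(\theta/2)$ vanishes at $\theta=\pi$; but $\lambda=|k+n|=0$ when $k=-n$, so that weight equals $1$ and the d-function does not vanish there — consistent with your $\mathrm{d}_l^{k,n}(-1)=\pm\delta_{k,-n}$. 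As a consequence, the first display of the lemma is genuinely false for $k=-n\neq 0$: combining your $E+O=0$ with the lemma's own second part gives $E=-O=\pm\tfrac12$ in that case (e.g.\ $l_1=1$, $l_2=2$, $k=1$, $n=-1$ yields $E=-\tfrac12$). So your restriction of the $k\neq n$ conclusion to $k\neq\pm n$ is not mere bookkeeping caution — it is necessary, and it propagates: the step in the main theorem that subtracts the constant $\kappa$ from $c_{\hat l}$ in \eqref{ch5:eq ub_kneqn} relies on $\sum_{\hat l,\mathrm{even}}\alpha_{\hat l}=0$ and therefore also needs $k=-n$ handled separately (there one can still use $\sum_{\hat l,\mathrm{even}}\card{\alpha_{\hat l}}=\tfrac12$ and bound by $\tfrac{c_{2B-4}}{2}$ exactly as in the $k=n=\tau$ case, so the theorem survives). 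Everything else in your argument — the use of \eqref{ch2:eq prop3j} for $E+O=\delta_{kn}$ and the observation $\tau\neq-\tau$ for the second claim — matches the paper's reasoning.
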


\begin{remark}\label{ch5:remark sum_wigner_1}
In this lemma, we do not include the condition $ \tau   = 0 $ in the second property because it is obvious that we have $\begin{pmatrix}
   l_1 & l_2 & \hat{l}\\
   0 & 0 & 0
  \end{pmatrix}^2$ and from the selection rules in Section \ref{sec2} the sum $l_1 + l_2 + \hat{l}$ should be an even integer, which means that for even $\hat{l}$ and odd $l_1 + l_2$, the Wigner 3j symbols value is zero. From the orthogonal property of Wigner 3j symbols as in \eqref{ch2:eq prop3j}, we have 
$\sum_{\hat{l},\mrm{even}} (2\hat{l}+1)
\begin{pmatrix}
   l_1 & l_2 & \hat{l}\\
   0 & 0 & 0
  \end{pmatrix}^2 = 1$,
for $l_1 + l_2$ is even.
\end{remark}

The last result of this section is related to the product of Wigner 3j symbols and $\hat{l}(\hat{l} + 1)$, where $\card{l_1 - l_2} \leq \hat{l} \leq l_1 + l_2$ and $l_2 = l_1 + 2$. As discussed in Lemma \ref{ch5:lemm sum_equi}, the sum of equispaced samples Legendre polynomials can be expressed as $\sum_{p=1}^m P_l(\cos \theta_p) = 1 + \frac{l(l+1)}{6(m-1)} + R_l(m)$, where $R_l(m)$ is the residual. The following lemma gives an expression of the inner product between Wigner 3j symbols and $l(l+1)$.
\begin{lemma} \label{ch5:lemm Wigner_3j_integer}
Let consider Wigner 3j symbols $\begin{pmatrix}
   l_1 & l_1 + 2 & \hat{l} \\
   0 & 0 & 0 
  \end{pmatrix}^2$, where the degree $l_2=l_1+2$ and $2 \leq \hat{l} \leq 2l_1 + 2$. Hence, we have the following equality 
\begin{equation*}
\small
\begin{aligned}
&\sum_{\hat{l}= 2 \atop \hat{l},\mrm{even}}^{2l_1 + 2}  (2\hat{l} + 1) \begin{pmatrix}
   l_1 & l_1 + 2 & \hat{l} \\
   0 & 0 & 0 
  \end{pmatrix}^2(\hat{l}^2 + \hat{l}) = 2 + 2(l_1+2)(l_1 +1).
\end{aligned}
\end{equation*}
\end{lemma}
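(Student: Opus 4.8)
The plan is to prove Lemma \ref{ch5:lemm Wigner_3j_integer} by exploiting the specific structure of the Wigner 3j symbols with $l_2 = l_1 + 2$, for which a closed-form expression is available. The key observation is that the quantity $\hat{l}^2 + \hat{l} = \hat{l}(\hat{l}+1)$ is the eigenvalue of the squared angular-momentum operator, and such combinations interact nicely with the recurrence and orthogonality relations satisfied by the 3j symbols. First I would recall (from the Supplementary Material in Section \ref{App:Property_Wigner3j}, or from \cite{edmonds_angular_1974,rose_elementary_1995}) the explicit formula for $\begin{pmatrix} l_1 & l_1+2 & \hat{l}\\ 0 & 0 & 0\end{pmatrix}^2$, which is nonzero only for even $\hat{l}$ in the range $2 \le \hat{l} \le 2l_1+2$, and write it out as a ratio of factorials. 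The summand $(2\hat{l}+1)\begin{pmatrix} l_1 & l_1+2 & \hat{l}\\ 0 & 0 & 0\end{pmatrix}^2$ already sums to $1$ by the normalization identity in \eqref{ch2:eq prop3j}, so the real content is the weighted sum against $\hat{l}(\hat{l}+1)$.

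The cleanest route I would take is a recoupling/operator identity. The factor $\hat{l}(\hat{l}+1)$ suggests inserting the Casimir operator $\hat{\mbf{L}}^2$ where $\hat{\mbf L} = \mbf L_1 + \mbf L_2$ acts on the coupled state $|l_1 l_2; \hat{l}\,\hat{m}\rangle$. Since $\hat{\mbf L}^2 = \mbf L_1^2 + \mbf L_2^2 + 2\mbf L_1\cdot\mbf L_2$, the eigenvalue relation $\hat{\mbf L}^2|l_1 l_2;\hat l\rangle = \hat l(\hat l+1)|l_1 l_2;\hat l\rangle$ lets me replace $\hat l(\hat l+1)$ by $l_1(l_1+1) + l_2(l_2+1) + 2\,\mbf L_1\cdot\mbf L_2$ inside the sum. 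The first two constant pieces, weighted by $(2\hat l+1)$ and the squared 3j symbols, collapse immediately via the normalization identity to $l_1(l_1+1) + (l_1+2)(l_1+3)$. The remaining term involves the matrix element of $\mbf L_1\cdot\mbf L_2$, which I would evaluate using the gradient/ladder-operator action expressed through 3j symbols with shifted orders, and then resum. Alternatively, and perhaps more transparently for a self-contained combinatorial proof, I would substitute the explicit factorial formula, factor out the $\hat l$-independent part, and reduce the weighted sum to a hypergeometric-type finite sum in the even index $\hat l = 2j$; the target value $2 + 2(l_1+2)(l_1+1)$ should then emerge from a telescoping or a known closed evaluation of that finite sum.

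Concretely, I would parametrize $\hat{l}=2j$ for $j=1,\dots,l_1+1$, write $S = \sum_j (2\hat l+1)\,w_{\hat l}\,\hat l(\hat l+1)$ with $w_{\hat l}=\begin{pmatrix} l_1 & l_1+2 & \hat{l}\\ 0 & 0 & 0\end{pmatrix}^2$, and use the known three-term recurrence in $\hat l$ relating $w_{\hat l}$, $w_{\hat l-2}$, $w_{\hat l+2}$ to convert the polynomial weight $\hat l(\hat l+1)$ into shifts of the summation index. Abel summation (which the authors flag as a central tool) would then let me peel off boundary terms at $\hat l=2$ and $\hat l=2l_1+2$; the interior contributions should cancel in pairs, leaving exactly the boundary terms that assemble into $2 + 2(l_1+2)(l_1+1)$. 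I would double-check the arithmetic against the small case $l_1=1$ (so $l_2=3$, $\hat l\in\{2,4\}$), where the right-hand side is $2 + 2\cdot 3\cdot 2 = 14$, to fix normalization constants and signs.

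The main obstacle I anticipate is controlling the explicit factorial expression for the $l_2=l_1+2$ symbols cleanly enough that the weighted sum telescopes without an unwieldy case analysis. The squared 3j symbols involve ratios like $\frac{(2l_1 - \hat l +2)!(\hat l)!\cdots}{\cdots}$ whose dependence on $\hat l$ is delicate, and getting the recurrence coefficients exactly right (including the parity constraint that kills odd $\hat l$) is where errors creep in. If the direct factorial manipulation becomes intractable, I would fall back on the operator/Casimir argument, where the only nontrivial input is the matrix element of $\mbf L_1\cdot\mbf L_2$ between coupled states with $l_2=l_1+2$; this trades combinatorial bookkeeping for a standard, if slightly technical, angular-momentum computation, and I expect it to be the more robust of the two approaches.
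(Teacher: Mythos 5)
Your proposal is essentially correct, but it takes a genuinely different route from the paper. The paper proves the lemma by induction on $l_1$: it relates $b^{(l_1+1)}_{\hat l}=(2\hat l+1)\begin{pmatrix} l_1+1 & l_1+3 & \hat l\\ 0&0&0\end{pmatrix}^2$ to $b^{(l_1)}_{\hat l}$ through the explicit ratio $C^{(l_1)}_{\hat l}$ already computed in the proof of Lemma \ref{ch5:lemm decreasingWigner3j}, absorbs the new boundary term $\hat l=2l_1+4$ via the normalization $\sum_{\hat l}b_{\hat l}=1$, and verifies an algebraic identity for $C^{(l_1)}_{\hat l}$ that advances the induction. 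Your primary (Casimir) route instead observes that the left-hand side is the expectation of $\hat{\mathbf L}^2=\mathbf L_1^2+\mathbf L_2^2+2\,\mathbf L_1\cdot\mathbf L_2$ in the uncoupled state $|l_1 0\rangle|l_2 0\rangle$ expanded over the coupled basis; the normalization identity \eqref{ch2:eq prop3j} then yields $l_1(l_1+1)+l_2(l_2+1)+2\langle \mathbf L_1\cdot\mathbf L_2\rangle$, and since $\mathbf L_1\cdot\mathbf L_2=L_{1z}L_{2z}+\tfrac12(L_{1+}L_{2-}+L_{1-}L_{2+})$ has vanishing diagonal matrix element when $m_1=m_2=0$, the answer is exactly $l_1(l_1+1)+(l_1+2)(l_1+3)=2+2(l_1+1)(l_1+2)$. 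You stop short of noting that this last matrix element is simply zero (you propose to ``evaluate and resum'' it), but that is a one-line computation, and the restriction to even $2\le\hat l\le 2l_1+2$ costs nothing since the parity selection rule and the triangle inequality kill all other terms. Your approach is cleaner and strictly more general — it gives the weighted sum for arbitrary $l_1,l_2$, not just $l_2=l_1+2$ — whereas the paper's induction is self-contained and reuses machinery (the ratio \eqref{ch5: eq wigner_decreasing_ratio}) it has already built. Your fallback combinatorial route (three-term recurrence in $\hat l$ plus Abel summation) is the least developed part of the proposal and is not what the paper does either; I would commit to the Casimir argument and make the vanishing of $\langle\mathbf L_1\cdot\mathbf L_2\rangle$ explicit.
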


Since we consider degree $0 \leq l_1 < l_2 \leq B-1$ and order $k=n=0$, the previous lemma has a direct implication for $l_1 = B-3$ and $l_2 =B-1$, as given in the following corollary. 
\begin{corollary}\label{ch5:cor max_B}
Suppose $l_1 = B-3$ and $l_2 = B-1$, then we have 
\begin{equation*}
\footnotesize
\begin{aligned}
\sum_{\hat{l}=2 \atop \hat{l},\mrm{even}}^{2B -4}  (2\hat{l} + 1) \begin{pmatrix}
   B-3 & B-1 & \hat{l} \\
   0 & 0 & 0 
  \end{pmatrix}^2(\hat{l}^2 + \hat{l}) = 2 + 2(B-1)(B-2).
\end{aligned}
\end{equation*}
\end{corollary}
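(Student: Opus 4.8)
The plan is to derive Corollary~\ref{ch5:cor max_B} as a direct specialization of Lemma~\ref{ch5:lemm Wigner_3j_integer}. Lemma~\ref{ch5:lemm Wigner_3j_integer} is stated for an arbitrary base degree $l_1$ with $l_2 = l_1 + 2$, asserting that
\[
\sum_{\hat{l}= 2 \atop \hat{l},\mrm{even}}^{2l_1 + 2}  (2\hat{l} + 1) \begin{pmatrix}
   l_1 & l_1 + 2 & \hat{l} \\
   0 & 0 & 0
  \end{pmatrix}^2(\hat{l}^2 + \hat{l}) = 2 + 2(l_1+2)(l_1 +1).
\]
First I would observe that the hypotheses of the corollary place us exactly in the setting of the lemma by substituting $l_1 \mapsto B-3$, so that $l_2 = l_1 + 2 = B-1$ as required. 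Under this substitution the upper summation limit becomes $2l_1 + 2 = 2(B-3) + 2 = 2B - 4$, which matches the stated range $2 \leq \hat{l} \leq 2B - 4$ in the corollary.

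Next I would substitute $l_1 = B-3$ into the right-hand side and simplify the arithmetic. The factor $(l_1 + 2)(l_1 + 1)$ becomes $(B-3+2)(B-3+1) = (B-1)(B-2)$, so the right-hand side reads $2 + 2(B-1)(B-2)$, exactly as claimed. Since every summand and the summation bounds are obtained by literal substitution, the identity follows immediately without any further estimation.

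Because this is a pure specialization, there is essentially no obstacle beyond bookkeeping: the only things to check are that the index shift is admissible (it is, since $B-3 \geq 0$ is implied whenever $l_1 = B-3 \geq 0$, consistent with the degree constraint $0 \leq l_1 < l_2 \leq B-1$) and that the arithmetic in the upper limit and the right-hand side is carried out correctly. I would therefore present the proof as a single sentence invoking Lemma~\ref{ch5:lemm Wigner_3j_integer} with $l_1 = B-3$, followed by the two elementary simplifications above, and conclude.
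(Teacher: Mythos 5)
Your proposal is correct and coincides with the paper's own treatment: the corollary is presented there as an immediate specialization of Lemma~\ref{ch5:lemm Wigner_3j_integer} with $l_1 = B-3$, $l_2 = l_1+2 = B-1$, and your substitutions for the summation limit $2l_1+2 = 2B-4$ and the right-hand side $2+2(l_1+2)(l_1+1) = 2+2(B-1)(B-2)$ are exactly the required bookkeeping.
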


This corollary is important to determine the product of Wigner 3j symbols and the sum of equispaced samples of Legendre polynomials. Since the latter can be expressed by $\hat{l}(\hat{l} + 1)$, as shown in Lemma \ref{ch5:lemm sum_equi} , we can directly apply Lemma \ref{ch5:lemm Wigner_3j_integer} to estimate the product.

\section{Experimental Results}\label{sec6 NumEx}

In what follows, we conduct a series of experiments to verify some of the results in the paper, as well as applications in compressed sensing. 
\subsection{Numerical verification of theoretical results}
In Lemma \ref{ch5:lemm sum_equi}, we can express the sum of equispaced samples Legendre polynomials as $\sum_{p=1}^{m} P_{l}(\cos \theta_p) =1+\frac{l(l+1)}{6(m-1)}+ R_l(m)$, where $R_l(m)$ is the residual with interval $-0.463 < R_l(m) < 0$ by considering $m \geq \frac{(l + 1)^2}{10} + 1$, as in Proposition \ref{ch5:prop residual}. 
\begin{figure}[!htb]
\centering
     \scalebox{0.5}{
\begin{tikzpicture}

\begin{axis}[%
width=5in,
height=2.5in,
at={(2.611in,1.741in)},
scale only axis,
colorbar,
colorbar style={ylabel={}},
colormap={mymap}{[1pt]
  rgb(0pt)=(0,0,0.5);
  rgb(22pt)=(0,0,1);
  rgb(25pt)=(0,0,1);
  rgb(68pt)=(0,0.86,1);
  rgb(70pt)=(0,0.9,0.967741935483871);
  rgb(75pt)=(0.0806451612903226,1,0.887096774193548);
  rgb(128pt)=(0.935483870967742,1,0.0322580645161291);
  rgb(130pt)=(0.967741935483871,0.962962962962963,0);
  rgb(132pt)=(1,0.925925925925926,0);
  rgb(178pt)=(1,0.0740740740740741,0);
  rgb(182pt)=(0.909090909090909,0,0);
  rgb(200pt)=(0.5,0,0)
},
legend cell align={left},
legend style={fill opacity=0.8, draw opacity=1, text opacity=1, draw=white!80!black},
point meta max=1.56519242011655e-12,
point meta min=-0.51,
tick align=outside,
tick pos=left,
x grid style={white!69.0196078431373!black},
xlabel={Degree (i)},
xmin=2, xmax=200,
xtick style={color=black},
y grid style={white!69.0196078431373!black},
ylabel={Samples (m)},
ymin=3, ymax=3003,
ytick style={color=black}
]
\addplot[forget plot] graphics [includegraphics cmd=\pgfimage,xmin=0, xmax=200, ymin=0, ymax=3001] {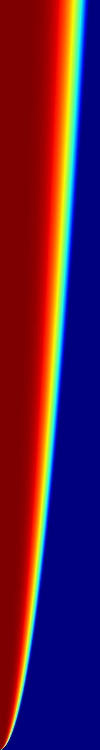};
\addplot [thick, line width = 2.0pt, black]
table {%
2 1.4
4 2.6
6 4.6
8 7.4
10 11
12 15.4
14 20.6
16 26.6
18 33.4
20 41
22 49.4
24 58.6
26 68.6
28 79.4
30 91
32 103.4
34 116.6
36 130.6
38 145.4
40 161
42 177.4
44 194.6
46 212.6
48 231.4
50 251
52 271.4
54 292.6
56 314.6
58 337.4
60 361
62 385.4
64 410.6
66 436.6
68 463.4
70 491
72 519.4
74 548.6
76 578.6
78 609.4
80 641
82 673.4
84 706.6
86 740.6
88 775.4
90 811
92 847.4
94 884.6
96 922.6
98 961.4
100 1001
102 1041.4
104 1082.6
106 1124.6
108 1167.4
110 1211
112 1255.4
114 1300.6
116 1346.6
118 1393.4
120 1441
122 1489.4
124 1538.6
126 1588.6
128 1639.4
130 1691
132 1743.4
134 1796.6
136 1850.6
138 1905.4
140 1961
142 2017.4
144 2074.6
146 2132.6
148 2191.4
150 2251
152 2311.4
154 2372.6
156 2434.6
158 2497.4
160 2561
162 2625.4
164 2690.6
166 2756.6
168 2823.4
170 2891
172 2959.4
174 3028.6
176 3098.6
178 3169.4
180 3241
182 3313.4
184 3386.6
186 3460.6
188 3535.4
190 3611
192 3687.4
194 3764.6
196 3842.6
198 3921.4
200 4001
};
\addlegendentry{$m = \frac{(l + 1)^2}{10} + 1$}
\end{axis}

\end{tikzpicture}}  

\caption{Residual error of sum of equispaced samples Legendre polynomials from $1+\frac{l(l+1)}{6(m+1)}$.}
    \label{ch5:Fig Residual_sum}
\end{figure}
The numerical evaluation of this proposition is presented in Figure \ref{ch5:Fig Residual_sum}, where it can be seen that by considering $m = \frac{(l + 1)^2}{10} + 1$, represented with the black line, the residual is restricted within the interval by the blue and red colors, respectively. In other words, the obtained constants are indeed tight in Proposition \ref{ch5:prop residual}.

Next, we numerically evaluate Theorem \ref{ch5:thm prod}. In  Figure \ref{MaxProd2D}, we show for which pairs of $(m,B)$, the identity of Theorem \ref{ch5:thm prod} holds using the color red. 
We have furthermore included a black line indicating the number of samples as in Theorem \ref{ch5:thm prod} for different $B$. It can be seen for small $B$, the condition is tight. However, it seems that it can be improved for larger values of $B$. {In Figure \ref{MaxProd2D}, the numerical experiments are performed by considering the normalization with respect to the $\ell_2$-norm. Thereby, we can numerically verify that the normalization does not affect the inequality in Theorem  \ref{ch5:thm prod}}.
\begin{figure}[!htb]
\centering
     \scalebox{0.5}{
\begin{tikzpicture}

\begin{axis}[%
width=5in,
height=2.5in,
at={(2.611in,1.741in)},
scale only axis,
colormap={mymap}{[1pt]
  rgb(0pt)=(0,0,0.5);
  rgb(22pt)=(0,0,1);
  rgb(25pt)=(0,0,1);
  rgb(68pt)=(0,0.86,1);
  rgb(70pt)=(0,0.9,0.967741935483871);
  rgb(75pt)=(0.0806451612903226,1,0.887096774193548);
  rgb(128pt)=(0.935483870967742,1,0.0322580645161291);
  rgb(130pt)=(0.967741935483871,0.962962962962963,0);
  rgb(132pt)=(1,0.925925925925926,0);
  rgb(178pt)=(1,0.0740740740740741,0);
  rgb(182pt)=(0.909090909090909,0,0);
  rgb(200pt)=(0.5,0,0)
},
legend cell align={left},
legend style={fill opacity=0.8, draw opacity=1, text opacity=1, draw=white!80!black},
point meta max=1,
point meta min=0,
tick align=outside,
tick pos=left,
x grid style={white!69.0196078431373!black},
xlabel={Bandwidth (B)},
xmin=4, xmax=46,
xtick style={color=black},
y grid style={white!69.0196078431373!black},
ylabel={Samples (m)},
ymin=7, ymax=299,
ytick style={color=black}
]
\addplot[forget plot] graphics [includegraphics cmd=\pgfimage,xmin=4, xmax=46, ymin=7, ymax=299] {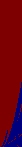};
\addplot [thick, line width = 4.0pt, black]
table {%
4 4.6
6 7.4
8 11
10 15.4
12 20.6
14 26.6
16 33.4
18 41
20 49.4
22 58.6
24 68.6
26 79.4
28 91
30 103.4
32 116.6
34 130.6
36 145.4
38 161
40 177.4
42 194.6
44 212.6
46 231.4
};
\addlegendentry{$m = \frac{(B + 2)^2}{10} + 1$}
\end{axis}

\end{tikzpicture}}  

\caption{Numerical verification of Theorem \ref{ch5:thm prod}}
\label{MaxProd2D}
\end{figure}

\subsection{Comparison with Welch bound and designing sampling patterns}

Our theoretical result provides a lower bound on the mutual coherence for equispaced samples on the elevation angle. Specifically, we have $\cos \theta_p = \frac{2p - m - 1}{m-1}$ for $p \in [m]$. It is interesting to see if the bound improves on previously existing bounds, like Welch bound, and if the bound can be somehow achieved.

In order to design sensing matrices from spherical harmonics and Wigner D-function, we choose points on azimuth and polarization angles $\phi,\chi \in [0, 2\pi)$ using the optimization problem below:
\begin{equation*}
\begin{aligned}
& \underset{\bs \phi, \bs \chi}{\text{minimize}}
& & \underset{ 1 \leq r < q \leq N}{\text{max}} \card{f_{q,r}(\bs \theta,\bs \phi,\bs \chi)}\\
& \text{subject to}
& & \phi_p,\chi_p \in [0,2\pi] \quad  \text{for} \quad p \in [m]
\end{aligned}
\end{equation*}
where $f_{q,r}(\bs \theta,\bs \phi,\bs \chi)$ is given as
$$  \frac{\sum_{p=1}^{m} \mathrm{D}_{l{(q)}}^{k{(q)},n{(q)}}(\theta_p,\phi_p,\chi_p)  \overline{\mathrm{D}_{l{(r)}}^{k{(r)},n{(r)}}(\theta_p,\phi_p,\chi_p)}}{\norm{\mathrm{D}_{l{(q)}}^{k{(q)},n{(q)}}(\boldsymbol\theta,\boldsymbol\phi,\boldsymbol\chi)}_2 \norm{\mathrm{D}_{l{(r)}}^{k{(r)},n{(r)}}(\boldsymbol\theta,\boldsymbol\phi,\boldsymbol\chi)}_2}. $$

For spherical harmonics, one can generate the problem from a relation in \eqref{sec2:eq gener_SH} without constraint on the polarization angle $\chi$ and order $n$. 

This optimization problem is a challenging min-max problem with non-smooth objective function and generally non-convex. In \cite{bangun_sensing_2020}, we have used a search-based method for optimization, which turns out to be difficult to tune and more time-consuming. We introduce a relaxation of the above problem and use gradient-descent based algorithms for optimizing it. Derivative of spherical harmonics and Wigner D-functions are given in Section \ref{Deriv}.

Using property of $\ell_p$-norm, we can write the objective function as
$$\underset{\bs \phi, \bs \chi}{\text{min}}\,\, \underset{p \rightarrow \infty}{\lim} \left(\underset{1 \leq r < q \leq N}{\sum} \card{f_{q,r}(\bs \theta,\bs \phi,\bs \chi)}^{p}\right)^{1/p}.$$ One can choose large enough $p$ and calculate the gradient. 
Therefore, we use gradient descent algorithms to solve this problem, as given in Algorithm \ref{algo_ps}.

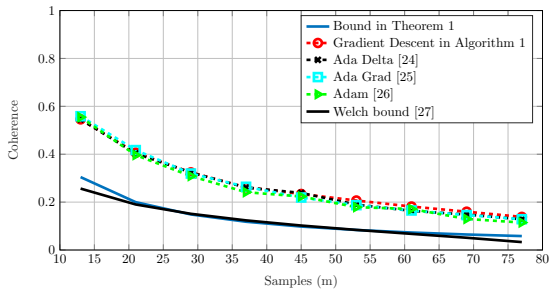
\begin{figure}[!htb]
\centering
     \scalebox{0.5}{
%
%
\definecolor{mycolor1}{rgb}{0.00000,0.44700,0.74100}%
\definecolor{mycolor2}{rgb}{0.00000,1.00000,1.00000}%
\definecolor{mycolor3}{rgb}{1.00000,0.00000,1.00000}%
\begin{tikzpicture}

\begin{axis}[%
width=5in,
height=2.5in,
at={(1.672in,0.489in)},
scale only axis,
xmin=10,
xmax=80,
xlabel style={font=\color{white!15!black}},
xlabel={Samples (m)},
ymin=0,
ymax=1,
ylabel style={font=\color{white!15!black}},
ylabel={Coherence},
axis background/.style={fill=white},
xmajorgrids,
ymajorgrids,
legend style={legend cell align=left, align=left, draw=white!15!black}
]
\addplot [color=mycolor1,line width=2.0pt ]
  table[row sep=crcr]{%
13	0.304467341439753\\
21	0.199796472131812\\
29	0.148436219218273\\
37	0.118008484060507\\
45	0.0979065566449628\\
53	0.0836448225552256\\
61	0.0730041797034076\\
69	0.064762306196862\\
77	0.0581909239467795\\
};
\addlegendentry{Bound in Theorem \ref{ch5:thm prod}}

\addplot [color=red, dashed,line width=2.0pt, mark size=3.0pt, mark=o, mark options={solid, red}]
  table[row sep=crcr]{%
13	0.545558715234713\\
21	0.406479149574366\\
29	0.324250233357159\\
37	0.262056154714913\\
45	0.233887490235098\\
53	0.206703616611632\\
61	0.181509670623068\\
69	0.159991563817843\\
77	0.13829361588879\\
};
\addlegendentry{Gradient Descent in Algorithm \ref{algo_ps}}

\addplot [color=black, dashed,line width=2.0pt, mark size=3.0pt, mark=x, mark options={solid, black}]
  table[row sep=crcr]{%
13	0.545266379740113\\
21	0.403103536547602\\
29	0.323691007122223\\
37	0.260935706002292\\
45	0.238352398046734\\
53	0.18970365421016\\
61	0.163538598020524\\
69	0.148146527560266\\
77	0.128152459766674\\
};
\addlegendentry{Ada Delta \cite{zeiler_adadelta_2012}}

\addplot [color=mycolor2, dashed,line width=2.0pt, mark size=3.0pt, mark=square, mark options={solid, mycolor2}]
  table[row sep=crcr]{%
13	0.558641919598301\\
21	0.417230856607857\\
29	0.319364645829167\\
37	0.263962475769476\\
45	0.219215014203741\\
53	0.187337586167699\\
61	0.165065110884299\\
69	0.145797547864673\\
77	0.128583340350047\\
};
\addlegendentry{Ada Grad \cite{duchi_adaptive_2011}}

\addplot [color=green, dashed, mark=triangle,line width=2.0pt, mark size=3.0pt, mark options={solid, rotate=270, green}]
  table[row sep=crcr]{%
13	0.554800316004069\\
21	0.396358517264571\\
29	0.30871618029259\\
37	0.241362339665753\\
45	0.223548230053292\\
53	0.178846410767328\\
61	0.171370145818752\\
69	0.129674452566332\\
77	0.114632141664637\\
};
\addlegendentry{Adam \cite{kingma_adam_2017}}

\addplot [color=black,line width=2.0pt]
  table[row sep=crcr]{%
13	0.256518358429386\\
21	0.190117275157343\\
29	0.151162233288431\\
37	0.123711200332038\\
45	0.102184964981513\\
53	0.0839467459114694\\
61	0.0674000089671322\\
69	0.0511778501066631\\
77	0.0330951696160748\\
};
\addlegendentry{Welch bound \cite{welch_lower_1974}}
								
\end{axis}
\end{tikzpicture}

\caption{Coherence of Wigner D-functions sensing matrix ($B = 4$)}
\label{Fig opt_Wigner}
\end{figure}
Figure \ref{Fig opt_Wigner} shows the coherence of a sensing matrix from Wigner D-functions using sampling points generated from several stochastic gradient algorithms. Although there are no sampling points that reach the lower bound in Theorem \ref{ch5:thm prod}, it can be seen that Adam algorithm \cite{kingma_adam_2017} yields the best sampling points. In this work, we also compare with several well-known stochastic gradient descent, as given in \cite{kingma_adam_2017,duchi_adaptive_2011,zeiler_adadelta_2012}. Bandwidth of Wigner D-functions is $B=4$ or the column dimension $N = 84$.
 \begin{algorithm}
    \caption{{Gradient Descent}}\label{algo_ps}
    \begin{algorithmic}
      \scriptsize
      \STATE  Initialization : $\bs \phi_0, \bs \chi_0$ uniformly random $[0,2\pi)$, step size $\eta= 0.5$, $\epsilon = 10^{-4}$, bound $\mu_{LB}$ from Theorem \ref{ch5:thm prod}, $\ell_{p}$-norm with $p = 8$ 
       
    \FOR  { $i = 1, \hdots, i_{\text{max}}$ or $\card{\mu_{LB} - \mu{(\mat A)}} \leq \epsilon$ }
    \STATE{ \begin{equation*}
    \begin{aligned}\bs \phi_i &= \bs \phi_{i-1} - \eta \nabla_{\bs \phi} \left(\underset{ 1 \leq r < q \leq N}{\sum} \card{{f_{q,r}(\bs \theta,\bs \phi,\bs \chi)}}^{p}\right)^{1/p}\\
    \bs \chi_i &= \bs \chi_{i-1} - \eta \nabla_{\bs \chi} \left(\underset{ 1 \leq r < q \leq N}{\sum} \card{{f_{q,r}(\bs \theta,\bs \phi,\bs \chi)}}^{p}\right)^{1/{p}}\\
    \mu(\mat A) &=  \underset{ 1 \leq r < q \leq N}{\text{max}} \card{f_{q,r}(\bs \theta,\bs \phi_i,\bs \chi_i)}
    \end{aligned}
    \end{equation*}}
    
      \ENDFOR   
    \end{algorithmic}
  \end{algorithm} 

Algorithm \ref{algo_ps} can be tailored for spherical harmonics, by only considering parameter on azimuth $\bs \phi$ and using relation between Wigner D-functions and spherical harmonics in \eqref{sec2:eq gener_SH}. {It can be seen that most of the gradient descent based algorithms converge to the lower bound for spherical harmonics, as shown in Figure \ref{Fig opt_SH}}. Therefore, we can provide sampling points on the sphere with mutual coherence that can achieve the  the lower bound in Theorem \ref{ch5:thm prod}. In this case, the spherical harmonics are generated with the bandwidth $B=10$ or equivalently we have the column dimension $N = 100$. 

\begin{figure}[!htb]
\centering
     \scalebox{0.5}{
%
%
\definecolor{mycolor1}{rgb}{0.00000,0.44700,0.74100}%
\definecolor{mycolor2}{rgb}{0.00000,1.00000,1.00000}%
\definecolor{mycolor3}{rgb}{1.00000,0.00000,1.00000}%
\begin{tikzpicture}

\begin{axis}[%
width=5in,
height=2.5in,
at={(1.672in,0.489in)},
scale only axis,
xmin=10,
xmax=100,
xlabel style={font=\color{white!15!black}},
xlabel={Samples (m)},
ymin=0,
ymax=1,
ylabel style={font=\color{white!15!black}},
ylabel={Coherence},
axis background/.style={fill=white},
xmajorgrids,
ymajorgrids,
legend style={legend cell align=left, align=left, draw=white!15!black}
]
\addplot [color=mycolor1, line width=2.0pt]
  table[row sep=crcr]{%
17	0.678688585148102\\
25	0.562539483279066\\
33	0.468247493229728\\
41	0.396366496625075\\
49	0.34134648342864\\
57	0.298524542357621\\
65	0.264555317970759\\
73	0.237109582915589\\
81	0.214560787501566\\
89	0.195757082980803\\
97	0.1798683518752\\
};
\addlegendentry{Bound in Theorem \ref{ch5:thm prod}}

\addplot [color=red,line width=2.0pt, mark size=3.0pt, dashed, mark=o, mark options={solid, red}]
  table[row sep=crcr]{%
17	0.678688585148103\\
25	0.562539483279067\\
33	0.468247493229728\\
41	0.396366496625076\\
49	0.34134648342864\\
57	0.29852454235762\\
65	0.264555317970759\\
73	0.237109582915589\\
81	0.214560787501566\\
89	0.195757082980803\\
97	0.1798683518752\\
};
\addlegendentry{Gradient Descent in Algorithm \ref{algo_ps}}

\addplot [color=black,line width=2.0pt, mark size=3.0pt, dashed, mark=x, mark options={solid, black}]
  table[row sep=crcr]{%
17	0.678688585148103\\
25	0.562539483279067\\
33	0.468247493229728\\
41	0.396366496625076\\
49	0.34134648342864\\
57	0.29852454235762\\
65	0.264555317970759\\
73	0.237109582915589\\
81	0.214560787501566\\
89	0.195757082980803\\
97	0.1798683518752\\
};
\addlegendentry{Ada Delta \cite{zeiler_adadelta_2012}}

\addplot [color=mycolor2,line width=2.0pt, mark size=3.0pt, dashed, mark=square, mark options={solid, mycolor2}]
  table[row sep=crcr]{%
17	0.678688585148103\\
25	0.562539483279067\\
33	0.468247493229728\\
41	0.396366496625076\\
49	0.34134648342864\\
57	0.29852454235762\\
65	0.264555317970759\\
73	0.237109582915589\\
81	0.214560787501566\\
89	0.195757082980803\\
97	0.1798683518752\\
};
\addlegendentry{Ada Grad \cite{duchi_adaptive_2011}}
\addplot [color=green,line width=2.0pt, mark size=3.0pt, dashed, mark=triangle, mark options={solid, rotate=270, green}]
  table[row sep=crcr]{%
17	0.678688585148103\\
25	0.562539483279067\\
33	0.468247493229728\\
41	0.396366496625076\\
49	0.34134648342864\\
57	0.29852454235762\\
65	0.264555317970759\\
73	0.237109582915589\\
81	0.224648115265434\\
89	0.195757082980803\\
97	0.193614425658094\\
};
\addlegendentry{Adam \cite{kingma_adam_2017}}

\addplot [color=black,line width=2.0pt]
  table[row sep=crcr]{%
17	0.222073628275669\\
25	0.174077655955698\\
33	0.143206534128867\\
41	0.120563675713100\\
49	0.102534366080753\\
57	0.0872929564835514\\
65	0.0737496131447850\\
73	0.0611227456628046\\
81	0.0486762030127979\\
89	0.0353332626668787\\
97	0.0176749080410067\\
};
\addlegendentry{Welch bound \cite{welch_lower_1974}}
\end{axis}

\end{tikzpicture}

\caption{Coherence of spherical harmonics sensing matrix ($B = 10)$ }
\label{Fig opt_SH}
\end{figure}
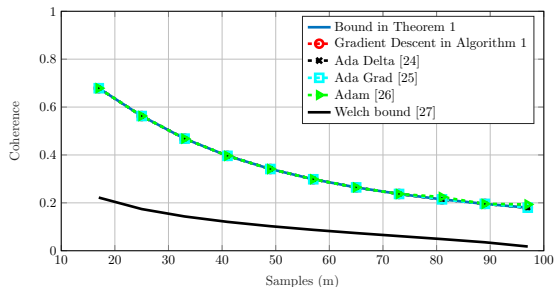
\section{Conclusions and Future Works} \label{Conclusion}
We have established a coherence bound of sensing matrices for Wigner D-functions on regular grids. This result also holds for spherical harmonics, which is a special case for Wigner D-functions. Estimating coherence involves non-trivial and complicated product of two Wigner D-functions for all combination of degrees and orders, yielding an obstacle to derive a simple and compact formulation of the coherence bound. 
Using the tools in the area of angular momentum in quantum mechanics disentangles this problem and represents the product as a linear combination of single Wigner D-functions and angular momentum coefficients, so called Wigner 3j symbols. In this paper, we derive some interesting properties of these coefficients and finite summation of Legendre polynomials to obtain the coherence bound. We also provided numerical experiments in order to verify the tightness of this bound. 
For practical application, it is also necessary to provide sampling points on the sphere and the rotation group that can achieve the coherence bound. We have shown that, for spherical harmonics, one can generate points to achieve this bound by using class of gradient descent algorithms. Convergence analysis of these algorithms and construction of deterministic sampling points to achieve this bound will be relegated for future works.
\section*{Acknowledgment}
This work is funded by DFG project (CoSSTra-MA1184 $|$ 31-1).

\section{Proofs of Main Theorem}\label{maintheoremproof}
\begin{proof}
In order to prove the main theorem, it is enough to show that for degree $0 \leq l_1 < l_2 \leq B-1$ and orders $-\text{min}(l_1,l_2) \leq k,n \leq  \text{min}(l_1,l_2)$, the following inequality holds
\begin{equation}
\small
\card{\sum_{p=1}^{m} {P}_{B-1}(\cos \theta_p)  {P}_{B-3}(\cos \theta_p)}\geq\card{\sum_{p=1}^{m} \Wd {l_1}{k}{n}(\cos \theta_p)\Wd {l_2}{k}{n}(\cos \theta_p)}.
\end{equation}
We can expand the product of Wigner d-functions in the right hand side as 
$$\sum_{\substack{\hat l=|l_2-l_1|\\\hat{l},\text{even}}}^{l_1+l_2} (2\hat{l} + 1) 
  \begin{pmatrix}
   l_1&l_2& \hat{l} \\
   -n & n & 0 
  \end{pmatrix}\begin{pmatrix}
   l_1 & l_2 & \hat{l} \\
   -k & k & 0 
  \end{pmatrix}\sum_{p=1}^m P_{\hat{l}} (\cos \theta_p).$$ Additionally, the product of Legendre polynomials in left hand side can be written as $$\sum_{\substack{\hat l=2\\\hat{l},\text{even}}}^{2B-4} (2\hat{l} + 1) 
  \begin{pmatrix}
   B-3&B-1& \hat{l} \\
   0 & 0 & 0 
  \end{pmatrix}^2\sum_{p=1}^m P_{\hat{l}} (\cos \theta_p).$$
The strategies are divided into two parts, which are for order $k = n = 0$ and for other conditions of order $k,n$.

Let consider the first case, for zero order $k = n = 0$.
From \eqref{legend_jacobi}, we know that $\Wd {l}{0}{0}(\cos \theta) = P_l(\cos \theta)$. Thus, it is equivalent to prove the maximum product of two Legendre polynomials is attained at $l_1=B-3$ and $l_2 = B-1$, i.e., $\underset{l_1 \neq l_2}{\max}\enskip \card{\sum_{p=1}^{m} P_{l_1}(\cos \theta_p)P_{l_2}(\cos \theta_p)} =\card{\sum_{p=1}^{m} P_{B-1}(\cos \theta_p)P_{B-3}(\cos \theta_p)}.$ We show that for an even $l_1+l_2$, if $(l_1,l_2)$ is increased to either $(l_1+1,l_2+1)$ or $(l_1+2,l_2)$, the sum of the product of two Legendre polynomials increases. It is enough to consider these two situations since from any pair $(l_1,l_2)$, one can use a sequence of inequalities to arrive at $(B-3,B-1)$. For an odd $l_1+l_2$, it is implied that the inner product is zero.


We use the product in \eqref{product_Wigner_small} for $k=n=0$ to get the representation of $\sum_{p=1}^mP_{l_1}(\cos \theta_p) P_{l_2}(\cos \theta_p)$ as
\begin{equation}
\begin{aligned}
\sum_{\substack{\hat l=|l_2-l_1|\\\hat{l}, \text{even}}}^{l_1 + l_2} (2\hat{l} + 1) \begin{pmatrix}
   l_1 & l_2 & \hat{l} \\
   0 & 0 & 0 
  \end{pmatrix}^2 \sum_{p=1}^{m} P_{\hat{l}} (\cos \theta_p)=\sum_{\substack{\hat{l}=|l_2-l_1|\\\hat l, \text{even}}}^{l_2+l_1} a_{\hat{l}}c_{\hat{l}},
\end{aligned}
\end{equation}
where from Lemma \ref{ch5:lemm increasing}, $c_{\hat{l}}=\sum_{p=1}^{m}P_{\hat{l}}  (\cos \theta_p)$ is non-negative and increasing for even values of $\hat l \geq 2$. The corresponding Wigner 3j symbols for $(l_1,l_2,\hat l)$ is denoted by $a_{\hat{l}}=(2\hat{l} + 1) \begin{pmatrix}
   l_1 & l_2 & \hat{l} \\
   0 & 0 & 0
  \end{pmatrix}^2$. Suppose we have $b_{\hat{l}}=(2\hat{l} + 1) \begin{pmatrix}
   l_1+1 & l_2+1 & \hat{l} \\
   0 & 0 & 0
  \end{pmatrix}^2$ for $(l_1 + 1,l_2+1)$  or $b_{\hat{l}}=(2\hat{l} + 1) \begin{pmatrix}
   l_1+2 & l_2 & \hat{l} \\
   0 & 0 & 0
  \end{pmatrix}^2$ for $(l_1 + 2,l_2)$. From Lemma \ref{ch5:lemm decreasingWigner3j} we have $a_{\hat l}\geq b_{\hat l}$ for both cases and additionally $\sum_{\hat l}a_{\hat{l}} =1$ and $\sum_{\hat l}b_{\hat{l}}=1$, as pointed out in \eqref{ch2:eq prop3j}. By using Abel's partial summation formula as stated in \eqref{App: Abelsum}, we have
 \begin{equation}
 \begin{aligned}
  \sum_{\substack{\hat l=|l_2-l_1|\\\hat l, \text{even}}}^{l_1+l_2} a_{\hat{l}} c_{\hat{l}}&=\sum_{\substack{\hat l=|l_2-l_1|\\\hat l, \text{even}}}^{l_2+l_1} A_{\hat l}(c_{\hat l} - c_{\hat l+2})+c_{l_1 + l_2 + 2}, \\
 \sum_{\substack{\hat l=|l_2-l_1|\\\hat l, \text{even}}}^{l_1+l_2+2} b_{\hat{l}} c_{\hat{l}}&=\sum_{\substack{\hat l=|l_2-l_1|\\\hat l, \text{even}}}^{l_2+l_1} B_{\hat l}(c_{\hat l} - c_{\hat l+2})+c_{l_1 + l_2 + 2},
\end{aligned} 
 \end{equation}  
 where $A_{\hat{l}} = \sum_{j, \text{even} =\card{l_1 -l_2}}^{\hat{l}} a_{j}$ and $B_{\hat l}$ are defined accordingly for $b_{\hat l}$. Since $A_{\hat l}\geq B_{\hat l}$ and $c_{\hat l}$ is increasing, it is clear that $ \sum_{\hat l} a_{\hat{l}} c_{\hat{l}}\leq  \sum_{\hat l} b_{\hat{l}} c_{\hat{l}}$, which establishes desired result by increasing the degrees until their reach $l_1 = B-3$ and $l_2 = B-1$.
 
For $k \neq n$ and $k = n \neq 0$, we want to show that the inequality also holds. Let us first define a variable for Wigner 3j symbols $\alpha_{\hat{l}} = (2\hat{l} + 1) \begin{pmatrix}
   l_1 & l_2 & \hat{l} \\
   -n & n & 0 
  \end{pmatrix} \begin{pmatrix}
   l_1 & l_2 & \hat{l} \\
   -k & k & 0 
  \end{pmatrix}$ and $c_{\hat{l}} = \sum_{p=1}^m P_{\hat{l}}(\cos \theta_p) \geq 0$ from Lemma \ref{ch5:lemm increasing}.
From Lemma \ref{ch5:lemm Wigner_3j_odd_even}, we have  $\sum_{\substack{\hat{l}=|l_2-l_1|\\\hat l,\text{even}}}^{l_2+l_1}\alpha_{\hat l}=0$ for  $-\text{min}\left(l_1,l_2\right) \leq k \neq n \leq \text{min}\left(l_1,l_2\right)$ and we can define an independent variable $\kappa =   \frac{c_{\card{l_1 -l_2}} + c_{l_1 + l_2}}{2}$. The upper bound of the product $\card{\sum_{p=1}^m \Wd{l_1}{k}{n}(\cos \theta_p) \Wd{l_2}{k}{n}(\cos \theta_p)}$ is given by
\begin{equation}\label{ch5:eq ub_kneqn}
\begin{aligned}
 \card{\sum_{\hat{l}=\card{l_1-l_2} \atop \hat{l},\text{even}}^{l_1 + l_2} \left(c_{\hat{l}} - \kappa\right)\alpha_{\hat{l}}} &\leq \frac{c_{\card{l_1 + l_2}} - c_{\card{l_1 -l_2}}}{2} \sum_{\hat{l} =\card{l_1-l_2} \atop \hat{l},\text{even}}^{l_1 + l_2} \card{\alpha_{\hat{l}}}\\
  &\leq \frac{\sqrt{2}}{4}\left(c_{2B-4} - c_{2}\right).
\end{aligned}
\end{equation}
The first inequality is derived by using the triangle inequality and the increasing property of the sum of equispaced samples of Legendre polynomials in Lemma \ref{ch5:lemm increasing}, i.e., $c_2 \leq c_4 \leq c_6 \leq \hdots \leq c_{2B-4}$. The last inequality also holds by the Cauchy-Schwarz inequality of
 $\alpha_{\hat{l}} = (2\hat{l} + 1) \begin{pmatrix}
   l_1 & l_2 & \hat{l} \\
   -n & n & 0 
  \end{pmatrix} \begin{pmatrix}
   l_1 & l_2 & \hat{l} \\
   -k & k & 0 
  \end{pmatrix}$.
 
 To be precise, for $-\text{min}\left(l_1,l_2\right) \leq k \neq n \leq\text{min}\left(l_1,l_2\right)$ one can write
\begin{equation*}
\begin{aligned}
\sum_{\hat{l}=\card{l_1-l_2} \atop \hat{l},\text{even}}^{l_1 + l_2}\card{(2\hat{l} + 1) \begin{pmatrix}
   l_1 & l_2 & \hat{l} \\
   -n & n & 0 
  \end{pmatrix} \begin{pmatrix}
   l_1 & l_2 & \hat{l} \\
   -k & k & 0 
  \end{pmatrix}}\leq \frac{1}{\sqrt{2}},
\end{aligned} 
\end{equation*}
where $\sum_{\hat{l}=\card{l_1-l_2} \atop \hat{l},\text{even}}^{l_1 + l_2} (2\hat{l} + 1) \begin{pmatrix}
   l_1 & l_2 & \hat{l} \\
   0 & 0 & 0 
  \end{pmatrix}^2= 1$ and $\sum_{\hat{l}=\card{l_1-l_2} \atop \hat{l},\text{even}}^{l_1 + l_2}(2\hat{l} + 1) \begin{pmatrix}
   l_1 & l_2 & \hat{l} \\
   -k & k & 0 
  \end{pmatrix}^2= \frac{1}{2}$ as a consequence of Lemma \ref{ch5:lemm Wigner_3j_odd_even}.
  
For $\tau = k = n \neq 0$ and $-\text{min}\left(l_1,l_2\right) \leq \tau  \leq\text{min}\left(l_1,l_2\right)$, let us write $ \beta_{\hat{l}} =(2\hat{l} + 1) 
  \begin{pmatrix}
   l_1&l_2& \hat{l} \\
   -\tau & \tau & 0 
  \end{pmatrix}^2$. The upper bound for product $\card{\sum_{p=1}^m \Wd{l_1}{\tau}{\tau}(\cos \theta_p) \Wd{l_2}{\tau}{\tau}(\cos \theta_p)}$ is given as
\begin{equation}\label{ch5:eq ub_keqneqtau}
\begin{aligned}
  \card{\sum_{\hat{l}=\card{l_1-l_2} \atop \hat{l},\text{even}}^{l_1 + l_2} c_{\hat{l}}\beta_{\hat{l}}} &\leq c_{2B-4}\sum_{\hat{l}=\card{l_1-l_2} \atop \hat{l},\text{even}}^{l_1 + l_2}\beta_{\hat{l}} = \frac{c_{2B-4}}{2}.
\end{aligned}
\end{equation}
The maximum of $c_{\hat{l}}$ is derived from the increasing property of the sum of equispaced samples of Legendre polynomials in Lemma \ref{ch5:lemm increasing}, i.e., $0 \leq c_2 \leq c_4 \leq c_6 \leq \hdots \leq c_{2B-4}$, and for $1\leq \card{k}=\card{n}=\card{\tau}\leq \text{min}(l_1,l_2)$, the sum of Wigner 3j symbols $\sum_{\hat{l}=\card{l_1-l_2} \atop \hat{l},\text{even}}^{l_1 + l_2}\beta_{\hat{l}} = \sum_{\hat{l}=\card{l_1-l_2} \atop \hat{l},\text{even}}^{l_1 + l_2}(2\hat{l} + 1) \begin{pmatrix}
   l_1 & l_2 & \hat{l} \\
   -\tau & \tau & 0 
  \end{pmatrix}^2 = \frac{1}{2}$ as the result from Lemma \ref{ch5:lemm Wigner_3j_odd_even}.
Therefore, from \eqref{ch5:eq ub_kneqn} and \eqref{ch5:eq ub_keqneqtau}, it is enough to consider the upper bound in \eqref{ch5:eq ub_keqneqtau}. We then need to show
\begin{equation}
\card{\sum_{p=1}^{m} P_{B-1}(\cos \theta_p)  P_{B-3}(\cos \theta_p)}- \frac{1}{2}\sum_{p=1}^{m} {P}_{2B-4}(\cos \theta_p)   \geq 0.
\end{equation}
We have the product of two Legendre polynomials $\card{\sum_{p=1}^m P_{B-3}(\cos \theta_p) P_{B-1}(\cos \theta_p)}$ as
$$\card{\sum_{\substack{\hat l=2\\\hat{l}, \text{even}}}^{2B-4} (2\hat{l} + 1) 
  \begin{pmatrix}
   B-3&B-1& \hat{l} \\
   0 & 0 & 0 
  \end{pmatrix}^2\sum_{p=1}^m P_{\hat{l}} (\cos \theta_p)}.$$
Suppose we have $\rho_{\hat{l}} = (2\hat{l} + 1) \begin{pmatrix}
   B-3 & B-1& \hat{l} \\
   0 & 0 & 0  
  \end{pmatrix}^2$ then we have to show $\sum_{\substack{\hat{l}=2\\\hat l,\text{even}}}^{2B-4} \rho_{\hat{l}}c_{\hat{l}}  -  \frac{c_{2B-4}}{2} \geq 0$. From Lemma \ref{ch5:lemm sum_equi}, for even $\hat{l}$ we have $c_{\hat l}=\sum_{p=1}^m P_{\hat{l}} (\cos \theta_p) = 1 + \frac{\hat l(\hat l+1)}{6(m-1)} + R_{\hat l}(m)$, where from Proposition \ref{ch5:prop residual} the interval of the residual is given as $-0.463 < R_{\hat{l}}(m) < 0 $. Finally, we can write 
\begin{equation*}
\begin{aligned}
\sum_{\substack{\hat{l}=2\\\hat l,\text{even}}}^{2B-4} \rho_{\hat{l}}c_{\hat{l}} &= \sum_{\substack{\hat{l}=2\\\hat l,\text{even}}}^{2B-4} \rho_{\hat{l}}\left(1 + \frac{\hat{l}(\hat{l}+1)}{6(m-1)} + R_{\hat{l}}(m)\right)
&\geq \sum_{\substack{\hat{l}=2\\\hat l,\text{even}}}^{2B-4} \rho_{\hat{l}}\left(0.537 + \frac{\hat{l}(\hat{l}+1)}{6(m-1)}\right).
\end{aligned}
\end{equation*}
Additionally, we can bound $\frac{c_{2B-4}}{2}$  as
\begin{equation*}
\begin{aligned}
\frac{c_{2B-4}}{2} &= \frac{1}{2}\left( 1+ \frac{(2B-4)(2B-3)}{6(m-1)} + R_{\hat{2B-4}}(m)\right)
&\leq \frac{1}{2}\left( 1+ \frac{(2B-4)(2B-3)}{6(m-1)}\right)
\end{aligned}
\end{equation*} 

Hence, the lower bound can be derived as $\sum_{\substack{\hat{l}=2\\\hat l,\text{even}}}^{2B-4} \rho_{\hat{l}}c_{\hat{l}}  -  \frac{c_{2B-4}}{2} \geq \hat{C} + \frac{B}{6(m-1)} > 0 $.
The constant $\hat{C}$ is obtained from $0.537 \sum_{\substack{\hat{l}=2\\\hat l\text{even}}}^{2B-4} \rho_{\hat{l}} - \frac{1}{2} = 0.537 - 0.5 > 0$, since $\sum_{\substack{\hat{l}=2\\\hat l,\text{even}}}^{2B-4} \rho_{\hat{l}} = 1$ as a consequence from \eqref{ch2:eq prop3j}. The last term $\frac{B}{6(m-1)}$ is deduced from Corollary \ref{ch5:cor max_B}, where $ \sum_{\hat{l}=2 \atop \hat{l}, \text{even}}^{2B -4} \rho_{\hat{l}}(\hat{l}^2 + \hat{l}) = 2 + 2(B-1)(B-2)$. Therefore, we have shown that the maximum is attained by the product of two Legendre polynomials for degrees $l_1 = B-3$ and $l_2 = B-1$.
\end{proof}


\section{Proofs of Lemmas and Proposition}
\label{sec:proof_of_lemmas}
\subsection{Proofs of Lemmas in Section \ref{sec3: sum_leg_express}}\label{proofsection3}
\begin{proof} [Proof of Lemma \ref{ch5:lemm sum_equi}] \label{prooflemma1}
The proof utilizes characterization of Legendre polynomials given in \eqref{App:eq Legendre_exp}. Note that the Legendre polynomial is an even function for even degrees. For $l = 0$, this summation is equal to $m$ regardless of how we sample the Legendre polynomials. Thus, the analysis is started for even degrees $l \geq 2$ and the samples are given by $\cos \theta_p = x_p=\frac{2p-m-1}{m-1}$ for $p \in [m]$. Since the Legendre polynomials are even and these sample points are symmetric on the interval $[-1,1]$, it is enough to only consider the positive samples. We first assume that $m$ is odd with $\widetilde{m}=(m-1)/2$,
\begin{equation}
\begin{aligned}
\sum_{p=1}^{{m}}P_l(x_p)  &= 2\sum_{p=1}^{\widetilde{m}}P_l(y_p)+ P_l(0),
\end{aligned}
\end{equation} 
where the samples in $(0,1]$ are given by $y_p = \frac{p}{\widetilde{m}}$ for $p \in  [\widetilde{m}]$. By using definition of Legendre polynomials in \eqref{App:eq Legendre_exp} and Bernoulli summation in \eqref{App:eq Faul_sum} we can write 
\begin{align*}
&\sum_{p=1}^{\widetilde{m}}P_l(y_p) = 2^l \sum_{k=0}^{l} \binom{l}{k} \binom{\frac{l+k-1}{2}}{l} \left( \frac{1}{k+1}\sum_{j=0}^k{B_j}\binom{k+1}{j} \widetilde{m}^{1-j}\right).
\end{align*}
Let expand the inner summation. For $k=0$ and from Bernoulli number $B_0 = 1$, Table \eqref{App:tab bernoulli_table}, we will have  
\begin{equation*}
\small
\begin{aligned}
\small
2^l \sum_{k=0}^{l} \binom{l}{k} \binom{\frac{l+k-1}{2}}{l} \frac{\widetilde{m}}{k+1} &=  2^l \sum_{k=0}^{l} \binom{l}{k} \binom{\frac{l+k-1}{2}}{l} \int_{0}^{1} \widetilde{m}x^{k} dx\\
&=\int_{0}^{1}P_l(x) \mrm dx = \frac 12 \int_{-1}^{1}P_l(x) \mrm dx=0
\end{aligned}
\end{equation*}
The last equality holds from the definition of Legendre polynomials in \eqref{App:eq Legendre_exp} and integration of Legendre polynomials on the interval $[-1,1]$, which is equal to $0$ for $l\neq 0$ as in \eqref{App:eq int_Leg}. It should be noted that, the Legendre polynomials are even functions for even degrees $l$.

For $k=1$ and use Bernoulli number $B_1 = \frac{1}{2}$, we have the following identity
\begin{equation}
\begin{aligned}
2^{l-1} \sum_{k=1}^{l} \binom{l}{k} \binom{\frac{l+k-1}{2}}{l}=\frac{P_l(1) - P_l(0)}{2},
\end{aligned}
\end{equation} 
where the equality is derived by using expansion of Legendre polynomials in \eqref{App:eq Legendre_exp} and substitute $x=1$. For even degree $l$ and $k=0$, we have $P_l(1) = 1$ and $P_l(0)$, respectively. Identity of $P_l(0)$ is derived in \eqref{App:eq even_deg_zeroLeg}. Hence, the second sum is equal to $\frac{1 - P_l(0)}{2}$. 

For $k=2$, the summation is then obtained by using Bernoulli number $B_2=\frac 16$ and the derivative of Legendre polynomials in \eqref{App:eq deriv_Leg}
\begin{equation*}
\small
\begin{aligned}
\sum_{k=2}^{l}2^l \binom{l}{k} \binom{\frac{l+k-1}{2}}{l} \frac{k}{12\widetilde{m}}  =\frac{1}{12\widetilde{m}} \left.\frac{\d P_{l}(x)}{\d x} \right\vert_{x=1}  = \frac{1}{12\widetilde{m}} \frac{l(l+1)}{2}. 
\end{aligned}
\end{equation*}
The final sum can be obtained as
\begin{equation*}
\begin{aligned}
\sum_{i=1}^{m}P_l(x_i) =2\sum_{i=0}^{\widetilde{m}}P_l(y_i) + P_l(0)
=1+\frac{l(l+1)}{6(m-1)}+ R_l(m),
\end{aligned}
\end{equation*} 
where we use the fact that $\widetilde{m} = \frac{m-1}{2}$. The remainder term of the summation, $R_l(m)$, can be expressed as
\begin{equation*}
\small
\begin{aligned}
   R_l(m) = 2^{l+1} \sum_{k=3}^{l} \binom{l}{k} \binom{\frac{l+k-1}{2}}{l}\left({\frac{1}{k+1}\sum_{j=3}^k{B_j}\binom{k+1}{j} \left(\frac{m-1}{2}\right)^{1-j}}\right).
\end{aligned}
\end{equation*}
 For $j \geq 3$, the Bernoulli number is non-zero only for even values of $j$, as discussed in Table \eqref{App:tab bernoulli_table}. By changing the summation index, we have $R_l(m)$ as
\begin{align*}
&\sum_{j=4 \atop j,\text{even} }^{l} B_j \frac{2^j}{j! (m-1)^{j-1}} 
    2^{l}\sum_{k=j}^l
    \binom{l}{k} \binom{\frac{l+k-1}{2}}{l} \frac{k!}{(k+1-j)!}.
\end{align*}
From \eqref{App:eq deriv_Leg}, \eqref{App:eq derivative_Leg_exp} and the relation between Bernoulli number and zeta function in  \eqref{App:eq bernoulli_zeta} we have  
\begin{equation*}
\begin{aligned}
\small 
\sum_{j=4 \atop j,\text{even} }^{l}   \frac{\zeta(j) (l + j -1)! 4 }{(j-1)!(l-j+1)! (2\pi)^j} \frac{(-1)^{\frac{j}{2} +1}}{(m-1)^{j-1}}= \sum_{j=4 \atop j,\text{even} }^{l}\frac{(-1)^{\frac{j}{2} + 1}S_l^j}{(m-1)^{j-1}}.\\
\end{aligned}
\end{equation*}

Now consider $j=2$ for the last equation with the value of $ \zeta(2) =  \frac{ \pi^2}{6}$, then the equation is equal to
\begin{equation}
\frac{\zeta(2) (l + 1)! 4  }{ (l- 1)! (2\pi)^2} \frac{1}{(m-1)} = \frac{ \zeta(2)(l+1)l}{\pi^2 (m-1)} = \frac{l(l+1)}{6(m-1)},
\end{equation}
which completes the claim. The same approach can be used to derive the result for even $m$.
\end{proof}

\subsection{Proof of Proposition in Section \ref{sec3: sum_leg_express}}
\begin{proof}[Proof of Proposition  \ref{ch5:prop residual}]
From Lemma \ref{ch5:lemm sum_equi}, the sum of equispaced samples Legendre polynomials can be written as
\begin{equation*}
\begin{aligned}
\sum_{p=1}^m P_l(\cos\theta_p) = 1 + \frac{l(l+1)}{6(m-1)} + R_l(m) ,
\end{aligned}
\end{equation*}
where $R_l(m) = \sum_{j= 4\atop j,\text{even} }^{l}\frac{(-1)^{\frac{j}{2} + 1}S_l^j}{(m-1)^{j-1}}$ with $S_l^j = \frac{\zeta(j) (l+j-1)! 4 }{(j-1)! (l-j+1)! (2\pi)^j} > 0$ . We want to show that the sequence of residual $\frac{S_l^j}{(m-1)^{j-1}}$ is decreasing for an increasing even $j \geq 4$. In other words, we want to show $\frac{S_l^j}{(m-1)^{j-1}} \geq \frac{S_l^{j+2}}{(m-1)^{j+1}}$ and write the ratio as
\begin{equation}
\begin{aligned} \label{ratio_k2}
\frac{S_l^{j+2} }{S_l^j (m-1)^{2}} &= \frac{\zeta(j+2) (l-j+1)(l-j)(l+j+1)(l+j)}{\zeta(j) (j+1)(j) ((m-1)2\pi)^2}\\
&<\frac{(l^2-j^2)\left((l^2-j^2) + 2l + 1\right) }{(j+1)(j) ((m-1)2\pi)^2}, 
\end{aligned}
\end{equation}
where upper bound is derived from the fact that the zeta function is decreasing for an increasing even $j$, that is $\zeta(j) > \zeta(j+2)$.
In order to show the decreasing property, it should be enough by showing that the ratio above is upper bounded by $1$, which is accomplished by considering $m - 1 \geq \frac{(l+1)^2}{10}$ for $4 \leq j_{,\text{even}} \leq  l$ and $l \geq  4$. 

Now, we want to show the lower bound of $R_l(m)$. For an even $\frac{l}{2}$, we will have $R_l(m)$ as $-\frac{ S_l^4}{(m-1)^{3}}  + \left(\frac{ S_l^6}{(m-1)^{5}}- \frac{S_l^8}{(m-1)^{7}} \right)+ \hdots + \left(\frac{S_l^{l-2}}{(m-1)^{l-3}} -\frac{S_l^l}{(m-1)^{l-1}} \right) \geq -\frac{ S_l^4}{(m-1)^{3}}.
$
The lower bound holds because $\frac{S_l^l}{(m-1)^{l-1}} $ is decreasing. Therefore, the subtraction in the bracket is positive. Using the geometric-arithmetic mean inequality  $\frac{(l+k)!}{(l-k)!} \leq (l+1)^{2k}$ \cite[eq.15]{lohofer_inequalities_1998}, we have $-\frac{ S_l^4}{(m-1)^{3}}  = -\frac{\zeta(4)(l+3)!4.10^3}{3!(l-3)!(2\pi)^4 (l+1)^6}  >  -0.463.$

For completeness, the same approach can be derived for an odd $\frac{l}{2}$. The difference is, instead of having two terms for $l-2$ and $l$ at the end, we only have $l$, which is positive because  $(-1)^{\frac{l}{2} + 1}$ is positive for an odd $\frac{l}{2}$. Thus, it does not change the lower bound. 

It is trivial to show that $\frac{S_l^{j }}{(m-1)^{j-1}}=\frac{\zeta(j) (l+j-1)! 4 }{(j-1)! (l-j+1)! (2\pi)^j(m-1)^{j-1}} $ converges to zero for sufficiently large samples $m$ compared to the degree $l$. Hence, giving the upper bound of the residual. 
\end{proof}

\begin{proof}[Proof of Lemma \ref{ch5:lemm increasing}]
For $l=2$, it is proven in Lemma \ref{ch5:lemm sum_equi} that  $\sum_{p=1}^m P_l(\cos \theta_p) = 1 +  \frac{l(l+1)}{6(m-1)} > 0$. Therefore, we have 
$\sum_{p=1}^m P_4(\cos \theta_p) - \sum_{p=1}^m P_2(\cos \theta_p)  = \frac{7}{3(m-1)} + R_4(m) =\frac{7}{3(m-1)} -\frac{7}{3(m-1)^3} \geq 0$,
 where $R_l(m)=\sum_{j = 4 \atop j, \text{even}}^l \frac{(-1)^{\frac{j}{2} + 1}S_l^j}{(m-1)^{j-1}}$ and $S_l^j = \frac{\zeta(j) (l+j-1)! 4 }{(j-1)! (l-j+1)! (2\pi)^j}$. Thus, we only need to prove for even $l  \geq 4$. For this reason, the increasing property of the summations, $\sum_{p=1}^m P_2(\cos \theta_p) < \sum_{p=1}^m P_4(\cos \theta_p) < \hdots < \sum_{p=1}^m P_{l, \text{even}}(\cos \theta_p)$, directly implies a non-negative property of the sum of equispaced samples Legendre polynomials. Since we compare $l+2$ to $l$, then the number of sample $m-1 \geq \frac{(l+3)^2}{10}$ should be considered, which means that for even $l \geq 4$, it is enough to show $\sum_{p=1}^m P_{l+2}(\cos \theta_p) - \sum_{p=1}^m P_l(\cos \theta_p)  \geq 0$. By using the result from Lemma \ref{ch5:lemm sum_equi}, the condition is equal to $\frac{2l+3}{3} \geq (m-1)\bigg(R_l(m) - R_{l+2}(m) \bigg)$.
 
Let observe the residual $(m-1)\bigg(R_l(m) - R_{l+2}(m) \bigg)$ and write as 
\begin{equation}\label{residual}
\begin{aligned}\sum_{j=4 \atop j, \text{even}}^{l} \frac{(-1)^{\frac{j}{2}} \left(S_{l+2}^j - S_{l}^j\right
)}{(m-1)^{j-2}} - \frac{(-1)^{\frac{l+2}{2} + 1}S_{l+2}^{l+2}}{(m-1)^{l}}.
\end{aligned}
\end{equation}  
First of all, we show that $\frac{\left(S_{l+2}^j - S_{l}^j\right
)}{(m-1)^{j-2}} > 0$ for a fix $m$ since
\begin{equation}\label{ratio_l2}
    \frac{S_{l+2}^j}{S_{l}^j} = \frac{(l+j+1)(l+j)}{(l-j+3)(l-j+2)} > 1.
\end{equation}
Second of all, the sequence $\frac{\left(S_{l+2}^j - S_{l}^j\right
)}{(m-1)^{j-2}}$ is decreasing if we increase even $j$, or we have
$
\frac{\big(S_{l+2}^j - S_{l}^j\big
)}{(m-1)^{j-2}}  > \frac{\big(S_{l+2}^{j+2} - S_{l}^{j+2}\big
)}{(m-1)^{j}}.$
Therefore, the expression is equivalent to showing that the ratio $\frac{\left(S_{l+2}^{j+2} - S_{l}^{j+2}\right
)}{\left(S_{l+2}^j - S_{l}^j\right) (m-1)^{2}}< 1$ holds for $m-1 \geq \frac{(l+3)^2}{10}$. Let expand this ratio by using \eqref{ratio_l2} as 
\begin{equation}\label{ratio2}
\begin{aligned}
\frac{\left(S_{l+2}^{j+2} - S_{l}^{j+2}\right
)}{\left(S_{l+2}^j - S_{l}^j\right) (m-1)^{2}} = \frac{S_l^{j+2} \big( \frac{(l+j+3)(l+j+2)}{(l-j+1)(l-j)}- 1\big)}{S_l^j \big(\frac{(l+j+1)(l+j)}{(l-j+3)(l-j+2)}- 1\big) (m-1)^{2}}  
\end{aligned}
\end{equation}
From \eqref{ratio_k2}, we know the ratio $\frac{S_l^{j+2}}{S_l^j (m-1)^2}$. Hence, \eqref{ratio2} can be expressed as
\begin{equation}\label{ch5:eq ratio_dua}
\begin{aligned}
 \frac{\zeta(j+2) (l+j+1)(l+j)}{\zeta(j)   (j) ((m-1)2\pi)^2}\frac{ (l-j+3)(l-j+2) }{  (j-1)} < 1. 
\end{aligned}
\end{equation}
The upper bound holds from the fact that $\zeta(j) > \zeta(j+2)$ and samples $m-1 \geq \frac{(l+3)^2}{10}$.  Thereby, it proves that for increasing even $j$, $\frac{\big(S_{l+2}^j - S_{l}^j\big
)}{(m-1)^{j-2}} $ is decreasing. Summarizing the results in \eqref{ratio_l2} and \eqref{ch5:eq ratio_dua} we have $ S^j_{l+2} > S^j_{l}$ and $\left(S^j_{l + 2}  - S^j_l\right)(m-1)^2 > \left(S^{j+2}_{l + 2}  - S^{j+2}_l\right)$. 

Therefore, for even $\frac{l}{2}$, we write 
\begin{equation*}
\begin{aligned}
\sum_{j=4 \atop j,\text{even}}^{l} \frac{(-1)^{\frac{j}{2}} \left(S_{l+2}^j - S_{l}^j\right
)}{(m-1)^{j-2}} &= \frac{\left(S_{l+2}^4 - S_{l}^4\right
)}{(m-1)^{2}} -\left( \frac{\left(S_{l+2}^6 - S_{l}^6\right
)}{(m-1)^{4}} - \frac{\left(S_{l+2}^8 - S_{l}^8\right
)}{(m-1)^{6}} \right) -\hdots  \\
 &-\left( \frac{\left(S_{l+2}^{l-2} - S_{l}^{l-2}\right
)}{(m-1)^{l-4}} - \frac{\left(S_{l+2}^l - S_{l}^l\right
)}{(m-1)^{l-2}} \right) \leq \frac{\left(S_{l+2}^4 - S_{l}^4\right
)}{(m-1)^{2}}.
\end{aligned}
\end{equation*} 
The upper bound follows because the subtractions in the brackets are positive. It should be noted that the upper bound also holds for an odd  $\frac{l}{2}$, where instead of having two terms for $l-2$ and $l$, we only have $l$ which is negative. Finally, collecting all these results, we can bound \eqref{residual} with $\frac{\big(S_{l+2}^4 - S_{l}^4\big
)}{(m-1)^{2}} + \frac{S_{l+2}^{l+2}}{(m-1)^{l}}$. We want to show this upper bound is smaller than $\frac{2l+3}{3}$.
    
Let first observe the upper bound of $ \frac{\big(S_{l+2}^4 - S_{l}^4\big
)}{(m-1)^{2}}$. For $m-1 \geq \frac{(l+3)^2}{10}$ and using the ratio in \eqref{ratio_l2}, we obtain 
\begin{equation*}
\begin{aligned}
\frac{\big(S_{l+2}^4 - S_{l}^4\big
)}{(m-1)^{2}} =\frac{S_{l}^4}{(m-1)^2}  \bigg( \frac{12l+18}{(l-1)(l-2)} \bigg)   &< \frac{\zeta(4)(24)10^2}{3!(2\pi)^4} \frac{(l+2)(l+1)l (2l+3)}{(l+3)^3} 
\\
& < C_1 (2l+3),
\end{aligned}
\end{equation*} 
where the constant $C_1 = 0.2778$ is derived from the fact that $\zeta(4) = \frac{\pi^4}{90}$ as in Table \eqref{App:tab zeta_table}. From the definition of $S_{l}^{j}$, the upper bound of $ \frac{S_{l+2}^{l+2}}{(m-1)^{l}}$ can be determined by $$\frac{S_{l+2}^{l+2}}{(m-1)^{l}}
\leq  \frac{\zeta(l+2)}{\pi^{2}}\frac{\Big(l+2 + \big(l+1\big)\Big)! }{\Big(l+2 - \big(l+1\big)\Big)!}\left(\frac{5}{\pi}\right)^l \frac{1}{(l+1)!(l+3)^{2l}}   < C_2 (2l+3)$$.
 
The inequality is derived from the geometric-arithmetic mean inequality for factorial $\frac{(l+k)!}{(l-k)!} \leq (l+1)^{2k}$ \cite[eq.15]{lohofer_inequalities_1998}. Additionally, by considering even degrees $l \geq 4$ we have $(l+3)^2 < (2l+3) (l+1)$ and $2^l < l!$. It is obvious that from decreasing property and the value of zeta function in Table \ref{App:tab zeta_table}, the maximum of the constant is achieved for $l = 4$, which gives $C_2 = 0.0422$.  

Combining the results we complete the proof $(m-1)\bigg(R_l(m) - R_{l+2}(m) \bigg) \leq (C_1 + C_2) (2l+3) < \frac{(2l+3)}{3}$. Thus, $\sum_{p=1}^m P_{l+2}(\cos \theta_p) - \sum_{p=1}^m P_l(\cos \theta_p)  \geq 0$.
\end{proof}
\subsection{Proofs of Lemmas in Section \ref{sec4: Ineq_3j_symbols}} \label{proofsection4}
\begin{proof}[Proof of Lemma \ref{ch5:lemm decreasingWigner3j}]
In \eqref{App:eq special_wigner3j}, we have the exact expression for Wigner 3j symbols $\begin{pmatrix}
   l_1 & l_2 & l_3 \\
   0 & 0 & 0
  \end{pmatrix}$, where $2L = l_1 + l_2 + l_3$. 
  
  The ratio between $\begin{pmatrix}
   l_1 & l_2 & l_3 \\
   0 & 0 & 0
\end{pmatrix}^2$ and $\begin{pmatrix}
   l_1+1 & l_2+1 & l_3 \\
   0 & 0 & 0
\end{pmatrix}^2$ can be written as  
\begin{equation}\small \label{ch5: eq wigner_decreasing_ratio}
\begin{aligned} 
&\frac{(2L+3)!(2L-2l_3)!}{(2L+1)!(2L+2-2l_3)!}\biggl(\frac{L!(L+1-l_3)!}{(L+1)!(L-l_3)!}\biggr)^2
&= \frac{(2L+3)(2L-2l_3+2)}{(2L+2)(2L-2l_3+1)}\geq 1.
\end{aligned}
\end{equation}
Therefore, it proves the first property $
\begin{pmatrix}
   l_1 & l_2 & l_3 \\
   0 & 0 & 0
\end{pmatrix}^2
\geq 
\begin{pmatrix}
   l_1+1 & l_2+1 & l_3 \\
   0 & 0 & 0
\end{pmatrix}^2
$. Similarly, for the second condition we can write the ratio between $\begin{pmatrix}
   l_1 & l_2 & l_3 \\
   0 & 0 & 0
\end{pmatrix}^2$ and $ 
\begin{pmatrix}
   l_1+2 & l_2  & l_3 \\
   0 & 0 & 0
\end{pmatrix}^2$ as $\frac{(L+\frac{3}{2})(L-l_3+1)}{(L+1)(L-l_3 +\frac{1}{2}) }\frac{(L-l_2+1)(L-l_1-\frac{1}{2})}{(L-l_2+\frac{1}{2})(L-l_1)}$. The last ratio can be written as $\frac{ L^2-Ll_1-Ll_2 +l_1l_2 + \frac{L}{2} +\frac{l_2}{2} - l_1 -\frac{1}{2}}{L^2-Ll_1 -Ll_2+l_1l_2+\frac{L}{2} - \frac{l_1}{2}}$.
To have the above ratio be greater than one, the following condition should be satisfied $
l_2 \geq l_1 +1  > l_1$.

This condition does not change the assumption in this chapter since we want to find the maximum for $l_1 \neq l_2$.
\end{proof} 
\begin{proof}[Proof of Lemma \ref{ch5:lemm Wigner_3j_odd_even}]
Let rewrite the product of Wigner d-functions as in \eqref{product_Wigner_small} for an arbitrary sample and $-\text{min}(l_1,l_2) \leq k \neq n \leq \text{min}(l_1,l_2)$
\begin{equation}
\begin{aligned}
 \card{\mathrm{d}_{l_1}^{k ,n }(\cos \theta) \mathrm{d}_{l_2}^{k ,n }(\cos \theta)} = \card{\sum_{\hat{l}=\card{l_1 - l_2}}^{l_1 + l_2} (2\hat{l} + 1) \begin{pmatrix}
   l_1 & l_2 & \hat{l} \\
   -k & k & 0 
  \end{pmatrix}  \begin{pmatrix}
   l_1 & l_2 & \hat{l} \\
   -n & n & 0 
  \end{pmatrix}P_{\hat{l}} (\cos \theta)}.
\end{aligned}
\end{equation}
Suppose we have $\theta = 0$. From the definition of Wigner d-functions in \eqref{Wigner_d}, we have the weight $\sin^{\xi} \left(\frac{0}{2} \right) = 0$ and therefore $\mathrm{d}_{l}^{k ,n }(\cos 0) = 0$. On the contrary, the Legendre polynomials become $P_{\hat l}(1) = 1$ for even and odd degrees $\hat l$, as discussed in \eqref{App:eq symm_assLeg}. Therefore, we have
\begin{equation} \label{ch5:eq odd_even}
\begin{aligned}
&\sum_{\hat{l},\text{even}} (2\hat{l}+1)
\begin{pmatrix}
   l_1 & l_2 & \hat{l}\\
   -k & k & 0
  \end{pmatrix}\begin{pmatrix}
   l_1 & l_2 & \hat{l}\\
   -n & n & 0
  \end{pmatrix}+ \sum_{\hat{l},\text{odd}} (2\hat{l}+1)\begin{pmatrix}
   l_1 & l_2 & \hat{l}\\
   -k & k & 0
  \end{pmatrix} 
\begin{pmatrix}
   l_1 & l_2 & \hat{l}\\
   -n & n & 0
  \end{pmatrix}=0,
\end{aligned}
\end{equation}
which is obvious because of the orthogonality of Wigner 3j symbols as discussed in \eqref{ch2:eq prop3j}. 

In contrast, if we consider $\theta = \pi$, we have the weight $\cos^{\lambda} \left(\frac{\pi}{2} \right) = 0$ and hence $\mathrm{d}_{l}^{k ,n }(\cos \pi) = 0$. From symmetry of the Legendre polynomials we have $P_{\hat l}(-1) = 1$ for even degrees ${\hat l}$ and $P_{\hat l}(-1) = -1$ for odd degrees $\hat{l}$. Hence, we obtain
\begin{equation} \label{ch5:eq odd_even2}
\begin{aligned}
&\sum_{\hat{l}, \text{even}} (2\hat{l}+1)
\begin{pmatrix}
   l_1 & l_2 & \hat{l}\\
   -k & k & 0
  \end{pmatrix} 
\begin{pmatrix}
   l_1 & l_2 & \hat{l}\\
   -n & n & 0
  \end{pmatrix}  = &\sum_{\hat{l}, \text{odd}} (2\hat{l}+1)
\begin{pmatrix}
   l_1 & l_2 & \hat{l}\\
   -k & k & 0
  \end{pmatrix} \begin{pmatrix}
   l_1 & l_2 & \hat{l}\\
   -n & n & 0
  \end{pmatrix}. 
\end{aligned}
\end{equation}
Using \eqref{ch5:eq odd_even} and \eqref{ch5:eq odd_even2}, we complete the proof
\begin{equation} 
\begin{aligned}
&\sum_{\hat{l}, \text{even}} (2\hat{l}+1)
\begin{pmatrix}
   l_1 & l_2 & \hat{l}\\
   -k & k & 0
  \end{pmatrix}  \begin{pmatrix}
   l_1 & l_2 & \hat{l}\\
   -n & n & 0
  \end{pmatrix}= &\sum_{\hat{l}, \text{odd}} (2\hat{l}+1)
\begin{pmatrix}
   l_1 & l_2 & \hat{l}\\
   -k & k & 0
  \end{pmatrix} \begin{pmatrix}
   l_1 & l_2 & \hat{l}\\
   -n & n & 0
  \end{pmatrix} = 0.
\end{aligned}
\end{equation}

For the case $k=n=\tau $ and $1 \leq  \card \tau \leq \text{min}(l_1,l_2)$, we can express the product of Wigner d-functions as
\begin{equation}
\begin{aligned}
 \card{\mathrm{d}_{l_1}^{\tau ,\tau }(\cos \theta) \mathrm{d}_{l_2}^{\tau ,\tau }(\cos \theta)} = \card{\sum_{\hat{l}=\card{l_1 - l_2}}^{l_1 + l_2} (2\hat{l} + 1) \begin{pmatrix}
   l_1 & l_2 & \hat{l} \\
   -\tau & \tau & 0 
  \end{pmatrix}^2 P_{\hat{l}} (\cos \theta)}.
\end{aligned}
\end{equation}
As discussed earlier, if we choose $\theta = \pi$, then we have
\begin{equation*} 
\begin{aligned}
\sum_{\hat{l}, \text{even}} (2\hat{l}+1)
\begin{pmatrix}
   l_1 & l_2 & \hat{l}\\
   -\tau & \tau & 0
  \end{pmatrix}^2 = \sum_{\hat{l}, \text{odd}} (2\hat{l}+1)
\begin{pmatrix}
   l_1 & l_2 & \hat{l}\\
   -\tau& \tau & 0
  \end{pmatrix}^2.
\end{aligned}
\end{equation*}
However, we know that the sum of squared Wigner 3j symbols for all $\hat{l}$ is $1$, due to the orthogonal property in \eqref{ch2:eq prop3j}, i.e., $\sum_{\hat{l}=\card{l_1 - l_2}}^{l_1 +l_2} (2\hat{l}+1)
\begin{pmatrix}
   l_1 & l_2 & \hat{l}\\
   -\tau & \tau & 0
  \end{pmatrix}^2 = 1$. Thus, we have sum for all even $\hat l$ or odd $\hat l$ as
$\sum_{\hat{l}, \text{even}} (2\hat{l}+1)
\begin{pmatrix}
   l_1 & l_2 & \hat{l}\\
   -\tau & \tau & 0
  \end{pmatrix}^2 = \sum_{\hat{l}, \text{odd}} (2\hat{l}+1)
\begin{pmatrix}
   l_1 & l_2 & \hat{l}\\
   -\tau & \tau & 0
  \end{pmatrix}^2 = \frac{1}{2} $.
\end{proof}

\begin{proof}[Proof of Lemma \ref{ch5:lemm Wigner_3j_integer}]
Let first define $ b^{(l_1 )}_{\hat{l}} = (2\hat{l}+1)
\begin{pmatrix}
   l_1 & l_1+2 & \hat{l}\\
   0 & 0 & 0
  \end{pmatrix}^2$. For increasing index $l_1$, we prove this lemma by using inductions.
  
For $l_1 = 0$, the result is $2+2(l_1 + 2)(l_1 +1) = 6$. The summation can be written as $\sum_{\hat{l}=\card{l_1-l_2}=2 \atop \hat{l},\text{even}}^{2}  b^{(l_1 = 0)}_{\hat{l}}(\hat{l}^2 + \hat{l}) = b^{(l_1=0)}_2 6 = 6$, 
which is true because of the orthogonal property of Wigner 3j symbols, as discussed in \eqref{ch2:eq prop3j}, $\sum_{\hat{l}=\card{l_1-l_2}=2 \atop \hat{l},\text{even}}^{2l_1 + 2}  b_{\hat{l}} = 1$.

For $l_1 = 1$, we have $2+2(l_1 + 2)(l_1 +1) = 14$. This summation becomes complicated since we have two different values $\hat{l}$ for the Wigner 3j symbols, $$\sum_{\hat{l}=\card{l_1-l_2}=2 \atop \hat{l},\text{even}}^{4}  b^{(l_1 = 1)}_{\hat{l}}(\hat{l}^2 + \hat{l}) = b^{(l_1=1)}_2 6 + b^{(l_1=1)}_4 20$$.
Since we have the ratio between two consecutive Wigner 3j symbols for fixed values of $\hat{l}$ as discussed in the proof of Lemma \ref{ch5:lemm decreasingWigner3j} in \eqref{ch5: eq wigner_decreasing_ratio}, the relation between two different values $l_1$ for Wigner 3j symbols, $b^{(l_1+1)}_{\hat{l}}=\begin{pmatrix}
   l_1 + 1 & l_1 + 3 &  \hat{l} \\
   0 & 0 & 0
\end{pmatrix}^2$ and $b^{(l_1)}_{\hat{l}} = \begin{pmatrix}
   l_1  & l_1+2 & \hat{l} \\
   0 & 0 & 0
\end{pmatrix}^2$ can be obtained as 
\begin{equation}\label{ratio}
\frac{b^{(l_1+1)}_{\hat{l}}}{b^{(l_1)}_{\hat{l}}}=\frac{(2l_1+4 + \hat{l})(2l_1 + 3 - \hat{l})}{(2l_1+5 + \hat{l})(2l_1+4 - \hat{l})} = C_{\hat{l}}^{(l_1)} 
\end{equation}

From this relation, we can write  $\sum_{\hat{l}=\card{l_1-l_2}=2 \atop \hat{l}, \text{even}}^{4}  b^{(l_1 = 1)}_{\hat{l}}(\hat{l}^2 + \hat{l}) = C_{2}^{(l_1=0)}b^{(l_1=0)}_2 6 + b^{(l_1=1)}_4 20 = \frac{3}{7} 6 + \bigg(1 -\frac{3}{7}\bigg) 20 = 14$,
which is correct for $l_1=1$. The equality is derived from the fact that $C_{2}^{(l_1=0)} = \frac{3}{7}$ and from the previous case, $l_1 = 0$, we have $b^{(l_1=0)}_2 = 1$. Additionally, from \eqref{ch2:eq prop3j}, the summation is  $b_2^{(l_1=1)} + b_4^{(l_1=1)} = 1$.

Let generalize the induction part and consider the assumption for $l_1 = k$
\begin{equation}\label{ch5:eq induction_part}
\begin{aligned}
\sum_{\hat{l}= 2 \atop \hat{l}, \text{even}}^{2k + 2}  b^{(l_1 = k)}_{\hat{l}}(\hat{l}^2 + \hat{l}) = 2 + 2(k+2)(k +1).
\end{aligned}
\end{equation}
Therefore, we can determine the induction part to observe, where we have $$\sum_{\hat{l}= 2 \atop \hat{l}, \text{even}}^{2k + 2}  b^{(l_1 + 1 =k+1)}_{\hat{l}}(\hat{l}^2 + \hat{l}) + b^{(l_1 + 1 =k+1)}_{2k+4}\left((2k+4)^2 + (2k +4)\right)$$ 
From \eqref{ratio} we can write above summation as
$$\sum_{\hat{l}= 2 \atop \hat{l}, \text{even}}^{2k + 2}  C^{(l_1=k)}_{\hat{l}}b^{(l_1=k)}_{\hat{l}}(\hat{l}^2 + \hat{l}) + \left(1  - \sum_{\hat{l}= 2 \atop \hat{l}, \text{even}}^{2k + 2} C^{(l_1=k)}_{\hat{l}}b^{(l_1=k)}_{\hat{l}}\right)\left((2k+4)^2 + (2k +4)\right),$$ where we have $C^{(l_1=k)}_{\hat{l}} = \frac{4k^2 + 14k + 12 - (\hat{l}^2 + \hat{l})}{4k^2 + 18k + 20-(\hat{l}^2 + \hat{l})}$. Thus, we obtain $$C^{(l_1=k)}_{\hat{l}}\left(\hat{l}^2 + \hat{l} - \left((2k+4)^2 + (2k +4)\right)\right) = - \left(4k^2 + 14k + 12 \right) + (\hat{l}^2 + \hat{l}).$$ 

From \eqref{ch5:eq induction_part} we have $\sum_{\hat{l}= 2 \atop \hat{l}, \text{even}}^{2k + 2}  b^{(l_1 = k)}_{\hat{l}}(\hat{l}^2 + \hat{l}) = 2 + 2(k+2)(k +1)$. Combining these results, we can write the summation as $\big(4k + 8\big) + 2 + 2(k+2)(k+1)= 2 + 2(k+3)(k+2)$ and complete the proof.
 
\end{proof}

\subsection{A remark on norms of the columns}

The focus of our derivations has been on the inner product of the columns without the normalization.  
In this section, we study more closely the $\ell_2$-norm of the columns and provide some indications of why these norms do not contribute to the main inequality.   
An approximation of the $\ell_2$-norm of equispaced samples Wigner d-functions is given in the following lemma. 
\begin{lemma}\label{ch5:lemm l2normWigner}
Suppose we have a vector of sampled Wigner d-functions $\Wd l {k}{n}(\cos \bs \theta)\defeq\left(\Wd l{k}{n}(\cos\theta_1),\dots,\Wd l{k}{n}(\cos\theta_m)\right)^T$ with sampling points as in \eqref{ch5:eq equispaced}. The $\ell_2$-norm of this vector can be approximated by 
\begin{equation*}
\norm{\Wd l {k}{n}(\cos \bs \theta)}_2^2 = \frac{m-1}{2l+1} + D_1(k,n) +  \mathcal{O}(m^{-1}),
\end{equation*}
where  
\begin{equation}
\small
D_1(k,n)=\frac{\card{\mathrm{d}^{k,n}_l(1)}^2 + \card{\mathrm{d}^{k,n}_l(-1)}^2}{2} =
\begin{cases} 
      \frac{1}{2} & \text{for $ {k}= {n} \neq 0$} \\
      1 & \text{for $k = n =0$} \\
      0 & \text{for $ {k} \neq  {n}$}.  
   \end{cases}
\end{equation}
\end{lemma}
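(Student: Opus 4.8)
The plan is to read the squared norm as a trapezoidal-rule approximation of the orthogonality integral \eqref{inner_prod_Wigner}, with the boundary correction of that rule being exactly $D_1(k,n)$. First I would set $f(x)\defeq\card{\Wd l{k}{n}(x)}^2$ and note that $\norm{\Wd l{k}{n}(\cos\bs\theta)}_2^2=\sum_{p=1}^m f(x_p)$, where $x_p=\cos\theta_p=\frac{2p-m-1}{m-1}$ are the equispaced nodes of \eqref{ch5:eq equispaced}; these include both endpoints $x_1=-1$, $x_m=1$ and have spacing $\Delta_x=\frac{2}{m-1}$. Substituting $x=\cos\theta$ into \eqref{Wigner_d} shows $f$ is an honest polynomial in $x$ of degree $2l$, since the half-integer powers of $\sin(\theta/2),\cos(\theta/2)$ square to $\paran{\frac{1-x}{2}}^{\xi}\paran{\frac{1+x}{2}}^{\lambda}$; hence $f$ is smooth with derivatives bounded uniformly in $m$ for fixed $(l,k,n)$.

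Next I would split the sum as
\[
\sum_{p=1}^m f(x_p)=\underbrace{\paran{\frac{f(x_1)+f(x_m)}{2}+\sum_{p=2}^{m-1}f(x_p)}}_{T}+\frac{f(x_1)+f(x_m)}{2},
\]
where the last term is exactly $D_1(k,n)$. The composite trapezoidal rule gives $\Delta_x\,T=\int_{-1}^1 f(x)\,\d x+E$ with $\card{E}=\mathcal{O}(\Delta_x^2)=\mathcal{O}(m^{-2})$, and the change of variables $x=\cos\theta$ in \eqref{inner_prod_Wigner} (at $l=l'$) gives $\int_{-1}^1 f(x)\,\d x=\frac{2}{2l+1}$. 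Dividing by $\Delta_x=\frac{2}{m-1}$ turns the integral into $\frac{m-1}{2l+1}$ and the error into $\mathcal{O}(m^{-1})$, so $T=\frac{m-1}{2l+1}+\mathcal{O}(m^{-1})$; adding $D_1(k,n)$ yields the claimed expansion.

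It then remains to evaluate $D_1(k,n)=\tfrac12\paran{\card{\Wd l{k}{n}(1)}^2+\card{\Wd l{k}{n}(-1)}^2}$ from \eqref{Wigner_d}. The weight $\sin^{\xi}(\theta/2)$ vanishes at $\theta=0$ exactly when $\xi=\card{k-n}>0$, and $\cos^{\lambda}(\theta/2)$ vanishes at $\theta=\pi$ exactly when $\lambda=\card{k+n}>0$; the surviving endpoint values reduce to the Jacobi values $P_\alpha^{(\xi,\lambda)}(1)=\binom{\alpha+\xi}{\alpha}$ and $P_\alpha^{(\xi,\lambda)}(-1)=(-1)^\alpha\binom{\alpha+\lambda}{\alpha}$. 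For $k=n=0$ one has $\xi=\lambda=0$ and $\Wd l00=P_l$ with $P_l(\pm1)^2=1$, so $D_1=1$; for $k=n=\tau\neq0$ one has $\xi=0$, $\lambda=2\card{\tau}$, $\gamma=1$, the value survives only at $\theta=0$ with squared modulus $1$, so $D_1=\tfrac12$; and for $k\neq n$ with $k\neq-n$ both $\xi,\lambda$ are positive, both endpoints vanish, and $D_1=0$.

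The main point of care is precisely this last case. The vanishing argument for the generic case $k\neq n$ silently assumes $\lambda=\card{k+n}>0$, which fails in the edge case $k=-n\neq0$: there $\lambda=0$ and $\xi=2\card{k}>0$, so the d-function vanishes at $\theta=0$ but not at $\theta=\pi$, where its squared modulus is $\gamma=1$, giving $D_1=\tfrac12$ rather than $0$. I would therefore restrict the last row to $k\neq\pm n$ and group $k=-n\neq0$ with $k=n\neq0$. In any case $D_1(k,n)\in\{0,\tfrac12,1\}$ is an $m$-independent constant, so it never affects the leading term $\frac{m-1}{2l+1}$, which is all that is needed to conclude that the column norms do not contribute to the main inequality.
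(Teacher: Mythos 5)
Your proposal follows essentially the same route as the paper's proof: both read $\sum_p \card{\Wd l{k}{n}(x_p)}^2$ as a composite trapezoidal-rule approximation of the orthogonality integral \eqref{inner_prod_Wigner}, identify the endpoint correction $\tfrac12\paran{f(1)+f(-1)}$ with $D_1(k,n)$, and then evaluate the endpoint values case by case from the weights $\sin^{\xi}(\theta/2)$, $\cos^{\lambda}(\theta/2)$ in \eqref{Wigner_d}.

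The one substantive difference is your final observation, and you are right to flag it: the paper's case analysis asserts $D_1(k,n)=0$ for all $k\neq n$, but its own vanishing argument only uses $\xi=\card{k-n}>0$ to kill the value at $\theta=0$ and needs $\lambda=\card{k+n}>0$ to kill the value at $\theta=\pi$. When $k=-n\neq 0$ one has $\lambda=0$, $\gamma=1$, and $\card{\Wd l{k}{-k}(-1)}^2=\paran{P_\alpha^{(\xi,0)}(-1)}^2=1$, so $D_1(k,-k)=\tfrac12$ (e.g.\ $\mathrm{d}_1^{1,-1}(\cos\theta)=\sin^2(\theta/2)$ equals $1$ at $\theta=\pi$). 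The paper's proof silently skips this subcase, so the displayed formula for $D_1$ is incorrect as stated; your fix --- restricting the zero case to $k\neq\pm n$ and grouping $k=-n\neq0$ with $k=n\neq0$ --- is the right correction. Since $D_1$ remains an $m$-independent constant in $\{0,\tfrac12,1\}$, this does not affect the leading term $\frac{m-1}{2l+1}$ or the downstream use of the lemma, but the statement itself should be amended.
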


\begin{proof}[Proof of Lemma \ref{ch5:lemm l2normWigner}]
Wigner d-functions are continuous and integrable on the interval $[-1,1]$. If we have sampling points $x_p = \cos \theta_p = \frac{2(p-1)}{m-1}-1$ for $p \in [m]$ with a vector of equispaced points $\mbf x = \cos \bs \theta \in \R^m$, then we have distance between the sampling points $\Delta_x = \frac{2}{m-1}$. Therefore, we can write this summation as a trapezoidal rule of the Riemannian sum.
\begin{equation*}
\small
\begin{aligned}
&\Delta_x\Bigg(\frac{\big|\mathrm{d}^{k,n}_l(1)\big|^2 + \big|\mathrm{d}^{k,n}_l(-1)\big|^2}{2}  + \sum_{p=2}^{m-1} \big|\mathrm{d}^{k,n}_l\big((p-1)\Delta_x -1\big)\big|^2\Bigg)\\&
= \Delta_x\Bigg(\norm{\mathrm{d}_l^{k,n}(\boldsymbol x)}_2^2 - D_1(k,n)\Bigg),\\ 
\end{aligned}
\end{equation*}
where $ D_1(k,n) = \frac{\big|\mathrm{d}^{k,n}_l(1)\big|^2 + \big|\mathrm{d}^{k,n}_l(-1)\big|^2}{2}$. It is well known that the integral of squared Wigner d-functions, as in \eqref{inner_prod_Wigner}, is given by
\begin{equation*}
\begin{aligned}
\int_{-1}^1 \big|\mathrm{d}^{k,n}_l(x)\big|^2  dx= \frac{2}{2l+1}.
\end{aligned}
\end{equation*}
Thereby, the approximation error can be written as  
\begin{equation*}
\begin{aligned}
&\left|\int_{-1}^1 \big|\mathrm{d}^{k,n}_l(x)\big|^2 dx - \Delta_x\left(\norm{\mathrm{d}_l^{k,n}(\boldsymbol x)}_2^2 - D_1(k,n)\right) \right|  \\&=\left|\frac{m-1}{2l+1} + D_1(k,n) - \norm{\mathrm{d}_l^{k,n}(\boldsymbol x)}_2^2\right| = \mcl{O}(m^{-1}), \\
\end{aligned}
\end{equation*}
where the $\mcl{O}(m^{-1})$ is well-known error approximation from the Riemannian sum since we have $\Delta_x = \frac{2}{m-1}$. Expressed differently, we can write the summation formula as 
\begin{equation*}
\norm{\Wd l {k}{n}(\cos \bs \theta)}_2^2 = \frac{m-1}{2l+1} + D_1(k,n) +  \mathcal{O}(m^{-1}).
\end{equation*}
It is important to have a closed-form expression of $D_1(k,n)$. From \eqref{Wigner_d}, we know that the Wigner d-functions are weighted Jacobi polynomials. For several conditions of $-l \leq k,n \leq l$, we can get different $\xi=\card{k-n}$, $\lambda=\card{k+n}$, and $\alpha=l-\big(\frac{\xi+\lambda}{2}\big)$ on Jacobi polynomials, which change the value of constant $D_1(k,n)$. Those conditions are given in the following:
\begin{itemize}
\item For $k=n=0$, we have $\lambda=\xi=0$ and the Wigner d-function becomes Legendre polynomial $\mathrm{d}_l^{0,0}(\cos \theta) = P_l(\cos \theta)$. Hence, $\big|\mathrm{d}_l^{0,0}(1)\big|^2 = \big|\mathrm{d}_l^{0,0}(-1)\big|^2 = 1$ because of the symmetry of Legendre polynomials in \eqref{App:eq symm_assLeg}. Therefore, we obtain 
\begin{equation*}
\norm{P_l(\cos \boldsymbol \theta)}_2^2 = 1 + \frac{m-1}{2l+1} + \mathcal{O}(m^{-1}).
\end{equation*}


\item For $ {k} \neq {n}$, we have $\big|\mathrm{d}_l^{k,n}(1)\big|^2 = \big|\mathrm{d}_l^{k,n}(-1)\big|^2 = D_1(k,n)=0$. This is because for $\theta = 0$ or $ \theta = \pi$, the weight of the Wigner d-functions are $\sin^{\xi} \bigg(\frac{\theta}{2}\bigg) = 0$ or $\cos^{\lambda}\bigg(\frac{\theta}{2}\bigg) = 0$. Hence, we have 
\begin{equation*}
\norm{\Wd{l}{k}{n}(\cos \bs \theta)}_2^2 = \frac{m-1}{2l+1} + \mathcal{O}(m^{-1}).
\end{equation*}
For a specific case $k=0$ or $n=0$, then $\xi=\lambda$ and the Wigner d-functions become associated Legendre polynomials $\mathrm{d}_l^{k,0}(\cos \theta) = C_l^k P_l^k(\cos \theta)$, where $C_l^k=\sqrt{\frac{(l-k)!}{(l+k)!}}$, as given in \eqref{legend_jacobi}. Therefore, the $\ell_2$-norm of associated Legendre polynomials is 
\begin{equation*}
\norm{C_{l}^kP^k_l(\cos \boldsymbol \theta)}_2^2 = \frac{m-1}{2l+1} + \mathcal{O}(m^{-1}).
\end{equation*}
\item If $ {k}= {n} = {\tau} \neq 0$, then $\xi=0$ and $\lambda = 2\card{\tau}$ or $\xi=2\card{\tau}$ and $\lambda = 0$. The Wigner d-functions are $\big|\mathrm{d}_l^{\tau,\tau}(1)\big|^2 = \big|P_{\alpha}^{0,\lambda}(1)\big|^2 = \binom{\alpha}{\alpha} = 1$ and $\big|\mathrm{d}_l^{\tau,\tau}(-1)\big|^2=0$ or vice versa, because the weight of the Wigner d-functions are $\cos^{\lambda} \bigg(\frac{\pi}{2}\bigg) = 0$ or $\sin^{\xi} \bigg(\frac{0}{2}\bigg) = 0$ and due to the property of the Jacobi polynomials in \eqref{App:eq symm_Jacobi}. Thus, we have $D_1(k,n)=\frac{1}{2}$. 
\end{itemize}
From those characterizations of $D_1(k,n)$, we complete the proof.
\end{proof}
The proof relies heavily on the definition of the Wigner d-functions in \eqref{Wigner_d} and Jacobi polynomials. 

There are some immediate corollaries from this Lemma. First, for equal orders $k=n$, the norm is decreasing in the degree $l$, which means that the ordering between inner products is preserved after the division. 
This norm does not have a strong ordering between different degrees and orders (for instance from $k=n=0$ to $k=n\neq 0$). However, 
the $\ell_2$-norm of Wigner d-functions and associated Legendre polynomials are approximately the same for a sufficiently large $m$. Note that for large enough $m$, the norm, after division by $m$, approaches the functional $L_2$-norm of Wigner d-functions given by $2/(2l+1)$.


\bibliographystyle{IEEEtran}
\bibliography{reference}
\newpage

\section{Supplementary Materials} \label{Supp_Material}
\subsection{Derivatives}
\subsubsection{Derivative of spherical harmonics}\label{Deriv}
In this article, we implement gradient descent based algorithms to optimize sampling points on the sphere. Hence, it is necessary to mention derivative of spherical harmonics with respect to $\theta$ and $\phi$ as follows.
\begin{equation*}
\small
\begin{aligned}
\frac{\partial \Y{l}{k} (\theta,\phi)}{\partial \theta} &= \frac{k \Y{l}{k} (\theta,\phi)}{\tan \theta}  + \sqrt{(l-k)(l+k+1)} \Y{l}{k+1}(\theta,\phi) e^{-i\phi}, \\
\frac{\partial \Y{l}{k} (\theta,\phi)}{\partial \phi} &= ik \Y{l}{k}(\theta,\phi),
\end{aligned}   
\end{equation*}
where the parameters are given in \eqref{sec2:eq gener_SH}.
Since we want to minimize the product of two spherical harmonics, the derivative rule of a product is applied.
\subsubsection{Derivative of Wigner D-functions}
Similar to the spherical harmonics case, the derivative of Wigner D-function with respect to the $\theta,\phi$ and $\chi$ are used, as stated in the following.
\begin{equation*}
\small
\begin{aligned}
\frac{\partial \D{l}{k}{n}(\theta, \phi,\chi)}{\partial \theta} &= \left( \frac{\lambda \sin^2 \theta}{2(1 + \cos \theta)} - \frac{\xi \sin^2 \theta}{2(1 - \cos \theta)}  \right) \D{l}{k}{n} (\theta,\phi, \chi)\\
&- \sin \theta\,\omega \sqrt{\gamma}\left(\frac{\xi + \lambda + \alpha + 1}{2}\right) \sin^{\xi}\left(\frac{\theta}{2} \right) \cos^{\lambda} \left( \frac{\theta}{2}\right) \\
&\times P_{\alpha-1}^{\xi+1,\lambda + 1}(\cos \theta) e^{-i(k\phi + n\chi)}\\
\frac{\partial \D{l}{k}{n}(\theta, \phi,\chi)}{\partial \phi} &= -ik \D{l}{k}{n}(\theta, \phi, \chi)\,,\\
\frac{\partial \D{l}{k}{n}(\theta, \phi,\chi)}{\partial \chi} &= -in\D{l}{k}{n}(\theta, \phi,\chi)
\end{aligned}
\end{equation*}
where the parameters are given in \eqref{def:WigD}. It should be noted that, we use chain rule for the derivation with respect to $\theta$, i.e., $\frac{d}{d\theta} f(\cos \theta) = -\sin \theta \frac{d}{d \cos \theta} f(\cos \theta) $. Additionally, $k$-derivative of Jacobi polynomial is given by $$\frac{d^k}{d \cos^k \theta} = \frac{\Gamma(\xi + \lambda + \alpha + 1 + k)}{2^k \Gamma(\xi  + \lambda + \alpha + 1)} P_{\alpha - k}
^{\xi + k, \lambda + k} (\cos \theta)$$

\subsection{Hypergeometric Polynomials} \label{App:sec2 hypergeo}
In this section, we review some important properties of Jacobi and associated Legendre polynomials that are used in this article.

Jacobi polynomials have a symmetric relation
\begin{equation} \label{App:eq symm_Jacobi}
P_{\alpha}^{\xi,\lambda} (-\cos \theta) = (-1)^{\alpha} P_{\alpha}^{\lambda,\xi} (\cos \theta).
\end{equation}
Furthermore, for $\cos \theta = 1$ and $\cos \theta = -1$, we have
\[ P_{\alpha}^{\xi,\lambda}(1) = \binom{\alpha + \xi}{\alpha}\quad \text{and} \quad P_{\alpha}^{\xi,\lambda}(-1) = (-1)^\alpha\binom{\alpha + \lambda}{\alpha}. \]
 
Similar to Jacobi polynomials, associated Legendre polynomials have symmetric properties  
\begin{equation}\label{App:eq symm_assLeg}
\begin{aligned}
&P_{l}^k(-x)=(-1)^{k+l}P_l^k(x)\\
&P_{l}^{-k}(x)=(-1)^{k}\frac{(l-k)!}{(l+k)!}P_l^k(x).
\end{aligned}
\end{equation}
For degree $l = 0$, Legendre polynomials have the property $P_{0}(x) = 1$. Therefore, from the orthogonal property of Legendre polynomials, we also have
\begin{equation}\label{App:eq int_Leg}
\int_{-1}^{1} P_{*{l}}(x) \mrm{d}x= \int_{-1}^{1} P_{0}(x) P_{l}(x) \mrm{d}x=2\delta_{l0}.
\end{equation}
For $x=1$, the Legendre polynomials $P_l(1)= 1$. Moreover, for $x=-1$, the property can be generated by a symmetric relation of the Legendre polynomials as in \eqref{App:eq symm_assLeg} by setting $k=0$. However, for associated Legendre polynomials, we have $P_l^k (1) = P_l^k*(-1) = 0$ from the following relation
\[
 P_{l}^{k}(x)=(-1)^{*k}(1-x^2)^{k/2}\frac{\mrm{d}^k}{\mrm{d} x^k}P_l(x).
\]
In this article, besides using the definition of the Rodrigues formula, the Legendre polynomials are explicitly given as 
\begin{equation} 
 P_l(x) = 2^l \sum_{h=0}^{l} \binom{l}{h} \binom{\frac{l+h-1}{2}}{l}x^h.
\label{App:eq Legendre_exp}
 \end{equation}
In this work, we use this representation to derive the closed-form sum of equispaced samples Legendre polynomials.  From \eqref{App:eq Legendre_exp}, we can also derive the conditions for $P_l(0)$ for even degree $l$. For $x = 0$, the expansion is non-zero only when $h = 0$
\begin{equation} \label{App:eq even_deg_zeroLeg}
\begin{aligned}
 P_l(0) &= 2^l \binom{l}{0} \binom{\frac{l-1}{2}}{l} = 2^l \frac{\Gamma\left( \frac{1}{2} + \frac{l}{2} \right)}{\Gamma\left(\frac{1}{2} - \frac{l}{2} \right)\enskip l!}= \frac{\left( -1\right)^{l/2}}{2^l}\binom{l}{\frac{l}{2}}, 
\end{aligned}
 \end{equation}
where we have gamma function $\Gamma(l+1) = l!$ and the relation
\begin{equation*}
\begin{aligned}
\Gamma\left(\frac{1}{2} + l \right) &= \frac{\left( 2l\right)! \sqrt{\pi}}{4^l l!}\quad \text{and} \quad
\Gamma\left( \frac{1}{2} - l \right) &= \frac{\left(-4 \right)^l l! \sqrt{\pi}}{\left( 2l\right)!}.
\end{aligned}
\end{equation*}
Another property that is used in this article is the $n-$th derivation of Legendre polynomials. This property can be obtained by using the Gauss hypergeometric function ${}_2F_1(a,b,c,d)$ in \cite[p.101]{bailey_generalized_1964} and in \cite[eq.22-23]{laurent_scaling_2017}. For $x=1$, the relation can be written as
\begin{equation}\label{App:eq deriv_Leg}
\left.\frac{\d^n P_{l}(x)}{\d x^n} \right\vert_{x=1} = \frac{\Gamma(1 +n + l)}{2^n \Gamma\left( n+ 1\right) \Gamma(1 - n + l)}. 
\end{equation}
 
Apart from the derivative of Gauss hypergeometric function, one can also derive from the explicit representation of Legendre polynomials \eqref{App:eq Legendre_exp},
\begin{equation*} 
\begin{aligned} 
\frac{\d^{n-1}}{\d x^{n-1}} P_l(x) &=2^l \sum_{h=n-1}^{l} \binom{l}{h} \binom{\frac{l+h-1}{2}}{l} \frac{h!}{(h-n+1)!} x^{h-n+1}.\\
\end{aligned} 
\end{equation*}

For $x=1$, we obtain the following formula
\begin{equation*} \label{ch:App eq2}
\small
\begin{aligned} 
\left.\frac{\d^{n-1}P_l(x)}{\d x^{n-1}} \right\vert_{x=1} 
&=2^l \binom{l}{n-1} \binom{\frac{l+n-2}{2}}{l} (n-1)! \\ &+ 2^l \sum_{h=n }^{l} \binom{l}{h} \binom{\frac{l+h-1}{2}}{l} \frac{h!}{(h-n+1)!}.
\end{aligned} 
\end{equation*}
For $x=1$ and even $l$,$n$, we have the first term, i.e., $h=n-1$ as
\begin{equation*} 
\begin{aligned} 
2^l \binom{l}{n-1} \binom{\frac{l+n-2}{2}}{l} (n-1)! &= \frac{2^l\Gamma \left(  \frac{ l + n}{2}\right)}{\Gamma \left( l-n+2 \right) \Gamma \left(\frac{  n - l }{2}\right)}.
\end{aligned} 
\end{equation*}
Since $ n \leq l$, this holds due to the property of factorial as in \cite[eq. 6.1.7]{abramowitz_handbook_1965}
\begin{equation*}
\lim_{z \rightarrow n} \frac{1}{\Gamma \left( -z \right)} = \frac{1}{\left( -n -1 \right)!} = 0 \qquad  (n=0,1,2,3,\hdots).
\end{equation*}
Thereby, for even $l$ and $n$, we have
\begin{equation} \label{App:eq derivative_Leg_exp}
\begin{aligned} 
 \left.\frac{\d^{n-1} P_l(x)}{\d x^{n-1}}\right\vert_{x=1}  &= 2^l \sum_{h=n }^{l} \binom{l}{h} \binom{\frac{l+h-1}{2}}{l} \frac{h!}{(h-n+1)!}.
\end{aligned} 
\end{equation}

\subsection{Summations}
Several summations have been used in this work, mainly to prove main result, for example the Bernoulli or Faulhaber summation and Abel partial summation. In this section, we will provide a concise summary of these summations.
\subsubsection{Bernoulli Summation}
Suppose we have the summation $\sum_{p=1}^m p^k$ for integer $k$. The expression of this summation is originally introduced by Faulhaber until $k = 17$ and later generalized by Bernoulli \cite{knuth_johann_1993, jacobi_c_2013}
\begin{equation}\label{App:eq Faul_sum}
\sum_{p=1}^m p^k = \frac{1}{k+1}\sum_{j=0}^k{B_j}\binom{k+1}{j} {m}^{k+1-j},
\end{equation}
where $B_j$ is the Bernoulli number. For clarity, we list some Bernoulli numbers as given in the following table \footnote{In this work, we are using a convention where $B_2 = 1/2$ and the summation \eqref{App:eq Faul_sum} also follows this convention.}
\begin{center} 
    \begin{tabular}{| l | l | l | l | l | l | l | l | l |}
    \hline
    Index ($j$) & 0 &1 &2 &3& 4& 5& 6 &7 \\ \hline
   $B_j$ & 1 & 1/2 & 1/6 & 0  & -1/30 & 0 & 1/42 & 0  \\
   \hline
    \end{tabular}.\label{App:tab bernoulli_table}
\end{center}
It can be seen that for odd $j \geq 3$, the Bernoulli number is equal to zero. This property is useful to prove Lemma \ref{ch5:lemm sum_equi}, Proposition \ref{ch5:prop residual}, and Lemma \ref{ch5:lemm increasing}. Another fascinating property of this number is its close relation with the Riemannian zeta function. We show that the relation for even $j \geq 2$ is 
\begin{equation}\label{App:eq bernoulli_zeta}
B_j = \frac{(-1)^{\frac{j}{2}+1} 2 j!}{(2\pi)^j} \zeta(j).
\end{equation}
 The zeta function also has an interesting property for some specific values, as given in the following table
\begin{center}
    \begin{tabular}{| l | l | l | l | l |}  
    \hline
    Index ($j$) & 2 & 4 & 6 & 8 \\ \hline
   $ \zeta(j)$ & $\pi^2/6$ & $\pi^4/90$ & $\pi^6/945$ & $\pi^8/9450$ \\
   \hline
    \end{tabular}.\label{App:tab zeta_table}
     
\end{center}
It is obvious that for sufficiently large $j$, the zeta function converges to $1$.

\subsubsection{Abel Partial Summation}
The Abel partial summation is defined by Niels Henrik Abel \cite{abel_untersuchungen_1895}, which has a similar property to the integration by parts. Suppose we have $n \in \mbb N$ with sequences $a_1,a_2,\hdots,a_n$ and also $b_1,b_2,\hdots,b_n \in \mbb R$ with $A_p = a_1 + a_2 + \hdots + a_p$, then we have
\begin{equation}\label{App: Abelsum}
\sum_{p=1}^n a_p b_p = A_n b_n + \sum_{p=1}^{n-1} A_p \left( b_p - b_{p+1}\right).
\end{equation}
 
\subsection{Properties of Wigner 3j Symbols}\label{App:Property_Wigner3j}
An explicit formula for the general Wigner 3j symbols can be seen in most angular momentum literature. In this paper, the explicit formula for Wigner 3j symbols is taken from \cite{kennedy_hilbert_2013} 
\begin{equation}
\begin{aligned} \label{App:eq general_wigner3j}
\begin{pmatrix}
   l_1 & l_2 & l_3 \\
   k_1 & k_2 & k_3
  \end{pmatrix}  &=  s^{l_1,l_2}_{l_3,k_1,k_2}\frac{(-1)^{l_1 - l_2 -k_3}}{\sqrt{2 l_3 + 1}} s^{l_1,l_2}_{l_3,k_1,k_2},
\end{aligned}
\end{equation}
where the value $s^{l_1,l_2}_{l_3,k_1,k_2}$ is given by
\begin{equation}
\small
\begin{aligned} 
& \sqrt{\frac{ (l_3 + l_1 - l_2)!(l_3 - l_1 + l_2)!(l_1 + l_2 - l_3)!(l_3 - k_3)!(l_3 + k_3)!}{ (l_1 + l_2 +l_3 + 1)!(l_1 - k_1)!(l_1 + k_1)!(l_2 - k_2)!(l_2 + k_2)!}} \\
&\times \sum_{t} \frac{(-1)^{t + l_2 + k_2}\sqrt{2l_3+1}(l_3 +l_2 +k_1 -t)!(l_1 -k_1 + t)!}{(l_3 - l_1 + l_2 -t)!(l_3 -k_3 - t)! (t)!(l_1 - l_2 + k_3 + t)!}.
\end{aligned}
\end{equation}
The sum over $t$ is chosen such that all variables inside the factorial are non-negative. There are several conditions that make the expression simpler, for example the condition $k_1 = k_2 =  k_3 = 0$ which is frequently used in this paper. In this case, Wigner 3j symbols are explicitly given by
\begin{equation}
\begin{aligned} \label{App:eq special_wigner3j}
\begin{pmatrix}
   l_1 & l_2 & l_3 \\
   0 & 0 & 0
  \end{pmatrix}  &=  (-1)^{L} \Biggl( \frac{(2L-2l_1)!(2L-2l_2)!(2L-2l_3)!}{(2L+1)!}\Biggr)^{\frac{1}{2}}\\&\Biggl( \frac{L!}{(L-l_1)!(L-l_2)!(L-l_3)!}\Biggr),
\end{aligned}
\end{equation}
where $2L=l_1+l_2+l_3$ is an even integer. 


\end{document}